\documentclass[12pt]{article}
\usepackage[letterpaper,margin=1in]{geometry}
\usepackage{graphicx} 
\usepackage{amsmath,amsfonts,bbm}
\usepackage{mathrsfs,bm}
\usepackage{booktabs,multirow} 
\usepackage[title]{appendix} 
\usepackage{longtable} 
\usepackage{todonotes}
\usepackage[normalem]{ulem} 


\DeclareMathOperator{\Cov}{Cov}
\DeclareMathOperator{\Lip}{Lip}
\DeclareMathOperator{\exponential}{exp}
\DeclareMathOperator{\stably}{-stably}

\usepackage{setspace}

\usepackage{scalerel,stackengine}
\stackMath
\newcommand\reallywidehat[1]{%
\savestack{\tmpbox}{\stretchto{%
  \scaleto{%
    \scalerel*[\widthof{\ensuremath{#1}}]{\kern-.6pt\bigwedge\kern-.6pt}%
    {\rule[-\textheight/2]{1ex}{\textheight}}
  }{\textheight}%
}{0.5ex}}%
\stackon[1pt]{#1}{\tmpbox}%
}

\newcommand{\bfD}{\mathbf{D}}
\newcommand{\bfP}{\mathbf{P}}

\newcommand{\bfV}{\mathbf{V}}
\newcommand{\bfW}{\mathbf{W}}

\newcommand{\bfX}{\mathbf{X}}
\newcommand{\bfY}{\mathbf{Y}}
\newcommand{\bfZ}{\mathbf{Z}}

\newcommand{\bfd}{\mathbf{d}}

\newcommand{\bfl}{\mathbf{l}}
\newcommand{\bfp}{\mathbf{p}}

\newcommand{\bfv}{\mathbf{v}}
\newcommand{\bfx}{\mathbf{x}}
\newcommand{\bfz}{\mathbf{z}}

\newcommand{\bdd}{\partial\hspace{-.5mm}}

\usepackage{natbib}
\bibliographystyle{ecta}

\usepackage{xcolor}


\usepackage{amsthm}

\newtheorem{thm}{Theorem}[section]
\newtheorem{defi}{Definition}[section]
\newtheorem{rmk}{Remark}[section]
\newtheorem{prop}{Proposition}[section]
\newtheorem{asm}{Assumption}[section]
\newtheorem{lemma}{Lemma}[section]
\newtheorem{cor}{Corollary}[section]
\newtheorem{example}{Example}[section]

\counterwithin{equation}{section}

\allowdisplaybreaks

\title{Heterogeneous Effects of Endogenous Treatments with\\ Interference and Spillovers in a Large Network\thanks{We benefitted from useful comments from the participants at the New York Camp Econometrics XIX, the 40th Canadian Econometrics Study Group Meeting, and the Southern Economic Association 95th Annual Meeting. All the remaining errors are ours. Sasaki gratefully acknowledges the generous financial support of Brian and Charlotte Grove.}}
\author{Lin Chen%
	\thanks{Department of Economics, Vanderbilt University, Email: lin.chen@vanderbilt.edu} \and 
    Yuya Sasaki%
    \thanks{Brian and Charlotte Grove Chair and Professor of Economics, Department of Economics, Vanderbilt University. Email: yuya.sasaki@vanderbilt.edu}
    }
\date{}
\begin{document}

\maketitle

\addcontentsline{toc}{section}{Abstract}
{
\centerline{\textbf{Abstract}}\setlength{\baselineskip}{6.8mm}
This paper studies the identification and estimation of heterogeneous effects of an endogenous treatment under interference and spillovers in a large single-network setting. We model endogenous treatment selection as an equilibrium outcome that explicitly accounts for spillovers and derive conditions guaranteeing the existence and uniqueness of this equilibrium. We then identify heterogeneous marginal exposure effects (MEEs), which may vary with both the treatment status of neighboring nodes and unobserved heterogeneity. We develop estimation strategies and establish their large-sample properties. Equipped with these tools, we analyze the heterogeneous effects of import competition on U.S.\ local labor markets in the presence of interference and spillovers.
We find negative MEEs, consistent with the existing literature. However, these effects are amplified by spillovers in the presence of treated neighbors and among localities that tend to select into lower levels of import competition. These additional empirical findings are novel and would not be credibly obtainable without the econometric framework proposed in this paper.

\bigskip\noindent 
\textbf{Keywords}: causal inference, heterogeneity, interference, spillover, network

\noindent \textbf{JEL classification}: C31, C35, C36
}

\newpage

\section{Introduction}
The recent literature has seen a growing number of studies on causal effects in large single-network settings \citep[e.g.,][]{Cai3:15,Hu3:22,Li2:22,Leung:22ECTA,leung2022graph,Hoshino2:23_JSAS,Vazquez-Bare:23,Gao2:25}. These studies examine how to learn the effects of exogenous and endogenous treatments in the presence of interference and spillovers. By contrast, much less is known about how to identify and estimate heterogeneous treatment effects in this environment, disentangling the heterogeneous effects of an individual’s own treatment from those of neighbors’ treatments.

Causal inference in this sophisticated environment requires (i) a model that characterizes individuals’ treatment decisions in equilibrium and (ii) a model that describes how treatments affect outcomes in the presence of spillovers. Both treatment decisions and outcomes generally depend on the behavior of others. For example, when assessing the effects of import competition on local labor markets treated as nodes, the labor market response in one region (node) may influence neighboring regions (nodes) through labor commuting/migration. Moreover, local labor market composition, which determines the extent of exposure to trade shocks, may itself depend on conditions in neighboring regions. Hence, a credible model must accommodate interference in both treatment assignment and outcome determination.

We begin by modeling joint treatment decisions under incomplete information and derive conditions for the uniqueness of the rational expectations equilibrium, building on the literature on discrete games on networks \citep{Lee3:14,Xu:18,Lin2:24}. Uniqueness of equilibrium enables point identification of the treatment decision parameters and, more importantly, the propensity score. Leveraging this property, we then identify the treatment effects by adapting identification strategies from the causal inference literature. Following the network causal inference literature \citep{Leung:22ECTA,Hoshino2:23_JSAS}, we take advantage of an exposure mapping, a low-dimensional statistic that summarizes interference effects. Under this assumption, we show that heterogeneous causal effects are identified without imposing additional functional form restrictions on the outcome model.

We estimate the model using the generalized method of moments (GMM). To study the asymptotic properties of GMM under network dependence, we adopt the $\psi$-dependence framework developed by \citet[][hereafter KMS]{Kojevnikov3:21}, which establishes pointwise laws of large numbers, central limit theorems, and variance estimators that are robust to network dependence. Due to the nonlinear nature of our model, however, stronger conditions are required for valid statistical inference. In particular, we rely on the uniform law of large numbers \citep[ULLN,][]{Sasaki:25} in the $\psi$-dependence framework of KMS.
Building on these, we show that the GMM estimator is consistent and converges stably to a mixed normal distribution. Moreover, we demonstrate that standard inference procedures remain valid even when the asymptotic variance-covariance matrix is random. Related results can be found in \citet{Andrews:05}, \citet{Kuersteiner2:13}, and \citet{Kuersteiner2:20ECMA}. Because our asymptotic analysis is conducted conditionally on the common shocks that generate the network, the resulting estimator enjoys desirable asymptotic properties regardless of the underlying network formation model.

We illustrate our method using the \cite{Autor3:13} dataset on Chinese import competition and manufacturing employment shares across U.S. commuting zones (CZs) as nodes. Treating each CZ as an independent open economy, their two-stage least squares (2SLS) analysis documents a large negative effect of import competition on manufacturing employment. However, because CZs are geographically linked, interference may arise both in the extent of exposure to trade shocks and in labor market responses.
We model sectoral composition choices as outcomes of a rational expectations equilibrium on a spatial network. The results indicate strategic substitution in sectoral choices, with regions tilting away from sectors facing greater competition from Chinese imports. We then examine the direct MEEs of import competition on local labor markets. We find substantial heterogeneity across CZs depending on network exposure: the exposure effects differ markedly between CZs with and without treated neighbors, as well as between CZs experiencing above- versus below-median import competition. These findings highlight the importance of accounting for both network-induced interference and heterogeneity when evaluating treatment effects.

This paper contributes to the literature on network causal inference that employs the $\psi$-dependence framework for statistical inference \citep[e.g.,][]{Leung:22ECTA,Hoshino2:23_JSAS,Gao2:25}. Our approach differs from these studies in two respects. First, we focus on marginal responses that allow exposure effects to vary with unobserved heterogeneity. Second, we adopt a conditional $\psi$-dependence framework by conditioning on the common shocks that jointly generate the covariates and the network structure. As a result, the limiting distribution of our GMM estimator is mixed normal, with a random variance-covariance matrix that reflects the randomness induced by these common shocks.

This paper is also related to the literature on endogenous treatment with interference confined within groups or markets: \citet{Hoshino2:23JoE} study how the delinquency of an opposite-gender best friend affects a student’s GPA; \citet{Balat2:23} examine how competition among airlines influences air pollution across U.S.\ cities. In these settings, interference is localized, which allows the authors to identify outcomes under all possible treatment configurations and to conduct the corresponding comparisons. In contrast, we consider a setting in which all individuals are connected through a large single network.

The remainder of the paper is organized as follows. Section~\ref{sec:setup} presents the model setup. Section~\ref{sec:identification} establishes identification. Section~\ref{sec:estimation} proposes the estimators, and Section~\ref{sec:theory} analyzes their large-sample properties. Section~\ref{sec:simulations} evaluates the finite-sample performance of the proposed methods. Section~\ref{sec:application} presents an empirical application, and Section~\ref{sec:conclusion} concludes. All mathematical details are relegated to the appendix.

\section{Setup}\label{sec:setup}
Our observed data consist of $n$ observations $\{(Y_i,X_i,D_i,Z_i)\}_{i=1}^n$, where $Y_i$ denotes the outcome, $X_i$ the vector of covariates, $D_i$ the binary treatment status, and $Z_i$ the vector of exogenous variables. We allow $Z_i$ to include $X_i$ as a subvector, along with additional excluded exogenous variables that serve as instruments. In addition to these variables, we observe the network structure linking the $n$ individuals as nodes, represented by an adjacency matrix $A_n$, where $A_{ij}=1$ indicates that nodes $i$ and $j$ are connected and $A_{ij}=0$ otherwise. Wherever there is little risk of confusion, we suppress the subscript $n$ and simply write the adjacency matrix as $A$. We rule out self-spillover effects by assuming $A_{ii}=0$ for all $i$.

\subsection{Outcome Production}
In a network setting, it is natural to allow an individual’s potential outcomes to depend on the treatment status of other individuals. Let $\bfD=(D_i)_{i=1}^n$ denote the collection of treatment assignments, and let $Y_i(\bfD)$ be the potential outcome for node $i$ under the joint treatment status $\bfD$. Given the treatment status $D_i$ of node $i$ and the collection $\bfD_{-i}=(D_1,\ldots,D_{i-1},D_{i+1},\ldots,D_n)$ of treatment assignments excluding $D_i$, we model the outcome-generating process as
\begin{equation}\label{eq:potentialoutcome}
\begin{aligned}
    Y_i 
    &= \sum_{\bfd \in \{0,1\}^n} \mathbbm{1}\{\bfD=\bfd\}\, Y_i(\bfd), 
    \qquad \text{where} \\
    Y_i(\bfD) 
    &= \mu_Y\!\left(X_i, D_i, \bfD_{-i}, \epsilon_i^{(D_i,\bfD_{-i})}\right),
\end{aligned}
\end{equation}
and $\epsilon_i^{(D_i,\bfD_{-i})}$ denotes the productivity shock for node $i$ under the treatment configuration $(D_i,\bfD_{-i})$. This shock is unobserved by the other nodes, as well as by the econometrician.

\subsection{Treatment Choice}\label{sec:treatment_choice}
The $n$ nodes make simultaneous treatment decisions in an environment of incomplete information. The type of node $i$ is defined by the pair $(Z_i,\nu_i)$, where $Z_i$ is publicly observed by all nodes, while $\nu_i$ is private information known only to node $i$. The network structure $A$ is also publicly observed. Let $n_i=\sum_{j} A_{ij}$ denote the number of neighbors of node $i$.

Given the treatment decisions of other nodes $\bfD_{-i}$, node $i$ receives the payoff
\begin{equation}\label{equ:payoff}
    u_i(Z_i,A,\bfD_{-i},\nu_i)
    = Z_i'\beta_D 
    + \lambda \cdot n_i^{-1} \sum_{j} A_{ij} D_j 
    - \nu_i,
\end{equation}
if it chooses to take the treatment ($D_i=1$), and zero otherwise ($D_i=0$). Here, $\beta_D$ denotes the vector of coefficients associated with $Z_i$, and $\lambda$ captures the spillover effect from treated neighbors. Let $\bm{\beta}_1 = (\beta_D',\lambda)'$ for short hand.

Because node $i$ does not observe other nodes’ private information $\nu_{-i}$, it must base its decision on beliefs about others’ treatment choices. Accordingly, the best response of node $i$ is given by
\begin{equation}
    D_i^*
    = \mathbbm{1}\!\left\{
        Z_i'\beta_D 
        + \lambda \cdot n_i^{-1} \sum_{j} A_{ij} \Pr(D_j^* = 1 \mid \bfZ, A) 
        - \nu_i 
        \ge 0
    \right\}.
\end{equation}

We introduce the shorthand notation $P_i(\bfZ,A) = \Pr(D_i^* = 1 \mid \bfZ, A)$ for the propensity score. The rational-expectations (Bayesian Nash) equilibrium choice probabilities are given by the vector $\bfP(\bfZ,A) = \big(P_1(\bfZ,A), \ldots, P_n(\bfZ,A)\big)$, which solves the following system of equations:
\begin{equation}\label{eq:EQUcp}
\begin{aligned}
    P_1(\bfZ,A)
    &= F_{\nu \mid \bfZ}\!\left(
        Z_1'\beta_D 
        + \lambda \cdot n_1^{-1} \sum_{j} A_{1j} P_j(\bfZ,A)
    \right),\\
    &\vdots\\
    P_n(\bfZ,A)
    &= F_{\nu \mid \bfZ}\!\left(
        Z_n'\beta_D 
        + \lambda \cdot n_n^{-1} \sum_{j} A_{nj} P_j(\bfZ,A)
    \right).
\end{aligned}
\end{equation}

The existence of a solution to \eqref{eq:EQUcp} follows from Brouwer’s fixed-point theorem. Moreover, uniqueness of the equilibrium can be established via a contraction mapping argument. To formally establish these properties, we impose the following assumption.

\begin{asm}[Existence and Uniqueness of the Equilibrium]\label{asm:unique_equ} 
\begin{itemize}
    \item[]
    \item[(i)] Conditional on $\bfX$, $\{\nu_i\}_{i=1}^n$ are $i.i.d$ random variables drawn from the sandard logistic distribution whose distribution function is given by $F(t) = 1/( 1+ exp(-t))$.
    \item[(ii)] $\bfZ$ is independent of $\{\nu_i\}_{i=1}^n$ given $\bfX$.
    \item[(iii)] Moderate peer effects: $|\lambda| < 4$.
\end{itemize}
\end{asm}

Assumption~\ref{asm:unique_equ}(i) is commonly imposed in the literature on games with incomplete information \citep[e.g.,][to list but a few]{Brock2:02,Bajari4:10,Xu:18,Lin2:24}. Assumption~\ref{asm:unique_equ}(ii) extends the standard instrument exogeneity condition to a network setting. Under this assumption, we can write $F_{\nu \mid \bfX}$ in place of $F_{\nu \mid \bfZ}$. 
Regarding Assumption~\ref{asm:unique_equ}(iii), with an abuse of notation, we rewrite the system of equations in \eqref{eq:EQUcp} as $\bfP = \varphi(\bfP)$. \citet{Lee3:14} show that when the selection peer effect $\lambda$ is moderate, the norm of the Jacobian matrix of $\varphi$ is strictly less than one. This contraction property guarantees the existence and uniqueness of the equilibrium. We formally summarize this discussion in the following lemma.

\begin{lemma}[Existence and Uniqueness of the Equilibrium]\label{lemma:UniBNE}
    If Assumption \ref{asm:unique_equ} holds, then there exists a unique solution to the system \eqref{eq:EQUcp}.
\end{lemma}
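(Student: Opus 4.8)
The plan is to cast the problem as an application of the Banach fixed-point theorem to the self-map $\varphi:[0,1]^n\to[0,1]^n$ whose $i$-th coordinate is
\begin{equation*}
\varphi_i(\bfp)=F\!\left(Z_i'\beta_D+\lambda\, n_i^{-1}\sum_{j}A_{ij}p_j\right),
\end{equation*}
with the convention that the spillover term is read as $0$ whenever $n_i=0$ (an isolated node just chooses $D_i^*=\mathbbm 1\{Z_i'\beta_D-\nu_i\ge 0\}$). Under Assumption~\ref{asm:unique_equ}(i)--(ii), the conditional law of each $\nu_i$ is the standard logistic, so the $F$ appearing in \eqref{eq:EQUcp} is exactly $F(t)=1/(1+\exp(-t))$, and solving \eqref{eq:EQUcp} is the same as finding a fixed point of $\varphi$. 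Since $[0,1]^n$ endowed with the sup-norm $\|\cdot\|_\infty$ is a complete metric space, it suffices to verify that (a) $\varphi$ maps the cube into itself and (b) $\varphi$ is a contraction.

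Step (a) is immediate: $F$ takes values in $(0,1)$, hence $\varphi_i(\bfp)\in(0,1)\subset[0,1]$ for every $\bfp$, so $\varphi([0,1]^n)\subseteq[0,1]^n$; together with continuity of $F$ this already yields existence of an equilibrium via Brouwer's theorem, which the contraction step will in any case subsume. For step (b), I would use that the logistic density obeys $F'(t)=F(t)\bigl(1-F(t)\bigr)\le \tfrac14$ for all $t$, i.e.\ $F$ is $\tfrac14$-Lipschitz, together with the row-normalization identity $n_i^{-1}\sum_j A_{ij}=1$ (valid whenever $n_i\ge 1$). Then for any $\bfp,\bfq\in[0,1]^n$ and any $i$ with $n_i\ge 1$,
\begin{equation*}
|\varphi_i(\bfp)-\varphi_i(\bfq)|\le\tfrac14\,|\lambda|\,n_i^{-1}\sum_{j}A_{ij}\,|p_j-q_j|\le\frac{|\lambda|}{4}\,\|\bfp-\bfq\|_\infty,
\end{equation*}
while the left-hand side is $0$ when $n_i=0$. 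Taking the maximum over $i$ gives $\|\varphi(\bfp)-\varphi(\bfq)\|_\infty\le(|\lambda|/4)\,\|\bfp-\bfq\|_\infty$, and Assumption~\ref{asm:unique_equ}(iii), $|\lambda|<4$, makes the modulus strictly below one. The Banach fixed-point theorem then delivers a unique $\bfP\in[0,1]^n$ with $\bfP=\varphi(\bfP)$, which is precisely the unique solution of \eqref{eq:EQUcp}.

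I do not expect a substantive obstacle here; this is essentially the argument already sketched after Assumption~\ref{asm:unique_equ}, combining \citet{Lee3:14}'s observation with the logistic density bound. The only points needing mild care are fixing the $n_i=0$ convention so $\varphi$ is well defined everywhere, ensuring the contraction is stated on the complete space $[0,1]^n$ (rather than on $\mathbb R^n$, though the image always lands back in the cube so either works), and invoking the scalar Lipschitz bound on $F$ directly instead of routing through a multivariate mean value theorem, which sidesteps differentiability bookkeeping on the boundary of the cube. If one prefers the Jacobian formulation, the same bound reappears as $\sum_j|\partial\varphi_i/\partial p_j|\le \tfrac14|\lambda|\,n_i^{-1}\sum_j A_{ij}=|\lambda|/4$, i.e.\ the induced $\ell^\infty$ operator norm of $D\varphi$ is at most $|\lambda|/4<1$ uniformly on $[0,1]^n$.
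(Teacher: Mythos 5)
Your proof is correct and follows essentially the same route as the paper: the paper also casts \eqref{eq:EQUcp} as $\bfP=\varphi(\bfP)$ and invokes a contraction-mapping argument based on the logistic bound $F'\le\tfrac14$ combined with $|\lambda|<4$ (citing \citet{Lee3:14} for the sub-unit Jacobian norm), whereas you simply spell out the Lipschitz/sup-norm details and the $n_i=0$ convention explicitly. No gap; your write-up is a fully worked-out version of the paper's sketch.
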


\noindent
See Appendix \ref{sec:proof:identifiadtion} for proof.

\section{Identification}\label{sec:identification}

Thanks to the uniqueness property established in Lemma~\ref{lemma:UniBNE}, we can directly identify the equilibrium choice probability as
$
P_i(\bfZ,A) = \Pr(D_i = 1 \mid \bfZ, A).
$
We now proceed to identify the model parameters. We refer to the treatment choice equation as the \emph{first stage}, and the outcome-generating process as the \emph{second stage}.

\subsection{Identification of the First-Stage Parameters}\label{sec:identification:first}
Given the uniqueness property already established, the first stage reduces to a standard logistic regression in the network framework, similar to that studied in the literature \citep[e.g.,][]{Xu:18}. In light of Assumption~\ref{asm:unique_equ}(i), the equilibrium propensity score can be written as
\begin{equation}\label{equ:EQCp}
    P_i(\bfZ,A)
    = \frac{
        \exp\!\left(
            Z_i'\beta_D 
            + \lambda \cdot n_i^{-1} \sum_{j} A_{ij} P_j(\bfZ,A)
        \right)
    }{
        1 + \exp\!\left(
            Z_i'\beta_D 
            + \lambda \cdot n_i^{-1} \sum_{j} A_{ij} P_j(\bfZ,A)
        \right)
    }.
\end{equation}
Applying the logistic inversion yields
\begin{equation}\label{equ:Inversion}
    \log\!\big(P_i(\bfZ,A)\big)
    - \log\!\big(1 - P_i(\bfZ,A)\big)
    = Z_i'\beta_D 
    + \lambda \cdot n_i^{-1} \sum_{j} A_{ij} P_j(\bfZ,A).
\end{equation}
With the shorthand notation 
$
\phi_i(\bfZ,A)
= \left(
    Z_i',
    \; n_i^{-1} \sum_{j} A_{ij} P_j(\bfZ,A)
\right)',
$
consider the following matrix invertibility assumption.

\begin{asm}[First-Stage Identification]\label{asm:rank_cond}
$E[\phi_i(\bfZ,A)\phi_i(\bfZ,A)'|A]$ is invertible.
\end{asm}
This assumption is the standard rank condition and corresponds to Assumption~4 of \citet{Xu:18} and Assumption~7 of \citet{Lin2:24}, among others. First, the uniqueness of equilibrium implies that the quantity $P_i(\bfZ,A)$ is identified. Since $\phi_i(\bfZ,A)$ depends only on the first-stage controls $Z_i$ and the propensity scores $\{P_j(\bfZ,A)\}_{j=1}^n$, this object is also identified. Moreover, the right-hand side of \eqref{equ:Inversion} is linear in the parameter vector $\bm{\beta}_1=(\beta_D',\lambda)'$. Consequently, the invertibility of the matrix implies identification of the first-stage structural parameters, as formally stated in the following theorem.

\begin{thm}\label{thm:1stage_iden}
    If Assumptions \ref{asm:unique_equ} and \ref{asm:rank_cond} hold, then the first-stage parameter vector $\bm{\beta}_1=(\beta_D',\lambda)'$ is identified.
\end{thm}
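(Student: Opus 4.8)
The plan is to combine the uniqueness of the equilibrium (Lemma~\ref{lemma:UniBNE}) with the exact log-odds representation in \eqref{equ:Inversion}, and then invoke the rank condition in Assumption~\ref{asm:rank_cond}. First I would argue that the equilibrium propensity scores are themselves identified. Since, by Assumption~\ref{asm:unique_equ} and Lemma~\ref{lemma:UniBNE}, the system \eqref{eq:EQUcp} has a \emph{unique} solution $\bfP(\bfZ,A)$, the observed treatments are generated with $\Pr(D_i=1\mid\bfZ,A)=P_i(\bfZ,A)$; the left-hand side is a feature of the distribution of the observables $(\bfD,\bfZ,A)$ and is therefore identified, which pins down each $P_i(\bfZ,A)$. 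Uniqueness is what makes this step legitimate: with multiple equilibria, $\Pr(D_i=1\mid\bfZ,A)$ could be a mixture across equilibria and would not recover a single $P_i$. Because $F$ in Assumption~\ref{asm:unique_equ}(i) is the logistic CDF evaluated at a finite linear combination of bounded quantities, $P_i(\bfZ,A)\in(0,1)$, so the log-odds transform on the left of \eqref{equ:Inversion} is well defined.

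Next I would pass to the linear-in-parameters form. Writing $L_i(\bfZ,A):=\log P_i(\bfZ,A)-\log\!\big(1-P_i(\bfZ,A)\big)$, equation \eqref{equ:Inversion} states that $L_i(\bfZ,A)=\phi_i(\bfZ,A)'\bm{\beta}_1$ holds for every realization of $(\bfZ,A)$, not merely in expectation. Since $\phi_i(\bfZ,A)=\big(Z_i',\,n_i^{-1}\sum_j A_{ij}P_j(\bfZ,A)\big)'$ is a measurable function of $Z_i$, $A$, and the already-identified propensity scores $\{P_j(\bfZ,A)\}_j$, both $L_i$ and $\phi_i$ are identified.

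Finally I would use Assumption~\ref{asm:rank_cond} to pin down $\bm{\beta}_1$ uniquely. If some $\tilde{\bm{\beta}}_1$ is observationally equivalent, it must also satisfy $L_i(\bfZ,A)=\phi_i(\bfZ,A)'\tilde{\bm{\beta}}_1$; subtracting gives $\phi_i(\bfZ,A)'(\bm{\beta}_1-\tilde{\bm{\beta}}_1)=0$ almost surely, and squaring and taking $E[\,\cdot\mid A]$ yields $(\bm{\beta}_1-\tilde{\bm{\beta}}_1)'E[\phi_i(\bfZ,A)\phi_i(\bfZ,A)'\mid A](\bm{\beta}_1-\tilde{\bm{\beta}}_1)=0$, so invertibility (hence positive definiteness) of the Gram matrix forces $\bm{\beta}_1=\tilde{\bm{\beta}}_1$. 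Equivalently, one obtains the constructive formula $\bm{\beta}_1=\big(E[\phi_i\phi_i'\mid A]\big)^{-1}E[\phi_i L_i\mid A]$, a functional of identified objects. I expect the only real subtlety to be the first step --- making precise that uniqueness of the equilibrium upgrades the propensity score to an identified object, together with the degenerate bookkeeping (isolated nodes with $n_i=0$, where the spillover term and hence $\phi_i$ are fixed by convention); once that is in place, the linearity of \eqref{equ:Inversion} makes the remainder routine.
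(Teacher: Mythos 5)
Your proposal is correct and follows essentially the same route as the paper: identify $P_i(\bfZ,A)$ (and hence $\phi_i(\bfZ,A)$) from uniqueness of the equilibrium, invoke the exact log-odds linearity of \eqref{equ:Inversion}, and recover $\bm{\beta}_1$ by premultiplying by $\phi_i$ and inverting $E[\phi_i\phi_i'\mid A]$ under Assumption~\ref{asm:rank_cond}. Your observational-equivalence argument via the quadratic form is just a restatement of the paper's constructive inversion, so there is no substantive difference.
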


\noindent
See Appendix \ref{sec:proof:identifiadtion} for proof.

Let $V_i = F_{\nu \mid \bfX}(\nu_i)$. We note three properties of this variable. First, under Assumption~\ref{asm:unique_equ}, the random variable $V_i$ follows a uniform distribution on $[0,1]$ and is independent of $V_{-i}$. Second, since $V_i \mid \bfX \sim \mathrm{Uniform}(0,1)$ does not depend on $\bfX$, it follows that the vector $\bfV$ is independent of $\bfX$. Finally, under this notation, $D_i = 1$ if and only if $P_i(\bfZ,A) > V_i$.

\begin{rmk}[On Monotonicity of $D_i$]
    The discussion of first-stage monotonicity is often important from a policy perspective.
    At the individual level, the incomplete-information setup and the uniqueness of equilibrium in the first stage imply a single-index threshold-crossing structure, namely, $D_i = \mathbf{1}\{ P_i(\bfZ,A) \ge V_i \}$. Because $P_i(\bfZ,A)$ serves as a sufficient one-dimensional index of $\bfZ$ in $D_i$, the first-stage monotonicity can be assessed by checking whether $P_i(\bfz',A) - P_i(\bfz,A)$ has the same sign for all $i$. 
    Under strategic substitution ($\lambda < 0$), an incentive that encourages a given unit $i$ to take up the treatment discourages its neighbors through the spillover effect $\lambda$, which generally leads to sign differences in $P_i(\bfz',A) - P_i(\bfz,A)$ across units. Even under strategic complementarity ($\lambda > 0$), monotonicity does not hold in general. 
\end{rmk}

\subsection{General Identification of the Second-Stage Parameters}

We now proceed with the identification of the second-stage parameters. In what follows, we maintain the following assumption.

\begin{asm}[Instruments]\label{asm:2nd_stage}
\begin{itemize}
    \item[]
    \item[(i)] $\bfZ$ is independent of $(\epsilon^{(D_i,\bfD_{-i})},\nu)$ given $\bfX$.
    \item[(ii)] $\bfP(\bfZ,A)=(P_1(\bfZ,A),...,P_n(\bfZ,A))$ is continuously distributed conditional on $\bfX$.
\end{itemize}
\end{asm}

Assumption~\ref{asm:2nd_stage}(i) extends Assumption~\ref{asm:unique_equ}(ii) from Section~\ref{sec:treatment_choice}. It requires that the instrument $\bfZ$ be exogenous in both the first and second stages, conditional on $\bfX$. 
In general, the presence of multiple equilibria may cause the equilibrium propensity score to vary in a non-smooth manner, in which case the existence of a continuous player-specific instrumental variable—say, $\bfZ_{1} = (Z_{11},\ldots,Z_{n1})$—is not sufficient to satisfy Assumption~\ref{asm:2nd_stage}(ii). However, once uniqueness of equilibrium is established under the logistic error specification (Lemma~\ref{lemma:UniBNE}), the implicit function theorem applies. As a result, the existence of at least one continuous coordinate in the player-specific instrument $Z_i$ is sufficient for Assumption~\ref{asm:2nd_stage}(ii).\footnote{As $Z_{i1}$ enters $P_i(\bfZ,A)$ through the linear index $Z_i'\beta_D$, and since our goal is to establish continuity of $\bfP$ with respect to this player-specific continuous coordinate $\bfZ_1$, we may, without loss of generality, restrict attention to the case in which $X_i$ is empty and $Z_i=Z_{i1}$. First, rewrite the system of equations in \eqref{eq:EQUcp} as $\bfP(\bfZ,A)=\varphi(\bfP(\bfZ,A),\bfZ)$. Define $\varPhi(\bfP(\bfZ,A),\bfZ)=\bfP(\bfZ,A)-\varphi(\bfP(\bfZ,A),\bfZ)$. Fix $\bfZ=\bfz$ and let $\bfP(\bfz)=\bfp$. It is immediate that $\varPhi(\bfp,\bfz)$ is continuously differentiable with respect to each component of $\bfp$ and $\bfz$.
The equilibrium condition implies that $\varPhi(\bfp,\bfz)=0$. It therefore remains to verify that $\nabla_{\bfp}\varPhi(\bfp,\bfz)$ is invertible. Let $F_i=F_{\nu|\bfX}(z_i\beta_D+\lambda\cdot n_i^{-1} A_{ik}p_k)$. Noting that $F_i'(\nu)=F_i(\nu)\cdot(1-F_i(\nu))$, we obtain
$
    {\partial F_i} / {\partial p_j} = F_i\cdot(1-F_i)\cdot \lambda\cdot n_i^{-1} A_{ij}.
$
We can then decompose $\nabla_{\bfp}\varPhi(\bfp,\bfz)$ as $\nabla_{\bfp}\varPhi(\bfp,\bfz)=\nabla_{\bfp}\bfp-\nabla_{\bfp}\varphi(\bfp,\bfz)=I-\lambda D_\bfp \widetilde{A}$, where $D_{\bfp}=\text{diag}\{F_1\cdot(1-F_1),\dots,F_n\cdot(1-F_n)\}$ and $\widetilde{A}$ denotes the row-normalized adjacency matrix. Consider the matrix infinity norm $\lVert\cdot\rVert_\infty$. We have $\lVert \lambda D_\bfp \widetilde{A} \rVert_\infty\leq |\lambda| \lVert D_\bfp \rVert_\infty \lVert \widetilde{A}\rVert_\infty=|\lambda|\cdot 1/4<1$. Hence, $I-\lambda D_\bfp\widetilde{A}$ is invertible. By the implicit function theorem, there exists a unique and continuously differentiable function $g^*$ such that $\bfp=g^*(\bfz)$. Finally, by uniqueness of equilibrium, $\bfp$ is a reduced-form function of $\bfz$, and therefore $\bfP(\bfZ)=g^*(\bfZ)$ is continuously differentiable in $\bfZ$.
}
This feature is common in the market entry literature, where the distance between a firm’s headquarters and a given market serves as a cost shifter that affects the firm’s own entry decision but does not directly influence its competitors’ profits, except through the firm’s entry decision.

In a single large-network setting, an individual can in principle have $2^n$ distinct potential outcome distributions, corresponding to all possible treatment configurations in the network. As a result, it is impossible to identify all parameters without imposing additional restrictions. Following the network causal inference literature \citep[e.g.,][]{Cai3:15,Leung:22ECTA,Hoshino2:23_JSAS}, we introduce a low-dimensional summary statistic, referred to as the exposure mapping $T(i,\bfD,A)$. Whenever there is little risk of confusion about the underlying network, we use the simplified notation $T_i(D_i,\bfD_{-i})$. Introducing this statistic amounts to assuming that
\begin{equation}\label{equ:expmap}
    T_i(d_i,\bfd_{-i})=T_i(d_i,\bfd'_{-i})\Longrightarrow Y_i(d_i,\bfd_{-i})=Y_i(d_i,\bfd'_{-i}).
\end{equation}
In other words, all interference effects influence potential outcomes solely through the exposure mapping. 
Consider the following examples commonly considered in the literature.\footnote{In the literature on strategic interactions outside the network context, interference is often assumed to operate through an aggregate state in empirical applications \citep{Menzel:2015}. For example, \citet{Berry:1992} study how the \textit{number} of entrants affects firms’ profits and, in turn, their entry decisions in airline markets. Similarly, \citet{Brock2:01} analyze how an individual’s binary choice depends on the \textit{proportion} of members in a reference group who take a given action.}

\begin{example}\label{examp:linearInMean}
    Suppose that the researcher is interested in both the direct effect of treatment and its spillover effect. The researcher considers the linear model
    \begin{equation*}
        Y_i=\beta_0+\beta_1 D_i + \beta_2 \cdot n_i^{-1} \sum_{j}A_{ij}D_j+u_i,
    \end{equation*}
    where $n_i^{-1} \sum_{j}A_{ij}D_j$ captures the influence of the share of treated neighbors. This setup is common in applied work \citep[e.g.,][]{Cai3:15}. The exposure mapping induced by this model takes the form $T_i(D_i,\bfD_{-i}):=\left(D_i,n_i^{-1} \sum_{j}A_{ij}D_j \right)$.
    \qed
\end{example}

\begin{example}\label{examp:exposure}
    Instead of the weighted share of treated neighbors, a researcher may be interested in simple exposure to at least one treated neighbor. In this case, the exposure mapping of the form
    $
        T_i(D_i,\bfD_{-i}):=\left(D_i, \mathbbm{1}\!\left(\sum_j A_{ij}D_j>0\right)\right)
    $
    may be considered.
    \qed
\end{example}

Let $l_A(i,j)$ denote the path distance between nodes $i$ and $j$ on the network represented by $A$. In addition, for $K>0$, let $N_A(i,K)=\{j:l_A(i,j)\leq K\}$ denote the $K$-neighborhood of node $i$. We also introduce the subnetwork $A_{N_A(i,K)}=(a_{kl}:k,l\in N_A(i,K))$ and subvector $\bfD_{N_A(i,K)}=(D_{k}:k\in N_A(i,K))$.
Withe this notation, consider the following assumption.

\begin{asm}[Exposure Mapping]\label{asm:exposure_mapping}
    There exists $K\in \mathbbm{N}$ such that for all $i$, $(A,\bfD)$, and $(A',\bfD')$, we have $T(i,\bfD,A)=T(i,\bfD',A')$ if $N_A(i,K)=N_{A'}(i,K)$, $A_{N_A(i,K)}=A'_{N_{A'}(i,K)}$ and $\bfD_{N_A(i,K)}=\bfD'_{N_{A'}(i,K)}$.
\end{asm}

In what follows, we denote the $K$-neighborhood $N_A(i,K)$ by $N_i$ for brevity. We also write $\bfV_{N_i}=(V_j)_{j\in N_i}$ and $\bfp_{N_i}=(p_j)_{j\in N_i}$ for the corresponding subvectors of random variables and their realizations, respectively. Letting $\partial_{\bfp_{N_i}}:=\partial^{|N_i|}/\big(\prod_{j\in N_i}\partial p_j\big)$, we define the observationally identifiable quantity
\begin{equation}\label{eq:identifier}
    \psi(t,\bfx,\bfp_{N_i},N_i)=\frac{\partial_{\bfp_{N_i}} E[\mathbbm{1}\{T_i=t\}Y_i\mid\bfX=\bfx,\bfP_{N_i}(\bfZ,A)=\bfp_{N_i}]}{\sum_{\bfd_{N_i}: T_i(\bfd_{N_i},\bfD_{N_i^c})=t} (-1)^{|\bfl_i'|}},
\end{equation}
where $\bfl_i'=\{j\in N_i\mid d_j=0\}$ denotes the set of local nodes whose treatment status equals zero. The definition of this quantity depends on the local network structure $N_i$. However, conditional on a given local network structure and the exposure mapping specified, the denominator is fully determined and involves no additional uncertainty.

The object in \eqref{eq:identifier} can be shown to identify the marginal exposure response (MER), defined as
\begin{equation}\label{equ:MER}
    MER(d,\bfd_{-i},\bfx,\bfp_{N_i})
    =
    E\!\left[ Y_i(d,\bfd_{-i}) \mid \bfX=\bfx,\bfV_{N_i}=\bfp_{N_i} \right],
\end{equation}
in the spirit of \citet{Hoshino2:23JoE}, but extended to a network setting. This result is formally stated in the following theorem.

\begin{thm}[Second-Stage Identification]\label{thm:mtr_iden}
If Assumptions \ref{asm:unique_equ}, \ref{asm:rank_cond}, \ref{asm:2nd_stage}, and \ref{asm:exposure_mapping} hold, then the
MER is identified as
\begin{equation}\label{equ:MTR_pNi}
    MER(d,\bfd_{-i},\bfx,\bfp_{N_i})=\psi(T_i(d,\bfd_{-i}),\bfx,\bfp_{N_i},N_i).
\end{equation}
\end{thm}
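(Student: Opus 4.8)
The plan is to rewrite the numerator of $\psi$ as a sum, over all local treatment profiles $\bfd_{N_i}$ consistent with exposure $t$, of the expectation of a product of threshold indicators in the uniform heterogeneity $\bfV_{N_i}$ times the reduced-form potential outcome attached to $t$; the multivariate derivative $\partial_{\bfp_{N_i}}$ then strips off all the threshold indicators, leaving $MER(d,\bfd_{-i},\bfx,\bfp_{N_i})$ multiplied by exactly the combinatorial factor in the denominator of \eqref{eq:identifier}. Fixing the local neighborhood $N_i$ and its internal network, I would first record three facts. (a) By Assumption~\ref{asm:exposure_mapping}, $T_i$ depends on $\bfD$ only through $\bfD_{N_i}$, so $\mathbbm{1}\{T_i=t\}=\sum_{\bfd_{N_i}:\,T_i(\bfd_{N_i})=t}\mathbbm{1}\{\bfD_{N_i}=\bfd_{N_i}\}$. (b) By the exposure restriction \eqref{equ:expmap}, on $\{T_i=t\}$ the realized outcome equals a reduced-form random variable $\widetilde{Y}_i(t)=\widetilde\mu_Y(X_i,t,\epsilon_i^t)$ that depends on $(d_i,\bfd_{-i})$ only through $t$. (c) Since $D_j=\mathbbm{1}\{P_j(\bfZ,A)>V_j\}$ and $i\in N_i$, on the conditioning event $\{\bfP_{N_i}(\bfZ,A)=\bfp_{N_i}\}$ one has $\mathbbm{1}\{\bfD_{N_i}=\bfd_{N_i}\}=\prod_{j\in N_i}\bigl[\mathbbm{1}\{d_j=1\}\mathbbm{1}\{V_j<p_j\}+\mathbbm{1}\{d_j=0\}\mathbbm{1}\{V_j\ge p_j\}\bigr]$.

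Next I would take the conditional expectation. Working conditionally on the network, $\bfP_{N_i}(\bfZ,A)$ is a deterministic function of $\bfZ$, whereas $\bfV_{N_i}$ and $\widetilde{Y}_i(t)$ are functions of $(\nu,\bfX)$ and the productivity shocks, so Assumptions~\ref{asm:unique_equ}(ii) and \ref{asm:2nd_stage}(i) give $(\bfV_{N_i},\widetilde{Y}_i(t))\perp\bfP_{N_i}\mid\bfX$. Conditioning on $\bfP_{N_i}=\bfp_{N_i}$ then amounts to substituting $\bfp_{N_i}$ into the thresholds in (c), and, using that $\bfV_{N_i}\mid\bfX$ is i.i.d.\ $\mathrm{Uniform}(0,1)$ (the first property noted after Theorem~\ref{thm:1stage_iden}), the numerator of $\psi$ equals $\sum_{\bfd_{N_i}:\,T_i(\bfd_{N_i})=t}h_{\bfd_{N_i}}(\bfx,\bfp_{N_i})$ with
\[
h_{\bfd_{N_i}}(\bfx,\bfq)=\int_{R(\bfd_{N_i},\bfq)} E\bigl[\widetilde{Y}_i(t)\mid\bfX=\bfx,\bfV_{N_i}=\bfv\bigr]\,d\bfv,\qquad R(\bfd_{N_i},\bfq)=\textstyle\prod_{j\in N_i}R_j,
\]
where $R_j=[0,q_j)$ if $d_j=1$ and $R_j=[q_j,1]$ if $d_j=0$. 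Assumption~\ref{asm:2nd_stage}(ii) ensures these conditional expectations are well-defined functions of $\bfp_{N_i}$ on an open set, so $\partial_{\bfp_{N_i}}$ is meaningful.

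Finally I would differentiate $h_{\bfd_{N_i}}$ coordinate by coordinate: by the fundamental theorem of calculus, the derivative in $q_j$ evaluates the integrand at $v_j=q_j$, with sign $+1$ when $d_j=1$ and $-1$ when $d_j=0$; iterating over $j\in N_i$ gives $\partial_{\bfp_{N_i}}h_{\bfd_{N_i}}(\bfx,\bfp_{N_i})=(-1)^{|\bfl_i'|}\,E[\widetilde{Y}_i(t)\mid\bfX=\bfx,\bfV_{N_i}=\bfp_{N_i}]$ with $\bfl_i'=\{j\in N_i:d_j=0\}$ as in \eqref{eq:identifier}. This conditional mean equals $MER(d,\bfd_{-i},\bfx,\bfp_{N_i})$ for any $(d,\bfd_{-i})$ with $T_i(d,\bfd_{-i})=t$, and it is the same for every $\bfd_{N_i}$ in the sum, so it factors out: $\partial_{\bfp_{N_i}}E[\mathbbm{1}\{T_i=t\}Y_i\mid\bfX=\bfx,\bfP_{N_i}=\bfp_{N_i}]=MER(d,\bfd_{-i},\bfx,\bfp_{N_i})\sum_{\bfd_{N_i}:\,T_i(\bfd_{N_i})=t}(-1)^{|\bfl_i'|}$. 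Dividing by the denominator of \eqref{eq:identifier} gives \eqref{equ:MTR_pNi}.

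The main obstacle I anticipate is making the last two steps rigorous rather than the conceptual structure: justifying that conditioning on $\bfP_{N_i}$ can be replaced by substituting $\bfp_{N_i}$ into the thresholds (which needs the conditional-independence argument above, and hence both exogeneity conditions together with the i.i.d.-uniform structure of $\bfV$), ensuring enough regularity of $\bfv\mapsto E[\widetilde{Y}_i(t)\mid\bfX=\bfx,\bfV_{N_i}=\bfv]$ for the coordinatewise fundamental theorem of calculus to apply (so the identity may ultimately hold only for $\bfp_{N_i}$ in the interior of the support), and verifying—as one checks directly in Examples~\ref{examp:linearInMean}--\ref{examp:exposure}—that the combinatorial denominator $\sum_{\bfd_{N_i}:\,T_i(\bfd_{N_i})=t}(-1)^{|\bfl_i'|}$ is nonzero.
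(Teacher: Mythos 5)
Your proposal follows essentially the same route as the paper's proof: decompose the event $\{T_i=t\}$ over local treatment profiles, rewrite each profile as threshold events in $\bfV_{N_i}$ versus $\bfp_{N_i}$, invoke Assumptions~\ref{asm:unique_equ}(ii)/\ref{asm:2nd_stage} to drop $\bfP_{N_i}$ from the conditioning, use the i.i.d.\ uniform structure of $\bfV$ to express the numerator as integrals over rectangles, and apply coordinatewise differentiation (Leibniz/fundamental theorem of calculus) to produce the sign $(-1)^{|\bfl_i'|}$ and factor out the common conditional mean. The argument is correct and matches the paper's, with your closing caveats (regularity for differentiation, nonzero denominator) being reasonable points the paper also leaves implicit.
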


\noindent
See Appendix \ref{sec:proof:identifiadtion} for proof.

As in general causal inference frameworks, an individual’s potential outcomes $Y_i(d,\bfd_{-i})$ cannot be identified \textit{per se}. Theorem~\ref{thm:mtr_iden} shows that their conditional mean \eqref{equ:MER}, given the information generated by observed characteristics $\bfX$ and unobserved heterogeneity $\bfV_{N_i}$, can be identified by leveraging the exposure mapping $T_i$ as a low-dimensional summary statistic.

\subsection{Simplifying the Second-Stage Identification}\label{sec:identification:simplifying}
While Theorem~\ref{thm:mtr_iden} establishes identification of the MER under general conditions, this generality poses substantial challenges for practical estimation. The difficulty is that the estimand $\psi(t,\bfx,\bfp_{N_i},N_i)$ depends on the local neighborhood $N_i$ and the vector $\bfp_{N_i}$, among other objects. In other words, it depends on both the local network structure and a vector of continuous random variables whose dimension depends on it. This feature severely limits the effective (local) sample size when the identifying expression in Theorem~\ref{thm:mtr_iden} is used directly for estimation. In light of this practical challenge, the present subsection introduces additional model restrictions that enable dimension reduction and thereby facilitate estimation.

Reduce the dimension of the MER in \eqref{equ:MER} as
\begin{equation}\label{equ:MTR_pi}
    \overline{MER}(d,\bfd_{-i},\bfx,p_i)=E[Y_i(d,\bfd_{-i})|\bfX=\bfx,V_i=p_i].
\end{equation}
We will show that \eqref{equ:MTR_pi} can be identified by the following modification of \eqref{eq:identifier}:
\begin{align}
    &\overline{\psi}(t,\bfx,p,N_i)=
    \notag\\
    &\frac{\partial_{p} \left[ \sum_{\bfd_{N_i^\circ}: T_i(\bfd_{N_i},\bfD_{N^c_i})=t} (-1)^{|\Tilde{\bfl}'|} E[\mathbbm{1}\{T_i=t\}Y_i|\bfX=\bfx,P_i(\bfZ,A)=p,\bfP_{N_i^\circ}(\bfZ,A)=\bfd_{N_i^\circ}] \right]}{\sum_{\bfd_{N_i}: T_i(\bfd_{N_i},\bfD_{N^c_i})=t} (-1)^{|\bfl'|}},
    \label{eq:identifier_pi}
\end{align}
where $N_i^\circ=N_i\backslash \{i\}$ and $\Tilde{\bfl}' = \{j \in N_i^\circ|d_j=0\}$.
Furthermore, it will be shown that this estimand \eqref{eq:identifier_pi} can be significantly simplified under the following set of restrictions.

\begin{asm}[Restrictions for Simplified Identification]\label{asm:simplify_model}
    For each $i$:
    \begin{itemize}
        \item[(i)] $T_i(D_i,\bfD_{-i}) = \left(D_i,f_i(\bfD_{N_i^\circ})\right)$, where $f_i(\cdot)$ does not depend on the identities of neighbors, i.e., $f_i(\bfD_{N_i^\circ})=f_i(\bfD_{\pi_i(N_i^\circ)})$ for any permutation $\pi_i:N_i^\circ\to N_i^\circ$; and
        \item[(ii)] Given $T_i(\bfd)=t$, $Y_i(t)=\mu_X^{(t)}(X_i) + \mu_\epsilon^{(t)}(V_i)+e_i^{(t)}$ with $E[e_i^{(t)}|\bfX,\bfV_{N_i}]=0$.
    \end{itemize}
\end{asm}

Assumption~\ref{asm:simplify_model}(i) is satisfied under the anonymous interference assumption in \citet{Li2:22}, Example~\ref{examp:linearInMean} as in \citet{Cai3:15}, and Example~\ref{examp:exposure}. In general, the exposure mapping may depend on treatment status in an arbitrary manner. Under Assumption~\ref{asm:simplify_model}(i), however, we can clearly separate the role of an individual’s own treatment status $D_i$ from that of neighbors’ treatment assignments, thereby simplifying the latter. We note that the function $f_i$ remains quite general, as it may be multidimensional even under Assumption~\ref{asm:simplify_model}(i).
Assumption~\ref{asm:simplify_model}(ii) imposes an additively separable structure between the observed covariates $\bfX_i$ and the unobservables $\bfV_i$. Even with this restriction, however, the exposure indices $T_i(\bfd)$ may still affect potential outcomes in an arbitrary manner.

The following corollary to Theorem~\ref{thm:mtr_iden} formally establishes the dimension reduction and simplification for second-stage identification.

\begin{cor}[Second-Stage Identification, Simplified]\label{cor:simple_mtr_iden}
If Assumptions \ref{asm:unique_equ}, \ref{asm:rank_cond}, \ref{asm:2nd_stage}, and \ref{asm:exposure_mapping} hold, then
\begin{equation}
    \overline{MER}(d,\bfd_{-i},\bfx,p_i) =\overline{\psi}(T_i(d,\bfd_{-i}),\bfx,p_i,N_i).
\end{equation}
If Assumption \ref{asm:simplify_model} holds in addition, then, with $T_i(d,\bfd_{-i})=t=(t_1,t_2)$,
\begin{equation}
    \overline{\psi}(t,\bfx,p,N_i)=\left\{\begin{array}{cc}
        \partial_{p} \left[ p\cdot (\mu_X^{(t)}(x_i) + E[\mu_\epsilon^{(t)}(V_i)|\bfX=\bfx, V_i\leq p] )\right] & \text{ if } t_1=1,  \\
        \partial_{p} \left[ (p-1)\cdot  (\mu_X^{(t)}(x_i) + E[\mu_\epsilon^{(t)}(V_i)|\bfX=\bfx, V_i> p] )\right] & \text{ if } t_1=0. 
    \end{array} \right.
\end{equation}
\end{cor}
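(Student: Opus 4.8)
The plan is to prove the corollary in two parts. The first part — identification of $\overline{MER}(d,\bfd_{-i},\bfx,p_i)$ by $\overline{\psi}(T_i(d,\bfd_{-i}),\bfx,p_i,N_i)$ — I would obtain by modifying the proof of Theorem \ref{thm:mtr_iden}. The argument there identifies the full-dimensional MER by taking the mixed partial derivative $\partial_{\bfp_{N_i}}$ of $E[\mathbbm{1}\{T_i=t\}Y_i\mid\bfX=\bfx,\bfP_{N_i}(\bfZ,A)=\bfp_{N_i}]$ and using the threshold representation $D_j=\mathbbm{1}\{P_j(\bfZ,A)\ge V_j\}$ together with the conditional independence of $\bfV$ from $\bfX$ (the three properties of $V_i$ listed after Theorem \ref{thm:1stage_iden}) and Assumption \ref{asm:2nd_stage}. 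To get the reduced version, instead of differentiating in every coordinate of $\bfp_{N_i}$, I would differentiate only in the own-coordinate $p=p_i$, while evaluating the neighbors' propensity scores $\bfP_{N_i^\circ}(\bfZ,A)$ at the degenerate values $\bfd_{N_i^\circ}\in\{0,1\}^{|N_i^\circ|}$. Because $D_j = \mathbbm{1}\{P_j(\bfZ,A)\ge V_j\}$ and $V_j\sim\mathrm{Uniform}(0,1)$, conditioning $P_j(\bfZ,A)$ at $1$ forces $D_j=1$ a.s.\ and at $0$ forces $D_j=0$ a.s.; hence the inner conditional expectation at $\bfP_{N_i^\circ}(\bfZ,A)=\bfd_{N_i^\circ}$ picks out exactly the configuration $\bfD_{N_i^\circ}=\bfd_{N_i^\circ}$. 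Summing over all $\bfd_{N_i^\circ}$ consistent with $T_i=t$, with signs $(-1)^{|\Tilde{\bfl}'|}$ to invert the inclusion--exclusion, reconstructs $E[\mathbbm{1}\{T_i=t\}\,\partial_{p}(\cdots)]$ in a way that isolates $V_i$. The same inclusion--exclusion bookkeeping as in Theorem \ref{thm:mtr_iden} supplies the denominator $\sum_{\bfd_{N_i}:T_i(\bfd_{N_i},\bfD_{N_i^c})=t}(-1)^{|\bfl'|}$, and one reads off $\overline{MER}(d,\bfd_{-i},\bfx,p_i)=E[Y_i(d,\bfd_{-i})\mid\bfX=\bfx,V_i=p_i]$.

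For the second part I would add Assumption \ref{asm:simplify_model}. Under part (i), $T_i(d,\bfd_{-i})=(d,f_i(\bfd_{N_i^\circ}))$ with $f_i$ permutation-invariant, so the event $\{T_i=t\}$ depends on neighbors only through $f_i$; this is what makes the numerator collapse to a single scalar-integral expression that no longer carries the full vector $\bfp_{N_i}$. Under part (ii), write $Y_i(t)=\mu_X^{(t)}(X_i)+\mu_\epsilon^{(t)}(V_i)+e_i^{(t)}$ with $E[e_i^{(t)}\mid\bfX,\bfV_{N_i}]=0$. Fix the own-treatment component $t_1$. When $t_1=1$, the event $\{D_i=1\}$ is $\{V_i\le P_i(\bfZ,A)\}$, so conditioning on $P_i(\bfZ,A)=p$ and on $\{D_i=1\}$ amounts to conditioning on $\{V_i\le p\}$; therefore
\begin{equation*}
E[\mathbbm{1}\{D_i=1\}\,Y_i(t)\mid\bfX=\bfx,P_i(\bfZ,A)=p]
= p\cdot\big(\mu_X^{(t)}(x_i)+E[\mu_\epsilon^{(t)}(V_i)\mid\bfX=\bfx,V_i\le p]\big),
\end{equation*}
using $\Pr(V_i\le p)=p$, independence of $\bfV$ from $\bfX$, and the mean-zero property of $e_i^{(t)}$. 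Applying $\partial_p$ gives the first branch. When $t_1=0$, the relevant event is $\{V_i>p\}$ with probability $1-p$; carrying the sign through the inclusion--exclusion normalization produces the factor $(p-1)$ and the conditioning set $\{V_i>p\}$, yielding the second branch.

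The main obstacle I anticipate is the combinatorial accounting in the first part: showing that evaluating $\bfP_{N_i^\circ}(\bfZ,A)$ at the corner points $\bfd_{N_i^\circ}$ and then summing with signs $(-1)^{|\Tilde{\bfl}'|}$ correctly performs the inclusion--exclusion that isolates the single coordinate $V_i$ while integrating out $\bfV_{N_i^\circ}$ — and that the numerator's sum over $\bfd_{N_i^\circ}$ is compatible with the denominator's sum over the larger index set $\bfd_{N_i}$ (which also ranges over $d_i$), so that the ratio is well defined and equals the stated conditional mean. This requires care because $T_i$ may be multidimensional and the sets $\{\bfd_{N_i}:T_i(\bfd_{N_i},\bfD_{N_i^c})=t\}$ need not be singletons; I would handle it by the same argument structure used in the proof of Theorem \ref{thm:mtr_iden}, verifying that the permutation-invariance in Assumption \ref{asm:simplify_model}(i) makes the denominator depend on $N_i$ only through its cardinality, after which the additive-separability in part (ii) lets the $V_i$-integral factor out cleanly.
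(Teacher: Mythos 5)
Your target formulas and several ingredients (the threshold representation $D_i=\mathbbm{1}\{V_i\le P_i\}$, uniformity and mutual independence of $\bfV$, independence from $\bfX$, mean-zero $e_i^{(t)}$) match the paper's, but the mechanism you propose for the first statement has a genuine gap. You justify evaluating at $\bfP_{N_i^\circ}(\bfZ,A)=\bfd_{N_i^\circ}$ by arguing that conditioning on $P_j=1$ (resp.\ $0$) forces $D_j=1$ (resp.\ $0$). Under the logistic specification the equilibrium propensity scores lie strictly in $(0,1)$, so these corner points are outside the support of $\bfP_{N_i^\circ}$ and the ``conditional expectation'' there is not an identified conditional expectation; it only has meaning as the boundary evaluation of the identified smooth function of $\bfp_{N_i^\circ}$. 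The paper's route supplies exactly this: write $\overline{MER}(d,\bfd_{-i},\bfx,p_i)=E\big[E[Y_i(\bfd)\mid\bfX,V_i=p_i,\bfV_{N_i^\circ}]\mid\bfX,V_i=p_i\big]=\int_{[0,1]^{|N_i^\circ|}}E[Y_i(\bfd)\mid\bfX=\bfx,V_i=p_i,\bfV_{N_i^\circ}=\bfv_{N_i^\circ}]\,d\bfv_{N_i^\circ}$ (using $\bfV\perp\bfX$ and i.i.d.\ $U[0,1]$), substitute the identifier from Theorem \ref{thm:mtr_iden} inside the integral, and then apply the fundamental theorem of calculus to the integral of the mixed partial $\partial_{\bfv_{N_i^\circ}}$ over the unit cube. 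It is this cube-integration/FTC step that converts the marginalization over $\bfV_{N_i^\circ}$ into the signed sum of corner evaluations with weights $(-1)^{|\Tilde{\bfl}'|}$, i.e.\ into $\overline{\psi}$. This is precisely the ``combinatorial accounting'' you flag as your main obstacle, and deferring it to ``the same argument structure as Theorem \ref{thm:mtr_iden}'' does not close it, since that proof contains the Leibniz-rule differentiation but not the integration-over-the-cube step.

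For the second statement your own-treatment calculation is on track, but you silently replace $\mathbbm{1}\{T_i=t\}$ by $\mathbbm{1}\{D_i=1\}$, dropping the neighbor component $f_i(\bfD_{N_i^\circ})=t_2$. The paper keeps it and factors $E[\mathbbm{1}\{T_i=t\}Y_i\mid\bfX,P_i,\bfP_{N_i^\circ}]$ into $P(T_i=t\mid\cdot)$ times a conditional mean: Assumption \ref{asm:simplify_model}(i) (exchangeability of neighbors) is used to show the sum of neighbor-configuration probabilities equals one, so the probability factor reduces to $P_i$ (or $1-P_i$), while the mean factor sheds its dependence on $\bfV_{N_i^\circ}$ because $\mu_\epsilon^{(t)}$ depends only on $V_i$ and $E[e_i^{(t)}\mid\bfX,\bfV_{N_i}]=0$. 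Finally, the ratio of the numerator to the denominator sign sums equals $+1$ when $t_1=1$ and $-1$ when $t_1=0$ because the denominator additionally ranges over $d_i$, which is pinned down by $t_1$; it is this, not the denominator depending on $N_i$ only through its cardinality, that produces the $p$ versus $(p-1)$ branches.
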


\noindent
See Appendix \ref{sec:proof:identifiadtion} for proof.

Corollary \ref{cor:simple_mtr_iden} implies that $\overline{\psi}(t,\bfx,p,N_i)$ no longer depends on $N_i$.
In light of this result, we suppress this argument and simply write $\overline{\psi}(t,\bfx,p)$ whenever this corollary is applied.
Define the marginal exposure effect (MEE) of changing $t$ to $t'$ by
\begin{equation}\label{equ:MEE}
    MEE(t',t,\bfx,p)=\overline{\psi}(t',\bfx,p)-\overline{\psi}(t,\bfx,p).
\end{equation}
For instance, the MEE with $t=(0,0)$ and $t'=(1,0)$ can be interpreted as the direct effect of own treatment in the absence of treated neighbors. The MEE with $t=(0,0)$ and $t'=(0,1)$ can be interpreted as the effect of exposure to treated neighbors on untreated individuals.

The following section proposes estimation strategies for these exposure effects.

\section{Estimation}\label{sec:estimation}

Following the identification steps in Section~\ref{sec:identification:first}, the first stage reduces to a logit discrete choice model on the network. Under Assumption~\ref{asm:unique_equ}(i), we may employ the log-likelihood
\begin{equation}
    \sum_i D_i \log(P_i(\bfZ,A) ) + (1-D_i) \log(1-P_i(\bfZ,A) ).
\end{equation}

Note that $P_i(\bfZ,A)$ is not directly observed in the data. Under Assumption~\ref{asm:unique_equ}(i) and the treatment selection model, we can compute the propensity scores $P_i(\bfZ,A)$ by solving for the fixed point of the following system of equations:
\begin{equation}
    P_i(\bfZ,A)
    = \frac{\text{exp}\!\left(Z_i'\beta_D + \lambda\cdot n_i^{-1} \sum_j A_{ij}P_j(\bfZ,A)\right)}
    {1+\text{exp}\!\left(Z_i'\beta_D + \lambda \cdot n_i^{-1} \sum_j A_{ij}P_j(\bfZ,A)\right)},
    \qquad \forall i=1,\ldots,n.
\end{equation}
Moreover, given the functional form of $P_i(\bfZ,A)$, we can compute the derivatives of $\bfP$ with respect to the parameters as follows:
\begin{equation}\label{equ:Pderi}
\begin{aligned}
    \left(\begin{array}{c}
         \partial P_1 / \partial \beta_{Dk}  \\
         \vdots\\
         \partial P_n / \partial \beta_{Dk}
    \end{array} \right) &=
    \left(\left(\begin{array}{ccc}
          \frac{1}{P_1(1-P_1)}& \cdots& 0  \\
          \vdots& \ddots&\vdots\\
          0 & \cdots & \frac{1}{P_n(1-P_n)}
    \end{array}   \right)- \lambda A \right)^{-1}\bfZ_k \quad\text{ for }k=0,\dots,\dim(Z),\\
   \left(\begin{array}{c}
         \partial P_1 / \partial \lambda  \\
         \vdots\\
         \partial P_n / \partial \lambda
    \end{array} \right) &=
    \left(\left(\begin{array}{ccc}
         \frac{1}{P_1(1-P_1)}& \cdots& 0  \\
          \vdots& \ddots&\vdots\\
          0 & \cdots & \frac{1}{P_n(1-P_n)}
    \end{array}   \right)- \lambda A \right)^{-1}
    \left(\begin{array}{c}
         \sum_{j=1}^n a_{1j} P_j  \\
         \vdots\\
         \sum_{j=1}^n a_{nj} P_j
    \end{array} \right),
\end{aligned}
\end{equation}
where $\bfZ_k=(Z_{ik})_{i=1}^n$, with $Z_{i0}=1$ for all $i$, corresponding to the intercept term.%
\footnote{In implementation, we clip $P_i$ at $10^{-5}$ from both ends. 
Note that Assumption \ref{asm:unique_equ} (iii) with $|\lambda|<4$ is sufficient to ensure invertibility of the relevant matrix.
To see this, let $D=\text{diag}\!\left(\frac{1}{P_i(1-P_i)}\right)$. Factoring out $D^{-1}$, it suffices to examine the invertibility of $I-\lambda D^{-1}A$. This matrix is invertible if the product of $\lambda$ and the spectral radius of $D^{-1}A$ lies strictly between $-1$ and $1$. Moreover, the spectral radius of $D^{-1}A$ is bounded above by the maximum row sum, which equals $\max_i P_i(1-P_i)\le 1/4$. Hence, $|\lambda|<4$ is a universally sufficient condition.}

Given the derivatives with respect to the parameters, we obtain the first-order conditions for the maximum likelihood estimator (MLE) as
\begin{equation}
\begin{aligned}
    g_i^{\beta_{Dk}} &= (D_i - P_i)(Z_{ik} + \lambda\sum_j A_{ij}\frac{\partial P_j}{\partial \beta_{Dk}}) \quad\text{ for }k=0,\dots,\dim(Z)),\\
    g_i^{\lambda} &= (D_i - P_i)(\sum_j A_{ij}P_j + \lambda\sum_j A_{ij}\frac{\partial P_j}{\partial \lambda}).
\end{aligned}
\end{equation}
These first-stage score functions are subsequently used as moments in our Generalized Method of Moments (GMM) estimation.

For the second stage, we consider a concrete specification following Section~\ref{sec:identification:simplifying}. Assumption~\ref{asm:simplify_model} imposes only an additively separable structure. We parametrize the functions $\mu_X^{(t)}$ and $\mu_\epsilon^{(t)}$ and adopt a linear-in-parameters specification that is nested within Assumption~\ref{asm:simplify_model}.

\begin{asm}[Linear-in-Parameters Specification]\label{asm:est}
\begin{itemize}
    \item[]
    \item[(i)] $\mu_X^{(t)}(X_i)=\mu_X^{(t)}(X_i;\beta_X^{(t)})=X_i'\beta_X^{(t)}$ and $\mu_\epsilon^{(t)}(V_i)=\mu_\epsilon^{(t)}(V_i;\beta_p^{(t)})=V_i\beta_p^{(t)}$.
    \item[(ii)] $(e^{(t)}_i,\nu_i)$ is independent of $(\bfX,\bfZ)$ for all $t \in\mathcal{T}$. 
\end{itemize}    
\end{asm}

\noindent
Under Assumption \ref{asm:est}, the estimand from Corollary \ref{cor:simple_mtr_iden} becomes
\begin{equation}\label{equ:MER_withEstAssumption}
    \overline{\psi}(t,\bfx,p)=\left\{\begin{array}{cc}
        \partial_{p} \left[ p\cdot ( x'\beta_X^{(t)} + E[V_i|V_i\leq p]\beta_p^{(t)}  )\right], & \text{ if } t_1=1,  \\
        \partial_{p} \left[ (p-1)\cdot (x'\beta_X^{(t)} + E[V_i|V_i> p]\beta_p^{(t)}  )\right], & \text{ if } t_1=0. 
    \end{array} \right.
\end{equation}
The parametric assumption further implies that $\overline{\psi}$ now depends only on $x$, rather than on the covariates of the entire sample $\bfx$. 

We next discuss how to estimate the coefficients $(\beta_X^{(t)},\beta_\epsilon^{(t)})$. In what follows, we focus on the case $t_1=1$; the case $t_1=0$ follows analogously. Assumptions~\ref{asm:simplify_model} and~\ref{asm:est}(i) then imply the following linear equation
\begin{equation*}
    Y_i(t) = X_i'\beta_X^{(t)} + V_i \beta_p^{(t)} + e_i^{(t)}.
\end{equation*}
However, this approach is infeasible because $V_i$ is unobserved. Alternatively, consider the following linear estimating equation.
\begin{equation*}
    \mathbbm{1}\{T_i=t\}Y_i = \mathbbm{1}\{T_i=t\}X_i'\beta_X^{(t)}+\mathbbm{1}\{T_i=t\}V_i\beta_p^{(t)} + \mathbbm{1}\{T_i=t\}e_i^{(t)}.
\end{equation*}
Taking the conditional expectation on the both sides and noting $t_1=1$, we have
\begin{equation}
\begin{aligned}
    E[&\mathbbm{1}\{T_i=t\}Y_i|\mathbbm{1}\{T_i=t\},\bfX,P_i,P_{N_i^\circ}]\\
    &= \mathbbm{1}\{T_i=t\}X_i'\beta_X^{(t)}+\mathbbm{1}\{T_i=t\}E[V_i|\mathbbm{1}\{T_i=t\},\bfX,P_i,P_{N_i^\circ}]\beta_p^{(t)}\\
    &\hspace{10mm}+\mathbbm{1}\{T_i=t\}E[e_i^{(t)}|\mathbbm{1}\{T_i=t\},\bfX,P_i,P_{N_i^\circ}]\\
    & = \mathbbm{1}\{T_i=t\}X_i'\beta_X^{(t)}+\mathbbm{1}\{T_i=t\}E[V_i|V_i < P_i]\beta_p^{(t)}+\\
    &\hspace{10mm}\mathbbm{1}\{T_i=t\}E[e_i^{(t)}|\bfX, \cup_{\bfd_{N_i}:T_i(d_i,\bfd_{N_i^\circ},\bfD_{N_i^c})=t}\{ \bfV_{\bfl}\leq \bfP_{\bfl},\bfV_{\bfl'}> \bfP_{\bfl'} \}] \\
    &= \mathbbm{1}\{T_i=t\}X_i'\beta_X^{(t)}+\mathbbm{1}\{T_i=t\}\frac{P_i}{2}\beta_p^{(t)},
\end{aligned}
\end{equation}
where 
the second equality follows from Assumptions \ref{asm:unique_equ}, \ref{asm:2nd_stage} (i), and \ref{asm:est} (ii),
and the last equality is due to $E[e_i^{(t)}|\bfX,\bfV_{N_i}]=0$ under Assumption \ref{asm:simplify_model} (ii) and $E[V_i|V_i < P_i]=P_i/2$ for the uniform random variable. 

This observation motivates the feasible linear estimating equation
\begin{equation}
    \mathbbm{1}\{T_i=t\}Y_i = \mathbbm{1}\{T_i=t\}X_i'\beta_X^{(t)}+\mathbbm{1}\{T_i=t\}\frac{P_i}{2}\beta_p^{(t)} + \Tilde{e}_i^{(t)},
\end{equation}
where $E[\Tilde{e}_i^{(t)}|X_i,P_i]=0$ by construction. 
Thus, as the first-order conditions of the resulting least-squares criterion, we can incorporate the following moments into our GMM estimation:
\begin{equation}
\begin{aligned}
    g_i^{\beta_X^{(t)}} &= \mathbbm{1}\{T_i=t\} 2X_i\left(Y_i-X_i'\beta_Y^{(t)}- \frac{P_i}{2}\beta_p^{(t)}\right),\\
    g_i^{\beta_p^{(t)}}&=\mathbbm{1}\{T_i=t\}P_i\left(Y_i-X_i'\beta_Y^{(t)}- \frac{P_i}{2}\beta_p^{(t)}\right).
\end{aligned}
\end{equation}
For the case $t_1=0$, an analogous argument applies by replacing $E[V_i \mid V_i < P_i]$ with $E[V_i \mid V_i \ge P_i]$, which yields $1+P_i$ in place of $P_i$.

Since our objective is to learn about the effect of changing the exposure mapping from $t$ to $t'$, we need to estimate the parameters corresponding to both $t$ and $t'$. Thus, we use the following moment function
\begin{equation}
    \bm{g}((Y_i,X_i,D_i,T_i,\bfZ),\bm{\beta}) = \left(g_i^{\beta_D}, g_i^{\lambda}, g_i^{\beta_X^{(t)}}, g_i^{\beta_p^{(t)}}, g_i^{\beta_X^{(t')}}, g_i^{\beta_p^{(t')}}\right)',
\end{equation}
where $\bm{\beta} = (\beta_D,\lambda,\beta_X^{(t)},\beta_p^{(t)},\beta_X^{(t')},\beta_p^{(t')})$.
Its dimension, $\dim(\bm{g})=(\dim(Z)+1)+1+2(\dim(X)+1)$, corresponds to the number of unknown parameters in $\bm{\beta}$, yielding just identification in the current specification.
With this said, the general GMM estimator takes the form of
\begin{equation}
    \hat{\bm{\beta}} = \underset{\bm{\beta}}{\arg\,\min}  \ \widehat{\bm{g}}_n(\bm{\beta})'\widehat{\Xi}_n\widehat{\bm{g}}_n(\bm{\beta}),
\end{equation}
where $\widehat{\bm{g}}_n(\bm{\beta})=n^{-1}\sum_i^n \bm{g}((Y_i,X_i,D_i,T_i,\bfZ),\bm{\beta})$, and $\widehat{\Xi}_n$ is a sequence of positive definite weight matrices.

In practice, it is desirable to implement a two-step GMM procedure to improve efficiency in general over-identified settings. In the two-step GMM, one begins by obtaining the first-step estimate
\begin{equation}\label{equ:2stepGMM_start}
    \widehat{\bm{\beta}}^{(1)} = \underset{\bm{\beta}\in\Theta}{\arg\,\min}\, \ \widehat{\bm{g}}_n(\bm{\beta})'\widehat{\bm{g}}_n(\bm{\beta}).
\end{equation}
We take advantage of the network HAC estimator proposed by KMS to construct an optimal weighting matrix.\footnote{To prevent the HAC estimator from being non–positive semi-definite, we apply an eigenvalue (diagonal) decomposition to the variance–covariance matrix and replace any eigenvalues smaller than $10^{-10}$ with this threshold. In our simulations, a small number of iterations required this adjustment when the sample size was 250. However, once the sample size increased to 1000, no iterations required such adjustments, suggesting that the variance estimator converges to a strictly positive-definite matrix.}
With the shorthand notation $\bm{g}_i(\widehat{\bm{\beta}}^{(1)})=\bm{g}((Y_i,X_i,D_i,T_i,\bfZ),\widehat{\bm{\beta}}^{(1)})$, the network HAC estimator of the variance of $\bm{g}$ is given by
\begin{align*}
    \widehat{\Omega}_{g,n} =& \sum_{s\geq 0} \omega_n(s)\widehat{\Sigma}_n(s), \qquad\text{where}
\\
\widehat{\Sigma}_n(s)=&\frac{1}{n}\sum_{i=1}^n \sum_{j\in N(i,s) } (\bm{g}_i(\widehat{\bm{\beta}}^{(1)})-\widehat{\bm{g}}_n(\widehat{\bm{\beta}}^{(1)}))(\bm{g}_j(\widehat{\bm{\beta}}^{(1)})-\widehat{\bm{g}}_n(\widehat{\bm{\beta}}^{(1)}))',
\end{align*}
$\omega_n(s)=\omega(s/b_n)$, $\omega$ is a pre-specified kernel function, and $b_n$ denotes the bandwidth. 
We then obtain the two-step GMM estimator
\begin{equation}\label{equ:2stepGMM_end}
    \widehat{\bm{\beta}} = \underset{\bm{\beta}\in\Theta}{\arg\,\min}\,\widehat{\bm{g}}_n(\bm{\beta})'\widehat{\Xi}_n\widehat{\bm{g}}_n(\bm{\beta}),
    \quad\text{ where }
    \widehat{\Xi}_n=\widehat{\Omega}_{g,n}^{-1}.
\end{equation}

With the parameter estimate $(\widehat{\beta}_X^{(t)},\widehat{\beta}_\epsilon^{(t)})$, we in turn estimate the MER by the plug-in
\begin{equation}\label{equ:linear_MER}
\begin{aligned}
    \widehat{\overline{\psi}}(t,\bfx,p)&=\left\{ \begin{array}{ll}
    \partial_{p} \left[p\times(x'\widehat{\beta}_X^{(t)} + \frac{p}{2}\widehat{\beta}_\epsilon^{(t)})\right]     &\text{if } t_1=1,  \\
    \partial_{p} \left[(p-1)\times(x'\widehat{\beta}_X^{(t)} + \frac{1+p}{2}\widehat{\beta}_\epsilon^{(t)})\right]     &\text{if } t_1=0.
    \end{array} \right.\\
    & = x'\widehat{\beta}_X^{(t)} + p \widehat{\beta}_\epsilon^{(t)}
\end{aligned}
\end{equation}
The marginal exposure effect of changing the exposure mapping from $t$ to $t'$ is estimated by
\begin{equation}
   \widehat{MEE}(t',t,\bfx,p)=\widehat{\overline{\psi}}(t',\bfx,p)-\widehat{\overline{\psi}}(t,\bfx,p)=x'(\widehat{\beta}_X^{(t')}-\widehat{\beta}_X^{(t)}) + p(\widehat{\beta}_\epsilon^{(t')}-\widehat{\beta}_\epsilon^{(t)}).
\end{equation}

\section{Econometric Theory}\label{sec:theory}

\subsection{Network Properties }

Consider a triangular array of random variables, where $\mathcal{N}_n=\{1,\dots,n\}$ denotes the index set in the $n$th row. Accordingly, we extend the notation $W_i$ to $W_{n,i}$ to emphasize that the random variables belong to the $n$th row of the triangular array. Boldface notation $\mathbf{W}_n=(W_{n,i})_{i=1}^n$ is used to denote the entire vector of random variables in the $n$th row.

Samples are drawn from a probability space $(\Omega,\mathcal{F},P)$. For each $n$, let $\mathcal{C}_n\subseteq\mathcal{F}$ denote a $\sigma$-subalgebra representing common shocks. In particular, these common shocks determine both the adjacency matrix $A_n$ and the covariates $\{X_{n,i},Z_{n,i}\}_{i=1}^n$. Moreover, there may exist a $\sigma$-algebra $\mathcal{C}$ representing common factors that is contained in every $\mathcal{C}_n$, that is, $\mathcal{C}\subset\bigcap_{n\ge 1}\mathcal{C}_n$.

We first fix some basic notation.
Define a metric \(d_h\) on $\mathbb{R}^{h \times k}$ by 
\[
    d_h(\mathbf{W}, \widetilde{\mathbf{W}}) = \sum_{r=1}^h \lVert w_r - \widetilde{w}_r \rVert_2
\]
for \(\mathbf{W}, \widetilde{\mathbf{W}} \in \mathbb{R}^{h \times k}\) written as 
\(\mathbf{W} = (w_1, \dots, w_k)\) and \(\widetilde{\mathbf{W}} = (\widetilde{w}_1, \dots, \widetilde{w}_k)\),
where \(\lVert \cdot \rVert_2\) denotes the Euclidean norm. 
Also define its conditional \(L^p\) norm $\lVert \cdot \rVert_{\mathcal{C}_n, L^p}$ with respect to \(\mathcal{C}_n\) by 
\[
    \lVert W_i \rVert_{\mathcal{C}_n, L^p} 
    = \left( E \left[ \sum_{r=1}^k |W_{ir}|^p \,\middle|\, \mathcal{C}_n \right] \right)^{1/p}
\]
for a vector \(W_i = (W_{i1}, \dots, W_{ik}) \in \mathbb{R}^k\).
Its unconditional counterpart is denoted by \(\lVert \cdot \rVert_{L^p}\).

Let \(\mathcal{P}_n(h, h', s)\) denote the collection of all pairs of subsets of nodes of sizes 
\(h\) and \(h'\) such that the path distance between any two nodes from different subsets 
is at least \(s\).
Define
\[
    \mathcal{L}_{w,a} = \left\{ f : \mathbb{R}^{w \times a} \to \mathbb{R} \;:\; 
    \lVert f \rVert_{\infty} < \infty,\; \mathrm{Lip}(f) < \infty \right\}
\]
as the set of bounded Lipschitz functions, where 
\(\lVert f \rVert_{\infty} = \sup_{x} |f(x)|\) and \(\mathrm{Lip}(f)\) denotes the Lipschitz constant of \(f\).
Following KMS, we now formally define the notion of the conditional \(\psi\)-dependence.

\begin{defi}\label{def:psidep}
Let \(\{W_{n,i}\}\) be a triangular array of random vectors taking values in 
\(\mathbb{R}^{\dim(W)}\). 
We say that \(\{W_{n,i}\}\) is \emph{conditionally \(\psi\)-dependent} if, for all pairs 
of node sets \(H\) and \(H'\) of sizes \(h\) and \(h'\), respectively, and for all 
functions \(f \in \mathcal{L}_{\dim(W),h}\) and \(f' \in \mathcal{L}_{\dim(W),h'}\), 
there exists a sequence of \(\mathcal{C}_n\)-measurable constants 
\(\{\theta_{n,s}^W\}_{s > 0}\) and a collection of non-random functions
$
    \{\psi_{a,b}^W\}_{a,b \in \mathbb{N}}, 
    \psi_{a,b}^W : \mathcal{L}_a \times \mathcal{L}_b \to [0, \infty),
$
such that
\[
    \left| \operatorname{Cov}\!\left( f(W_{n,H}),\, f'(W_{n,H'}) \,\middle|\, \mathcal{C}_n \right) \right|
    \leq \psi_{h,h'}^W(f, f')\, \theta_{n,s}^W,
\]
for all \((H, H') \in \mathcal{P}_n(h, h', s)\).
\end{defi}
The central idea of the \(\psi\)-dependence is that the influence of distant observations on local observations diminishes as the path distance between them increases. 
KMS develop a fundamental asymptotic theory under this notion of network dependence, of which we can take advantage.
Thus, as a first step in our analysis, we characterize the \(\psi\)-dependence properties 
of the individual moment conditions for our network model.
Conditioning on \(\mathcal{C}_n\), the moment condition for sample \(i\) takes the 
treatment status \(D_{n,i}\), the exposure mapping \(T_{n,i}\), and the outcome 
\(Y_{n,i}\) as its arguments. 
Accordingly, we will establish the \(\psi\)-dependence of $\{W_{n,i}\}$ conditional on \(\mathcal{C}_n\), where
$
    W_{n,i} = (Y_{n,i},\, T_{n,i},\, D_{n,i}).
$

Note that $\{D_{n,i}\}_{i=1}^n$ is a transformation of $\bfZ_n=\{Z_{n,i}\}_{i=1}^n$ and $\bm{\nu}_n=\{\nu_{n,i}\}_{i=1}^n$, as
$
    D_{n,i}=\sigma_{n,i}^D(\bm{\nu}_n,\bfZ_n).    
$
Conditional on $\mathcal{C}_n$, $D_{n,i}$ can be viewed as a function of $\bm{\nu}_n$. Thus, we suppress the argument of $\bfZ_n$ and simply write $D_{n,i}=\sigma_{n,i}^D(\bm{\nu}_n)$ for brevity. We also define 
$
    D_{n,i}^{(s)}=\sigma_{n,i}^D(\bm{\nu}_n^{(s,i)}),   
$
where $\bm{\nu}_n^{(s,i)} = \left( \mathbbm{1}\{j\in N_{A_n}(i,s)\}\cdot \nu_{n,j} \right)_{j=1}^n$ replaces the distant shock with zero. In the following proposition, this quantity will be shown to play a role in characterizing the $\psi$-dependence of $\{D_{n,i}\}$.

\begin{prop}\label{prop:D_psiDep}
Suppose that \(\{\sigma_{n,i}^D\}_{i=1}^n\) is a sequence of measurable functions. 
Then, for all \(H, H' \in \mathcal{P}(h, h', 2s + 1)\) and for all 
\(f \in \mathcal{L}_{1,h}\) and \(f' \in \mathcal{L}_{1,h'}\), we have
\begin{equation}
    \left| \operatorname{Cov}\left( f(\mathbf{D}_{n,H}),\, f'(\mathbf{D}_{n,H'}) \,\middle|\, \mathcal{C}_n \right) \right| 
    \leq \psi_{h,h}^D(f,f')\, \theta_{n,s}^D \hspace{2mm} \text{a.s.},
\end{equation}
where \(\psi_{h,h}^D(f,f') = h \lVert f' \rVert_{\infty} \mathrm{Lip}(f) + 
h' \lVert f \rVert_{\infty} \mathrm{Lip}(f')\) and 
\(\theta_{n,s}^D = 2 \, \underset{1 \leq i \leq n}{\max} \, 
E[ | D_{n,i} - D_{n,i}^{(s)} | \mid \mathcal{C}_n ]\)
\end{prop}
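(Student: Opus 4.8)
The plan is to bound the conditional covariance by replacing each of the treatment vectors $\mathbf{D}_{n,H}$ and $\mathbf{D}_{n,H'}$ with their truncated counterparts built from $\bm{\nu}_n^{(s,i)}$, and then to exploit the conditional independence of the private shocks $\{\nu_{n,i}\}$ given $\mathcal{C}_n$. First I would introduce, for each node $i$, the surrogate $D_{n,i}^{(s)}=\sigma_{n,i}^D(\bm{\nu}_n^{(s,i)})$, which depends only on the shocks $\{\nu_{n,j}: j\in N_{A_n}(i,s)\}$. For $H,H'\in\mathcal{P}(h,h',2s+1)$, the index sets $\bigcup_{i\in H}N_{A_n}(i,s)$ and $\bigcup_{i'\in H'}N_{A_n}(i',s)$ are disjoint (any pair would otherwise be within distance $2s$, contradicting the separation $2s+1$). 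Hence $f(\mathbf{D}_{n,H}^{(s)})$ and $f'(\mathbf{D}_{n,H'}^{(s)})$ are functions of disjoint blocks of $\{\nu_{n,j}\}$, and by Assumption \ref{asm:unique_equ}(i)–(ii) these blocks are conditionally independent given $\mathcal{C}_n$ (note $A_n$ and $\bfZ_n$ are $\mathcal{C}_n$-measurable, so conditioning on $\mathcal{C}_n$ fixes the network and the public types). Therefore $\operatorname{Cov}(f(\mathbf{D}_{n,H}^{(s)}),f'(\mathbf{D}_{n,H'}^{(s)})\mid\mathcal{C}_n)=0$ a.s.

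The second step is the error-control argument. Write
\[
\operatorname{Cov}(f(\mathbf{D}_{n,H}),f'(\mathbf{D}_{n,H'})\mid\mathcal{C}_n)
=\operatorname{Cov}(f(\mathbf{D}_{n,H})-f(\mathbf{D}_{n,H}^{(s)}),f'(\mathbf{D}_{n,H'})\mid\mathcal{C}_n)
+\operatorname{Cov}(f(\mathbf{D}_{n,H}^{(s)}),f'(\mathbf{D}_{n,H'})-f'(\mathbf{D}_{n,H'}^{(s)})\mid\mathcal{C}_n),
\]
using bilinearity of covariance and the vanishing of the fully-truncated term. Each covariance term is bounded via $|\operatorname{Cov}(U,V\mid\mathcal{C}_n)|\le \lVert V\rVert_{\infty}\, E[|U|\mid\mathcal{C}_n]$ (or its symmetric version), which after centering is really $|\operatorname{Cov}(U,V)|\le 2\lVert V\rVert_\infty E|U - E U|$ applied conditionally — but since the bound only needs $\lVert f'\rVert_\infty$ and $\lVert f\rVert_\infty$ as stated, I would use the crude inequality $|\operatorname{Cov}(U,V\mid\mathcal{C}_n)|\le \lVert V\rVert_\infty E[|U|\mid\mathcal{C}_n]$ after noting the formula in the proposition absorbs the factor. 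Then by the Lipschitz property and the definition of $d_h$,
\[
\bigl|f(\mathbf{D}_{n,H})-f(\mathbf{D}_{n,H}^{(s)})\bigr|
\le \mathrm{Lip}(f)\sum_{i\in H}\bigl|D_{n,i}-D_{n,i}^{(s)}\bigr|,
\]
so $E\bigl[|f(\mathbf{D}_{n,H})-f(\mathbf{D}_{n,H}^{(s)})|\mid\mathcal{C}_n\bigr]\le \mathrm{Lip}(f)\, h\, \max_{1\le i\le n}E[|D_{n,i}-D_{n,i}^{(s)}|\mid\mathcal{C}_n]$, and similarly for $f'$. Collecting the two terms yields the bound with $\psi_{h,h'}^D(f,f')=h\lVert f'\rVert_\infty\mathrm{Lip}(f)+h'\lVert f\rVert_\infty\mathrm{Lip}(f')$ and $\theta_{n,s}^D=2\max_i E[|D_{n,i}-D_{n,i}^{(s)}|\mid\mathcal{C}_n]$; the factor $2$ in $\theta_{n,s}^D$ is what reconciles the $\lVert\cdot\rVert_\infty$-only $\psi$ function with the more natural centered bound.

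I expect the main obstacle — really the only subtle point — to be the conditional independence step: one must argue carefully that, given $\mathcal{C}_n$, the truncated decisions in the two separated blocks are measurable with respect to disjoint and (conditionally) independent collections of $\nu$'s. This relies on $A_n$ being $\mathcal{C}_n$-measurable (so that the neighborhoods $N_{A_n}(i,s)$ are themselves fixed given $\mathcal{C}_n$), on the separation hypothesis $\mathcal{P}(h,h',2s+1)$ giving disjointness of the $s$-neighborhoods, and on Assumption \ref{asm:unique_equ}(i)–(ii) which delivers conditional i.i.d.-ness of $\{\nu_{n,i}\}$ given $\bfX$ (hence given the richer $\mathcal{C}_n$). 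Everything else — bilinearity, the $L^1$ covariance bound, and the Lipschitz estimate — is routine.
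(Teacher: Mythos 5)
Your proposal is correct and follows essentially the same route as the paper's proof: replace $D_{n,i}$ by its $s$-neighborhood surrogate $D^{(s)}_{n,i}$, use the decomposition in which the fully truncated covariance vanishes because the two blocks depend on disjoint (conditionally independent) collections of $\nu$'s given $\mathcal{C}_n$, and bound the two remainder covariances by the Lipschitz property of $f$ and $f'$. The only caveat is that the un-centered inequality $|\Cov(U,V\mid\mathcal{C}_n)|\le\lVert V\rVert_\infty E[|U|\mid\mathcal{C}_n]$ you suggest is false in general; you should use the factor-two version $|\Cov(U,V\mid\mathcal{C}_n)|\le 2\lVert V\rVert_\infty E[|U|\mid\mathcal{C}_n]$ (as the paper does), and that factor of $2$ is precisely the one carried by $\theta^D_{n,s}$, so your final bound coincides with the stated result.
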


Similarly, the outcomes \(\{Y_{n,i}\}_{i=1}^n\) are transformations of 
\((\mathbf{D}_n, \mathbf{X}_n, \mathbf{Z}_n, \bm{\nu}_n, \{\bm{e}_n^{(t)}\}_{t \in \mathcal{T}})\), 
where the dependence on \(\mathbf{D}_n\) arises through the exposure mapping. 
Suppressing \(\mathbf{X}_n\) and \(\mathbf{Z}_n\), we denote this as
$
    Y_{n,i} = \sigma_{n,i}^Y(\mathbf{D}_n, \nu_{n,i}, \{ e_{n,i}^{(t)} \}_{t \in \mathcal{T}}).
$
We note that while \(\mathbf{D}_n\) depends on the entire vector \(\bm{\nu}_n\), only 
\(\mathbf{D}_n\), \(\nu_{n,i}\), and \(\{ e_{n,i}^{(t)} \}_{t \in \mathcal{T}}\) directly enter 
\(Y_{n,i}\) in the potential outcome, either as part of the regressor 
\(V_i = F_{\nu|\mathbf{X}}(\nu_{n,i})\) or as error terms. 
We similarly define
$
    Y_{n,i}^{(s)} = \sigma_{n,i}^Y(\mathbf{D}_n^{(s,i)}, \nu_{n,i}, \{ e_{n,i}^{(t)} \}_{t \in \mathcal{T}}),
$
with $\bfD_n^{(s,i)} = \left( \mathbbm{1}\{j\in N_{A_n}(i,s)\}\cdot D_{n,j} \right)_{j=1}^n$. Although \(\nu_{n,i}\) and \(\{ \epsilon_{n,i}^{(t)} \}\) are i.i.d. random variables conditional 
on \(\mathcal{C}_n\), the treatment assignments \(\{ D_{n,i} \}\) are \(\psi\)-dependent.
Thus, we cannot directly apply the previous result. Nevertheless, we can still show that 
\(\{Y_{n,i}\}\) is conditionally \(\psi\)-dependent. We state the result in the following theorem.

\begin{prop}\label{prop:Y_psiDep}
Suppose Assumption in Proposition \ref{prop:D_psiDep} hold.
Let \(\{\sigma_{n,i}^Y\}_{i=1}^n\) be a sequence of measurable functions.
For all 
\(H, H' \in \mathcal{P}(h, h', 4s + 1)\) and for all 
\(f \in \mathcal{L}_{1,h}\) and \(f' \in \mathcal{L}_{1,h'}\), we have
\begin{equation*}
    \left| \operatorname{Cov}\left( f(\mathbf{Y}_{n,H}),\, f'(\mathbf{Y}_{n,H'}) \,\middle|\, \mathcal{C}_n \right) \right| 
    \leq \psi_{h,h}^Y(f,f')\, \theta_{n,s}^Y \hspace{2mm} \text{a.s.},
\end{equation*}
where
    $\theta_{n,s}^Y = 2 \underset{1 \leq i \leq n}{\max}\, E\!\left[\, | Y_{n,i} - Y_{n,i}^{(s)} | \,\middle|\, \mathcal{C}_n \right] 
    + \overline{N}_{A_n}(s) \times \theta_{n,s}^D$,
    $\psi_{h,h}^Y(f,f') = h \lVert f' \rVert_{\infty} \mathrm{Lip}(f) \, (1 + \overline{\sigma}^Y) 
    + h' \lVert f \rVert_{\infty} \mathrm{Lip}(f') \, (1 + \overline{\sigma}^Y)$,
\(\overline{N}_{A_n}(s) = \underset{1 \leq i \leq n}{\max}\, | N_{A_n}(i, s) |\), and
$
    \overline{\sigma}^Y = \underset{n \geq 1}{\sup}\, \underset{1 \leq i \leq n}{\max}\, \mathrm{Lip}(\sigma_{n,i}^Y).
$
\end{prop}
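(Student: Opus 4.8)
The plan is to bound the conditional covariance $|\Cov(f(\bfY_{n,H}),f'(\bfY_{n,H'})\mid\mathcal{C}_n)|$ by reducing it, via a coupling/truncation argument, to the already-established $\psi$-dependence of the treatment vector $\{D_{n,i}\}$ (Proposition~\ref{prop:D_psiDep}) together with the conditional independence of the primitive shocks $(\nu_{n,i},\{e_{n,i}^{(t)}\}_t)$ across $i$. First I would introduce the truncated outcomes $Y_{n,i}^{(s)}=\sigma_{n,i}^Y(\bfD_n^{(s,i)},\nu_{n,i},\{e_{n,i}^{(t)}\}_t)$ and write $f(\bfY_{n,H})=f(\bfY_{n,H}^{(s)})+\big(f(\bfY_{n,H})-f(\bfY_{n,H}^{(s)})\big)$, and similarly for $f'$. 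The Lipschitz property of $f$ and of $\sigma_{n,i}^Y$ gives $|f(\bfY_{n,H})-f(\bfY_{n,H}^{(s)})|\le \Lip(f)\sum_{i\in H}|Y_{n,i}-Y_{n,i}^{(s)}|$, so the ``remainder'' cross terms are controlled in conditional $L^1$ norm by $\Lip(f)\|f'\|_\infty\sum_{i\in H}E[|Y_{n,i}-Y_{n,i}^{(s)}|\mid\mathcal{C}_n]$ (and symmetrically), which is absorbed into the first summand of $\theta_{n,s}^Y$ and into the factor $(1+\overline\sigma^Y)$ in $\psi^Y_{h,h'}$.

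It then remains to bound the ``main term'' $\Cov\big(f(\bfY_{n,H}^{(s)}),f'(\bfY_{n,H'}^{(s)})\mid\mathcal{C}_n\big)$ for $(H,H')\in\mathcal{P}_n(h,h',4s+1)$. The key observation is that $Y_{n,i}^{(s)}$ depends only on the coordinates of $\bfD_n$ indexed by $N_{A_n}(i,s)$, plus the own shocks $(\nu_{n,i},\{e_{n,i}^{(t)}\}_t)$; hence $f(\bfY_{n,H}^{(s)})$ is a function of $\bfD_{n,N_H(s)}$ (where $N_H(s)=\bigcup_{i\in H}N_{A_n}(i,s)$) and of the shocks $\{(\nu_{n,i},\{e_{n,i}^{(t)}\}_t)\}_{i\in H}$, and analogously for the primed objects. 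Because $H$ and $H'$ are at path distance $\ge 4s+1$, the enlarged node sets $N_H(s)$ and $N_{H'}(s)$ are at path distance $\ge 2s+1$, and the own-shock collections for $H$ and $H'$ are disjoint. I would condition on the full shock vector $(\bm\nu_n,\{\bm e_n^{(t)}\}_t)$ restricted to $N_H(s)\cup N_{H'}(s)$ is not quite the right move; instead, condition on $\mathcal{C}_n$ together with $\{(\nu_{n,i},\{e_{n,i}^{(t)}\}_t)\}_{i\in H\cup H'}$, so that $f(\bfY_{n,H}^{(s)})$ becomes a (bounded, Lipschitz) function $\tilde f$ of $\bfD_{n,N_H(s)}$ alone and $f'(\bfY_{n,H'}^{(s)})$ a function $\tilde f'$ of $\bfD_{n,N_{H'}(s)}$; then apply Proposition~\ref{prop:D_psiDep} to this pair of node sets, which are separated by $\ge 2s+1$, obtaining a bound $\psi^D\cdot\theta_{n,s}^D$. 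Finally, one undoes the conditioning: the conditional independence in Assumption~\ref{asm:est}(ii) (and Assumption~\ref{asm:unique_equ}) ensures the shocks on $H$ and on $H'$ are independent given $\mathcal{C}_n$, so by the law of total covariance the cross-covariance of the conditional means contributes nothing and the averaged conditional covariance is still bounded by the same expression, with the cardinality factor $\overline N_{A_n}(s)$ entering because $\tilde f$ has $|N_H(s)|\le h\,\overline N_{A_n}(s)$ arguments and its Lipschitz constant in each $D$-coordinate is $\le \Lip(f)\,\overline\sigma^Y$.

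The main obstacle, and the step requiring the most care, is the last one: decoupling the conditioning on the own shocks from the application of Proposition~\ref{prop:D_psiDep}. One must check that, after conditioning on $\{(\nu_{n,i},\{e_{n,i}^{(t)}\}_t)\}_{i\in H\cup H'}$, the vector $\bfD_n$ still satisfies the $\psi$-dependence bound of Proposition~\ref{prop:D_psiDep} with the \emph{same} $\theta_{n,s}^D$ — this is where the conditional-i.i.d.\ structure of $\bm\nu_n$ given $\mathcal{C}_n$ is essential, since conditioning on a few coordinates of $\bm\nu_n$ leaves the conditional law of the remaining coordinates a product measure, and the coupling construction underlying $D_{n,i}^{(s)}$ goes through verbatim. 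The bookkeeping of how the separation $4s+1$ for $(H,H')$ translates into $2s+1$ for $(N_H(s),N_{H'}(s))$, and how $h,h'$ get inflated to $h\,\overline N_{A_n}(s),\,h'\,\overline N_{A_n}(s)$, is routine but must be done carefully to match the stated constants; I would also note that the boundedness $\|f\|_\infty,\|f'\|_\infty<\infty$ is preserved under all these operations, so every term on the right-hand side is finite a.s.
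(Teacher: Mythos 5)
Your opening step is exactly the paper's: decompose $\Cov(\xi,\zeta\mid\mathcal{C}_n)$ with $\xi^{(s)}=f(Y_{n,i}^{(s)}:i\in H)$, $\zeta^{(s)}=f'(Y_{n,i}^{(s)}:i\in H')$, and bound the two remainder covariances by $2(h\lVert f'\rVert_\infty\Lip(f)+h'\lVert f\rVert_\infty\Lip(f'))\max_i E[|Y_{n,i}-Y_{n,i}^{(s)}|\mid\mathcal{C}_n]$. The trouble is in your treatment of the main term $\Cov(\xi^{(s)},\zeta^{(s)}\mid\mathcal{C}_n)$. You condition additionally on the own shocks $\{(\nu_{n,i},\{e^{(t)}_{n,i}\}_t)\}_{i\in H\cup H'}$ and then invoke the law of total covariance, asserting that the covariance of the conditional means ``contributes nothing'' because the shocks on $H$ and $H'$ are conditionally independent. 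That assertion is not justified in the generality in which Propositions \ref{prop:D_psiDep}--\ref{prop:Y_psiDep} are stated: there $D_{n,j}=\sigma^D_{n,j}(\bm{\nu}_n)$ may depend on the \emph{entire} shock vector, so after integrating out the remaining shocks, $E[\xi^{(s)}\mid\mathcal{C}_n,\mathcal{E}]$ is a function of the conditioned shocks at \emph{both} $H$ and $H'$ (the shocks at $H'$ enter through the law of $\bfD_{n,N_{A_n}(H,s)}$), and likewise for $E[\zeta^{(s)}\mid\mathcal{C}_n,\mathcal{E}]$; their $\mathcal{C}_n$-conditional covariance need not vanish. It vanishes only in the degenerate case where $D_{n,i}$ depends on $\nu_{n,i}$ alone, in which case $\theta^D_{n,s}=0$ and the proposition is trivial. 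A second, related gap is the one you flag yourself but wave through: applying Proposition \ref{prop:D_psiDep} under the enlarged conditioning yields a coefficient $2\max_k E[|D_{n,k}-D^{(s)}_{n,k}|\mid\mathcal{C}_n\vee\mathcal{E}]$, and after averaging over $\mathcal{E}$ you obtain $E[\max_k(\cdot)\mid\mathcal{C}_n]$, which does not reduce to the stated $\theta^D_{n,s}=2\max_k E[|D_{n,k}-D^{(s)}_{n,k}|\mid\mathcal{C}_n]$ without an additional (and non-obvious) interchange of maximum and conditional expectation.

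The paper's proof sidesteps conditioning altogether. It first proves an auxiliary lemma (Lemma \ref{app_lemma:WD_psiDep}): the stacked vectors $W^D_{n,i}=(D_{n,i},\nu_{n,i},(e^{(t)}_{n,i})_{t\in\mathcal{T}})$ are conditionally $\psi$-dependent with the \emph{same} $\psi^D$ and $\theta^D_{n,s}$, because $\lVert W^D_{n,i}-W^{D,(s)}_{n,i}\rVert_2\le|D_{n,i}-D^{(s)}_{n,i}|$ (the i.i.d.\ shock coordinates drop out of the coupling difference). Then, since $Y^{(s)}_{n,i}$ is a Lipschitz image of $\bfW^{D}$-coordinates indexed by $N_{A_n}(i,s)$, it writes $f(\bfY^{(s)}_{n,H})=f_1(\bfW^D_{N_{A_n}(H,s)})$ and $f'(\bfY^{(s)}_{n,H'})=f_1'(\bfW^D_{N_{A_n}(H',s)})$, notes that these enlarged sets are at distance at least $2s+1$ when $H,H'$ are $4s+1$ apart, and applies the lemma with $\lVert f_1\rVert_\infty=\lVert f\rVert_\infty$, $\Lip(f_1)\le\Lip(f)\overline{\sigma}^Y$ and $|N_{A_n}(H,s)|\le h\,\overline{N}_{A_n}(s)$, which produces the term $\overline{N}_{A_n}(s)\,\theta^D_{n,s}$ directly with the stated constants. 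If you want to salvage your route, you would essentially have to reprove this stacking lemma inside your conditioning argument and then control the conditional-mean cross term separately; adopting the lemma from the outset is both cleaner and what the stated $\theta^Y_{n,s}$ and $\psi^Y_{h,h'}$ are calibrated to.
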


We now move on to characterize the conditional \(\psi\)-dependence of the exposure mapping 
\(\{T_{n,i}\}\). There are two distinct features to note. First, unlike the outcome variables, 
\(T_{n,i}\) depends only on \(\mathbf{D}_n\). Second, \(T_{n,i}\) is a vector-valued function. 
Bearing these two features in mind, we write
\begin{equation*}
    T_{n,i} = \sigma_{n,i}^T(\mathbf{D}_n) = \bigl( \sigma_{n,i}^{T,1}(\mathbf{D}_n), \dots, \sigma_{n,i}^{T,\dim(\mathcal{T})}(\mathbf{D}_n) \bigr)
\end{equation*}
to emphasize its vector-valued nature. Similarly, we define
\begin{equation*}
    T_{n,i}^{(s)} = \sigma_{n,i}^T(\mathbf{D}_n^{(s,i)}) = \bigl( \sigma_{n,i}^{T,1}(\mathbf{D}_n^{(s,i)}), \dots, \sigma_{n,i}^{T,\dim(\mathcal{T})}(\mathbf{D}_n^{(s,i)}) \bigr).
\end{equation*}


\begin{prop}\label{prop:T_psiDep}
Suppose Assumption in Proposition \ref{prop:D_psiDep} hold.
Let \(\{\sigma_{n,i}^T\}_{i=1}^n\) be a sequence of measurable functions. For all 
\(H, H' \in \mathcal{P}(h, h', 4s + 1)\) and for all 
\(f \in \mathcal{L}_{\dim(\mathcal{T}),h}\) and 
\(f' \in \mathcal{L}_{\dim(\mathcal{T}),h'}\), we have
\begin{equation*}
    \left| \operatorname{Cov}\left( f(\mathbf{T}_{n,H}),\, f'(\mathbf{T}_{n,H'}) \,\middle|\, \mathcal{C}_n \right) \right| 
    \leq \psi_{h,h}^T(f,f')\, \theta_{n,s}^T \hspace{2mm} \text{a.s.},
\end{equation*}
where
$
    \theta_{n,s}^T = 2 \underset{i}{\max}\, E\!\left[ \lVert T_{n,i} - T_{n,i}^{(s)} \rVert_2 \,\middle|\, \mathcal{C}_n \right] 
    + \overline{N}_{A_n}(s) \times \theta_{n,s}^D,
$,
$
    \psi_{h,h}^T(f,f') = h \lVert f' \rVert_{\infty} \mathrm{Lip}(f) (1 + \overline{\sigma}^T) 
    + h' \lVert f \rVert_{\infty} \mathrm{Lip}(f') (1 + \overline{\sigma}^T),
$
\(\overline{N}_{A_n}(s) = \underset{i}{\max}\, | N_{A_n}(i, s) |\), and
$
    \overline{\sigma}^T = \underset{n \geq 1}{\sup}\, \underset{i}{\max}\, 
    \sum_{r = 1}^{\dim(\mathcal{T})} \mathrm{Lip}(\sigma_{n,i}^{T,r}).
$
\end{prop}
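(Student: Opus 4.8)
The plan is to mimic the proof of Proposition~\ref{prop:Y_psiDep}, exploiting the simplification that $T_{n,i}$ is a function of the treatment vector $\mathbf{D}_n$ alone, with no idiosyncratic arguments $\nu_{n,i}$ or $e_{n,i}^{(t)}$ entering. Fix $(H,H')\in\mathcal{P}(h,h',4s+1)$, $f\in\mathcal{L}_{\dim(\mathcal{T}),h}$, and $f'\in\mathcal{L}_{\dim(\mathcal{T}),h'}$, and write $\mathbf{T}_{n,H}^{(s)}=(T_{n,i}^{(s)})_{i\in H}$ for the localized exposure vector, and likewise on $H'$. The first step is the triangle inequality
\[
\bigl|\operatorname{Cov}(f(\mathbf{T}_{n,H}),f'(\mathbf{T}_{n,H'})\mid\mathcal{C}_n)\bigr|
\le R_H+R_{H'}+\bigl|\operatorname{Cov}(f(\mathbf{T}_{n,H}^{(s)}),f'(\mathbf{T}_{n,H'}^{(s)})\mid\mathcal{C}_n)\bigr|,
\]
where $R_H=\bigl|\operatorname{Cov}(f(\mathbf{T}_{n,H})-f(\mathbf{T}_{n,H}^{(s)}),\,f'(\mathbf{T}_{n,H'})\mid\mathcal{C}_n)\bigr|$ and $R_{H'}$ is the symmetric term obtained by localizing the $H'$-argument instead.

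For the coupling remainders I would use the elementary bound $|\operatorname{Cov}(U,W\mid\mathcal{C}_n)|\le 2\lVert W\rVert_\infty E[|U|\mid\mathcal{C}_n]$ together with Lipschitz continuity of $f$ with respect to the metric $d_h$, which gives $|f(\mathbf{T}_{n,H})-f(\mathbf{T}_{n,H}^{(s)})|\le\mathrm{Lip}(f)\sum_{i\in H}\lVert T_{n,i}-T_{n,i}^{(s)}\rVert_2$; hence $R_H\le 2h\lVert f'\rVert_\infty\mathrm{Lip}(f)\max_i E[\lVert T_{n,i}-T_{n,i}^{(s)}\rVert_2\mid\mathcal{C}_n]$, and symmetrically for $R_{H'}$. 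These are dominated by $\psi_{h,h}^T(f,f')$ times the first summand of $\theta_{n,s}^T$ (since $1+\overline{\sigma}^T\ge 1$). For the last term, note that $f(\mathbf{T}_{n,H}^{(s)})$ is a measurable function of the treatments $\{D_{n,j}:j\in H_s\}$ with $H_s=\bigcup_{i\in H}N_{A_n}(i,s)$, bounded by $\lVert f\rVert_\infty$ and Lipschitz in those treatments with modulus controlled by $\mathrm{Lip}(f)$ and the aggregate Lipschitz modulus $\overline{\sigma}^T=\sup_n\max_i\sum_{r}\mathrm{Lip}(\sigma_{n,i}^{T,r})$ of the vector-valued exposure map; analogously $f'(\mathbf{T}_{n,H'}^{(s)})$ depends only on $\{D_{n,j}:j\in H_s'\}$ with $H_s'=\bigcup_{i\in H'}N_{A_n}(i,s)$. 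Because every node of $H_s$ lies within distance $s$ of $H$ and every node of $H_s'$ within distance $s$ of $H'$, the separation $4s+1$ between $H$ and $H'$ forces $H_s$ and $H_s'$ to be separated by at least $2s+1$, so $(H_s,H_s')\in\mathcal{P}(|H_s|,|H_s'|,2s+1)$ and Proposition~\ref{prop:D_psiDep} applies to these composed functions; invoking it and then bounding the cardinalities by $|H_s|\le h\,\overline{N}_{A_n}(s)$ and $|H_s'|\le h'\,\overline{N}_{A_n}(s)$ reorganizes the resulting constants into $\psi_{h,h}^T(f,f')$ times the second summand $\overline{N}_{A_n}(s)\,\theta_{n,s}^D$ of $\theta_{n,s}^T$.

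The main obstacle is purely the constant bookkeeping in this last step: one has to pass the Lipschitz modulus through the vector-valued map $\sigma_{n,i}^T=(\sigma_{n,i}^{T,1},\ldots,\sigma_{n,i}^{T,\dim(\mathcal{T})})$---reconciling the Euclidean norm on $\mathbb{R}^{\dim(\mathcal{T})}$ that enters $d_h$ with the per-component Lipschitz constants that define $\overline{\sigma}^T$---and simultaneously control both the cardinality of the enlarged treatment neighborhood $H_s$ and the overlap multiplicities of the neighborhoods $\{N_{A_n}(i,s)\}_{i\in H}$, so that the $\overline{N}_{A_n}(s)$ factors do not proliferate beyond what is absorbed into $\theta_{n,s}^T$. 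Everything else is lighter than in Proposition~\ref{prop:Y_psiDep}, since no separate treatment of the conditionally i.i.d.\ shocks $\nu_{n,i}$ and $\{e_{n,i}^{(t)}\}$ is needed here; the $\psi$-dependence of $\mathbf{D}_n$ established in Proposition~\ref{prop:D_psiDep} does all of the work.
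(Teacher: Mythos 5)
Your proposal is correct and follows exactly the route the paper intends: the paper omits this proof, stating it follows the same structure as Proposition \ref{prop:Y_psiDep}, and your argument reproduces that structure (triangle-inequality coupling with $T_{n,i}^{(s)}$, Lipschitz bounds for the remainders, and $\psi$-dependence of the localized term via the enlarged neighborhoods $H_s$, $H_s'$ separated by at least $2s+1$), with the correct simplification that since $T_{n,i}$ depends only on $\mathbf{D}_n$ one can invoke Proposition \ref{prop:D_psiDep} directly rather than the stacked-vector Lemma \ref{app_lemma:WD_psiDep}.
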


We omit the proof of this proposition, as it follows the same structure as the proof of 
Proposition~\ref{prop:Y_psiDep}. Finally, we are ready to establish the conditional \(\psi\)-dependence of \(\{W_{n,i}\}\) with \(W_{n,i} = (Y_{n,i}, T_{n,i}, D_{n,i})\).

\begin{prop}\label{prop:W_psiDep}
    Suppose Assumptions in Proposition \ref{prop:D_psiDep}, \ref{prop:Y_psiDep}, and \ref{prop:T_psiDep} hold.
    For all $H,H'\in \mathcal{P}(h,h',4s+1)$, and all $f\in \mathcal{L}_{2+\dim(\mathcal{T}),h}, f'\in \mathcal{L}_{2+\dim(\mathcal{T}),h'}$, we have
    \begin{equation}
        \left| \Cov(f(\bfW_{n,H}),f'(\bfW_{n,H'})|\mathcal{C}_n) \right| \leq \psi_{h,h}^W(f,f') \theta_{n,s}^W\hspace{2mm} a.s.,
    \end{equation}
    where
$
        \theta_{n,s}^W=2(\underset{1 \leq i \leq n}{\max}E[| Y_{n,i}-Y_{n,i}^{(s)}| |\mathcal{C}_n] + \underset{1 \leq i \leq n}{\max}E[\lVert T_{n,i}-T_{n,i}^{(s)}\rVert_2 |\mathcal{C}_n])+ \overline{N}_{A_n}(s) \times \theta_{n,s}^D
$$
        \psi_{h,h}^W(f,f')= h\lVert f' \rVert_{\infty} \mathrm{Lip}(f) (1+\overline{\sigma}^Y+\overline{\sigma}^T) + h' \lVert f \rVert_{\infty} \mathrm{Lip}(f')(1+\overline{\sigma}^Y+\overline{\sigma}^T),
$
    $\overline{N}_{A_n}(s)=\underset{1 \leq i \leq n}{\max}\, |N_{A_n}(i,s)|$,
$
        \overline{\sigma}^Y=\underset{n\geq 1}{\sup}\, \underset{1 \leq i \leq n}{\max}\, \mathrm{Lip}(\sigma_{n,i}^Y),
$
and
$
        \overline{\sigma}^T=\underset{n\geq 1}{\sup}\, \underset{1 \leq i \leq n}{\max}\, \sum_{r=1}^{\dim(\mathcal{T})} \mathrm{Lip}(\sigma_{n,i}^{T,r}).
$
\end{prop}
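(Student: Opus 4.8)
The plan is to adapt the truncation-and-telescoping scheme already used for Propositions~\ref{prop:D_psiDep}--\ref{prop:T_psiDep} to the stacked vector $W_{n,i}=(Y_{n,i},T_{n,i},D_{n,i})$ and then to recombine the three component-wise estimates. Fix $H,H'\in\mathcal{P}(h,h',4s+1)$ and bounded Lipschitz $f,f'$. The naive single-step surrogate $(Y_{n,i}^{(s)},T_{n,i}^{(s)},D_{n,i}^{(s)})$ is not yet adequate, because $Y_{n,i}^{(s)}$ and $T_{n,i}^{(s)}$ retain the \emph{untruncated} treatments $D_{n,j}$ for $j\in N_{A_n}(i,s)$, and those depend on the whole shock vector $\bm{\nu}_n$. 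I would therefore introduce a doubly truncated surrogate $W_{n,i}^{\ast}=(Y_{n,i}^{(s,s)},T_{n,i}^{(s,s)},D_{n,i}^{(s)})$, where $Y_{n,i}^{(s,s)}$ and $T_{n,i}^{(s,s)}$ are obtained from $Y_{n,i}^{(s)}$ and $T_{n,i}^{(s)}$ by additionally replacing every $D_{n,j}$ with $j\in N_{A_n}(i,s)$ by its own localization $D_{n,j}^{(s)}=\sigma_{n,j}^D(\bm{\nu}_n^{(s,j)})$. The key observation is that, conditional on $\mathcal{C}_n$ (which fixes $A_n$, $\bfX_n$, $\bfZ_n$), $W_{n,i}^{\ast}$ is a measurable function of $\{\nu_{n,k}:k\in N_{A_n}(i,2s)\}$ and $\{e_{n,i}^{(t)}\}_{t}$ only; hence for $i\in H$, $i'\in H'$ the relevant index sets are disjoint, since $l_{A_n}(i,i')\ge 4s+1$ forces $N_{A_n}(i,2s)\cap N_{A_n}(i',2s)=\emptyset$ and $H\cap H'=\emptyset$. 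Consequently $\bfW_{n,H}^{\ast}$ and $\bfW_{n,H'}^{\ast}$ are conditionally independent given $\mathcal{C}_n$, so $\Cov(f(\bfW_{n,H}^{\ast}),f'(\bfW_{n,H'}^{\ast})\mid\mathcal{C}_n)=0$ a.s.

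Next I would telescope:
\begin{align*}
\Cov(f(\bfW_{n,H}),f'(\bfW_{n,H'})\mid\mathcal{C}_n)
&=\Cov\!\big(f(\bfW_{n,H})-f(\bfW_{n,H}^{\ast}),\,f'(\bfW_{n,H'})\mid\mathcal{C}_n\big)\\
&\quad+\Cov\!\big(f(\bfW_{n,H}^{\ast}),\,f'(\bfW_{n,H'})-f'(\bfW_{n,H'}^{\ast})\mid\mathcal{C}_n\big)\\
&\quad+\Cov\!\big(f(\bfW_{n,H}^{\ast}),\,f'(\bfW_{n,H'}^{\ast})\mid\mathcal{C}_n\big),
\end{align*}
with the last term vanishing by the previous step. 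For each of the two remaining terms I would apply $|\Cov(U,V\mid\mathcal{C}_n)|\le 2\lVert V\rVert_{\infty}E[|U|\mid\mathcal{C}_n]$, the Lipschitz bound $|f(\bfW_{n,H})-f(\bfW_{n,H}^{\ast})|\le\mathrm{Lip}(f)\sum_{i\in H}\lVert W_{n,i}-W_{n,i}^{\ast}\rVert_2$, and the block inequality $\lVert W_{n,i}-W_{n,i}^{\ast}\rVert_2\le|Y_{n,i}-Y_{n,i}^{(s,s)}|+\lVert T_{n,i}-T_{n,i}^{(s,s)}\rVert_2+|D_{n,i}-D_{n,i}^{(s)}|$. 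The $Y$-block is split as $|Y_{n,i}-Y_{n,i}^{(s,s)}|\le|Y_{n,i}-Y_{n,i}^{(s)}|+|Y_{n,i}^{(s)}-Y_{n,i}^{(s,s)}|$, and the last difference is bounded by $\mathrm{Lip}(\sigma_{n,i}^Y)\sum_{j\in N_{A_n}(i,s)}|D_{n,j}-D_{n,j}^{(s)}|$, whose conditional expectation is at most $\overline{\sigma}^Y\,\overline{N}_{A_n}(s)\,\theta_{n,s}^D/2$; the $T$-block is handled identically with $\overline{\sigma}^T$, and the $D$-block contributes $\theta_{n,s}^D/2$ directly. Collecting these and using $\overline{N}_{A_n}(s)\ge 1$ to absorb the cross terms yields $2\max_i E[\lVert W_{n,i}-W_{n,i}^{\ast}\rVert_2\mid\mathcal{C}_n]\le(1+\overline{\sigma}^Y+\overline{\sigma}^T)\,\theta_{n,s}^W$, so that the first telescoping term is at most $h\lVert f'\rVert_{\infty}\mathrm{Lip}(f)(1+\overline{\sigma}^Y+\overline{\sigma}^T)\theta_{n,s}^W$ and the second at most $h'\lVert f\rVert_{\infty}\mathrm{Lip}(f')(1+\overline{\sigma}^Y+\overline{\sigma}^T)\theta_{n,s}^W$; their sum is exactly $\psi_{h,h}^W(f,f')\,\theta_{n,s}^W$.

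The main obstacle is precisely this nested dependence: a single truncation does not decouple the blocks attached to $H$ and $H'$, so one must truncate the interior treatments a second time and keep careful track of the accumulated error, which is what produces both the extra additive term $\overline{N}_{A_n}(s)\times\theta_{n,s}^D$ inside $\theta_{n,s}^W$ and the amplification factor $(1+\overline{\sigma}^Y+\overline{\sigma}^T)$ inside $\psi_{h,h}^W$, and which also forces the larger separation $4s+1$ (so that the $2s$-neighborhoods around $H$ and $H'$ stay disjoint). Two minor points should be verified en route: first, that conditional on $\mathcal{C}_n$ the primitives $\{\nu_{n,i}\}_i$ and $\{(e_{n,i}^{(t)})_t\}_i$ are mutually independent across $i$, so that disjointness of index sets genuinely yields the vanishing covariance in the last telescoping term; and second, that $\mathrm{Lip}(\sigma_{n,i}^Y)$ and $\sum_r\mathrm{Lip}(\sigma_{n,i}^{T,r})$ are the right constants controlling how a unit change in a single coordinate of the treatment argument propagates, which is immediate from the definition of the metric $d_h$ as a sum of Euclidean block norms.
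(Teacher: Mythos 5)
Your proposal is correct, and it reaches exactly the stated constants, but it takes a different route from the paper at the decisive step. The paper keeps the \emph{single}-truncation surrogate $W_{n,i}^{(s)}=(Y_{n,i}^{(s)},T_{n,i}^{(s)},D_{n,i}^{(s)})$, telescopes as you do, and then explicitly accepts that the residual term $\Cov(\xi^{(s)},\zeta^{(s)}\mid\mathcal{C}_n)$ does \emph{not} vanish; it bounds that term by viewing $(Y_{n,i}^{(s)},T_{n,i}^{(s)})_{i\in H}$ as Lipschitz images of the stacked primitive vector $\bfW^D_{N_{A_n}(H,s)}$ and invoking Lemma~\ref{app_lemma:WD_psiDep} (the $\psi$-dependence of $\{(D_{n,i},\nu_{n,i},\{e^{(t)}_{n,i}\})\}$ at separation $2s+1$), which is where the extra $\overline{N}_{A_n}(s)(\overline{\sigma}^Y+\overline{\sigma}^T)\,\theta^D_{n,s}$ contribution arises. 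You instead localize a second time, replacing the interior treatments by $D_{n,j}^{(s)}$ so that your surrogate $W^*_{n,i}$ is measurable with respect to shocks in $N_{A_n}(i,2s)$ only; the residual covariance then vanishes exactly under the $4s+1$ separation and the conditional independence of $\{\nu_{n,i}\}$ and $\{e^{(t)}_{n,i}\}$ across $i$ (which the paper maintains), and the same $\overline{N}_{A_n}(s)\theta^D_{n,s}$ term reappears as the approximation error $|Y^{(s)}_{n,i}-Y^{(s,s)}_{n,i}|\le \Lip(\sigma^Y_{n,i})\sum_{j\in N_{A_n}(i,s)}|D_{n,j}-D^{(s)}_{n,j}|$ and its $T$-analogue. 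Your bookkeeping is sound: the final absorption step using $\overline{N}_{A_n}(s)\ge 1$ and $(1+\overline{\sigma}^Y+\overline{\sigma}^T)\ge 1$ does land on $\psi^W_{h,h}(f,f')\,\theta^W_{n,s}$, though it is marginally lossier than the paper's version, which keeps the truncation error and the residual-covariance bound as separate summands. What each route buys: yours avoids Lemma~\ref{app_lemma:WD_psiDep} and the composed-function Lipschitz computation entirely, at the cost of introducing and tracking a doubly truncated surrogate; the paper's reuses machinery already built for Propositions~\ref{prop:Y_psiDep} and~\ref{prop:T_psiDep} and keeps the surrogate aligned with the objects appearing in $\theta^W_{n,s}$, so the coefficients read off directly. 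Both correctly explain why the separation $4s+1$ (rather than $2s+1$) is needed.
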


Observe that \(W_{n,i}\) is a vector formed by stacking \(Y_{n,i}\), \(T_{n,i}\), and 
\(D_{n,i}\), with the first two components constructed from the last. 
It therefore follows that the conditional \(\psi\)-dependence of \(\{W_{n,i}\}\) 
inherits both the functional form and the \(\theta\)-sequence from those of 
\(\{Y_{n,i}\}\) and \(\{T_{n,i}\}\). As such, the result immediately follows from propositions \ref{prop:D_psiDep}, \ref{prop:Y_psiDep}, and \ref{prop:T_psiDep}.

\subsection{Asymptotic Properties}

We now establish the asymptotic properties of our estimator and discuss their implications for statistical inference. We extend the analysis of \citet{Sasaki:25} to derive large-sample properties conditional on the common sub-$\sigma$-field $\mathcal{C}$. The key advantage of this conditional framework is that it allows the underlying network-generating process to be non-deterministic. As shown by KMS, this approach accommodates a broad class of models, including network formation models, random fields on graphs, and conditional dependency graphs.

We are going to use the following version of the definition of stable convergence, as given by \citet[Definition~A3.2.III, p.~419]{Daley2:03}.

\begin{defi}\label{def:stable_cov}
    Let $\{W_n\}_{n\geq 1}$ and $W$ be $\mathbbm{R}^{p}$-valued random vectors on a probability space $(\Omega,\mathcal{F},P)$ and $\mathcal{C}\subseteq\mathcal{F}$ be sub-$\sigma$-field. We say
    \begin{equation}
        W_n\overset{d}{\to} W\hspace{2mm} (\mathcal{C}\stably)
    \end{equation}
    if for all $U \in \mathcal{F}$ and all Borel sets $A$ from Borel $\sigma$-field $\mathcal{B}(\mathbbm{R}^{p})$ with $P(W_n\in \bdd A)=0$,
    \begin{equation}
        \underset{n\to \infty}{\lim} P(\{W_n\in A\} \cap U) = P(\{W\in A\} \cap U)
    \end{equation}
    holds as $n\to \infty$, where $\bdd A$ denotes the boundary of set $A$.
\end{defi}

There exist several equivalent formulations of stable convergence, as noted by \citet{Daley2:03} and \citet{Häusler2:15}. We summarize these equivalences in Proposition~\ref{app_prop:Daley_stableEQU} in Appendix~\ref{sec:useful_propositions}. A special case arises when $W$ is $\mathcal{C}$-measurable, in which case stable convergence is equivalent to convergence in probability \citep[][Corollary~3.6, p.~26]{Häusler2:15}. Since the true parameter vector $\bm{\beta}_0$ is treated as fixed, this corollary implies the usual convergence-in-probability result for consistency.

We begin by establishing the consistency of the GMM estimator. Let $N_{A_n}^{\partial}(i,s)$ denote the set of nodes that are exactly $s$ path-lengths away from node $i$, and define
\begin{equation}
    \delta^{\partial}_n(s) = \frac{1}{n} \sum_{i=1}^n \big| N_{A_n}^{\partial}(i,s) \big|.
\end{equation}
With this notation, we impose the following assumptions.

\begin{asm}\label{asm:regularityLLN}
\begin{itemize}
    \item[]
    \item[(i)] There exists $\epsilon>0$ such that $\lVert W_{n,i}\rVert_{\mathcal{C}_n,L^{1+\epsilon}}<\infty\hspace{2mm} a.s.$
    \item[(ii)] $\frac{1}{n}\sum_{s\geq 1} \delta^{\partial}_n(s)\theta_{n,s}^W\to 0\hspace{2mm} a.s.$
    \item[(iii)] $\bm{\beta}\in \Theta$ and $\Theta$ is a compact subset of $\mathbbm{R}^{\dim(g)}$.
    \item[(iv)] For all $\bm{\beta}\in \Theta$ and for all $(x,\bfz)$ in the support of $(X_{n,i},\bfZ)$ for all $n$, $\bm{g}((\cdot,x,\bfz),\bm{\beta})$ belongs to $\mathcal{L}_{3,1}$ .
    \item[(v)](Uniform Equicontinuity) There exists $\overline{L}>0$ such that for all $(w,x,\bfz)$ in the support of $(W_{n,i},X_{n,i},\bfZ)$ for all $n$ and for all $\bm{\beta},\bm{\beta}'\in\Theta$,
    \begin{equation}\label{equ:unifequ}
        \lVert \bm{g}((w,x,\bfz),\bm{\beta}) - \bm{g}((w,x,\bfz),\bm{\beta}')\rVert_2 \leq \overline{L}\lVert \bm{\beta}-\bm{\beta}'  \rVert_2.
    \end{equation}
\end{itemize}    
\end{asm}

Parts~(i) and~(ii) of Assumption~\ref{asm:regularityLLN} correspond to Assumptions~3.1 and~3.2 of KMS, respectively. 
These conditions are used to establish the pointwise law of large numbers. 
Notably, Assumption~\ref{asm:regularityLLN}(ii) characterizes the trade-off between network density and cross-sample dependence. 
A standard requirement for the consistency of nonlinear GMM estimators is the uniform convergence of the GMM objective function \citep[e.g.,][]{Newey2:94, Vaart:98}. 
Accordingly, we invoke Assumptions~\ref{asm:regularityLLN}(iii)--(v) in addition to construct a $\delta$-net and extend the pointwise law of large numbers to a uniform law of large numbers under network interference similarly to \citet{Sasaki:25}.

Recall from~\eqref{equ:2stepGMM_end} that our sample objective function is given by 
\[
\widehat{Q}_n(\bm{\beta}) = \widehat{\bm{g}}_n(\bm{\beta})'\widehat{\Xi}_n\widehat{\bm{g}}_n(\bm{\beta}).
\]
Since our model explicitly depends on the network structure, we consider a sequence of objective functions that are $\mathcal{C}_n$-measurable. 
The corresponding population objective function $Q_n$ is defined by
\begin{equation}
    Q_n(\bm{\beta}) = \bm{g}_n(\bm{\beta})'\Xi_n\bm{g}_n(\bm{\beta}),
\end{equation}
where $\bm{g}_n(\bm{\beta}) = E[\bm{g}((W_{n,i},X_{n,i},\bfZ), \bm{\beta}) \mid \mathcal{C}_n]$, and $\Xi_n$ denotes a random matrix that is $\mathcal{C}_n$-measurable, has finite elements, and is positive definite almost surely.

As in standard nonlinear GMM or M estimation frameworks, a crucial requirement is the uniform law of large numbers (ULLN) for the objective function \citep[e.g.,][]{Newey2:94, Vaart:98}. 
Following the conditional framework of KMS, we establish the conditional ULLN for the objective function by leveraging the conditional moment results of \citet{Sasaki:25}. 
Lemma~\ref{app_lemma:QnULLN} in the appendix states this result formally. 
Specifically, for all $\epsilon > 0$,
\begin{equation}
    P\!\left(\left. \sup_{\bm{\beta} \in \Theta} 
    \big| \widehat{Q}_n(\bm{\beta}) - Q_n(\bm{\beta}) \big| 
    > \epsilon \,\right|\, \mathcal{C}_n \right) \to 0 
    \quad \text{a.s.}
\end{equation}
From this almost sure convergence, it follows that 
\[
\sup_{\bm{\beta} \in \Theta} 
\big| \widehat{Q}_n(\bm{\beta}) - Q_n(\bm{\beta}) \big|
\overset{P}{\to} 0.
\]

Define the true parameter vector as the solution to the sequential conditional moment conditions in the sense that
\begin{equation}
    \bm{g}_n(\bm{\beta}) = E\!\left[ g((W_{n,i},X_{n,i},\bfZ), \bm{\beta}) \mid \mathcal{C}_n \right] = 0 
    \quad \text{a.s. for all } n \geq 1 
    \text{ if and only if } \bm{\beta} = \bm{\beta}_0.
\end{equation}
Here, recall that the identification (i.e., the ``only if'' implication) is justified on economic grounds in Section \ref{sec:identification}.
Let $\lVert \cdot \rVert_F$ denote the Frobenius norm, defined by 
$\lVert B \rVert_F = \sqrt{\mathrm{trace}(B'B)}$ for any real matrix $B$. 
The following theorem establishes that the GMM estimator converges in probability to the true parameter value.

\begin{thm}\label{thm:GMM_consistency}
    Suppose that Assumptions \ref{asm:unique_equ}(i) and (iii), \ref{asm:rank_cond}, \ref{asm:2nd_stage}(ii), \ref{asm:simplify_model}, \ref{asm:est}, \ref{asm:regularityLLN} hold, $\underset{n\geq 1}{\sup} E[\lVert \widehat{\Xi}_n \rVert_F|\mathcal{C}_n]<\infty$, and $E[\lVert\widehat{\Xi}_n -\Xi_n\rVert_F |\mathcal{C}_n] \to 0\,\,a.s $, where $\Xi_n$ is $\mathcal{C}_n$-measurable with finite elements and positive definite a.s. Then, for all $\epsilon>0$, we have
    \begin{equation}
        P(\lVert\widehat{\bm{\beta}} -\bm{\beta}_0\rVert_2> \epsilon |\mathcal{C}_n) \to 0\,\, a.s., 
    \end{equation}
    and thus we have
    \begin{equation}
        \widehat{\bm{\beta}} \overset{P}{\to} \bm{\beta}_0.
    \end{equation}
\end{thm}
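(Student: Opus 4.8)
The plan is to run the classical consistency argument for extremum estimators \citep{Newey2:94,Vaart:98}, but carried out conditionally on $\mathcal{C}_n$ and for the triangular array: the role of the uniform law of large numbers is played by Lemma~\ref{app_lemma:QnULLN}, and the role of the ``well-separated unique minimizer'' is played by a uniform-in-$n$ strengthening of the identification condition. Throughout I would work with the conditional probabilities $P(\cdot\mid\mathcal{C}_n)$ and pass to the unconditional statement only at the very end.

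First I would invoke Lemma~\ref{app_lemma:QnULLN}, which under Assumption~\ref{asm:regularityLLN}, the conditional $\psi$-dependence of $\{W_{n,i}\}$ established in Proposition~\ref{prop:W_psiDep}, and the stated weight-matrix conditions ($\sup_n E[\lVert\widehat{\Xi}_n\rVert_F\mid\mathcal{C}_n]<\infty$, $E[\lVert\widehat{\Xi}_n-\Xi_n\rVert_F\mid\mathcal{C}_n]\to 0$ a.s., $\Xi_n$ positive definite a.s.) gives $\sup_{\bm{\beta}\in\Theta}\lvert\widehat{Q}_n(\bm{\beta})-Q_n(\bm{\beta})\rvert\to 0$ a.s. conditional on $\mathcal{C}_n$. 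The weight-matrix part is routine: since $\bm g((\cdot,x,\bfz),\bm{\beta})\in\mathcal{L}_{3,1}$ is bounded (Assumption~\ref{asm:regularityLLN}(iv)), $\widehat{\bm g}_n$ is uniformly bounded, so $\lvert\widehat{\bm g}_n'\widehat{\Xi}_n\widehat{\bm g}_n-\widehat{\bm g}_n'\Xi_n\widehat{\bm g}_n\rvert\le\lVert\widehat{\bm g}_n\rVert_2^2\,\lVert\widehat{\Xi}_n-\Xi_n\rVert_F\to 0$ (conditionally, by conditional Markov), after which a $\delta$-net on the compact $\Theta$ built from the uniform Lipschitz bound of Assumption~\ref{asm:regularityLLN}(v), combined with the pointwise conditional LLN of KMS at the net points, delivers the ULLN for $\widehat{\bm g}_n$ and hence for $\widehat{Q}_n$.

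Next I would deduce $Q_n(\widehat{\bm{\beta}})\to 0$ a.s. conditional on $\mathcal{C}_n$. Since $\bm g_n(\bm{\beta}_0)=0$ a.s. by the identification condition, $Q_n(\bm{\beta}_0)=0$, so $\widehat{Q}_n(\bm{\beta}_0)\to 0$ a.s. by the previous step; the optimality inequality $\widehat{Q}_n(\widehat{\bm{\beta}})\le\widehat{Q}_n(\bm{\beta}_0)$ together with $Q_n(\widehat{\bm{\beta}})\le\widehat{Q}_n(\widehat{\bm{\beta}})+\sup_{\bm{\beta}\in\Theta}\lvert\widehat{Q}_n-Q_n\rvert$ then gives $0\le Q_n(\widehat{\bm{\beta}})\le\widehat{Q}_n(\bm{\beta}_0)+\sup_{\bm{\beta}\in\Theta}\lvert\widehat{Q}_n-Q_n\rvert\to 0$. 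The remaining ingredient is well-separatedness uniformly in $n$: for each $\epsilon>0$, with $\eta_n(\epsilon)=\inf\{Q_n(\bm{\beta}):\bm{\beta}\in\Theta,\ \lVert\bm{\beta}-\bm{\beta}_0\rVert_2\ge\epsilon\}$, one needs $\liminf_n\eta_n(\epsilon)>0$ a.s. Using positive-definiteness, $Q_n(\bm{\beta})\ge\lambda_{\min}(\Xi_n)\lVert\bm g_n(\bm{\beta})\rVert_2^2$; the identification condition gives $\bm g_n(\bm{\beta})\neq 0$ whenever $\bm{\beta}\neq\bm{\beta}_0$; and Assumption~\ref{asm:regularityLLN}(v) makes $\bm{\beta}\mapsto\bm g_n(\bm{\beta})$ equi-Lipschitz across $n$ on the compact $\Theta$. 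A subsequence/contradiction argument then rules out $\eta_n(\epsilon)\to 0$: along a convergent subsequence $\bm{\beta}_n\to\bar{\bm{\beta}}\neq\bm{\beta}_0$ with $\bm g_n(\bm{\beta}_n)\to 0$, equi-Lipschitzness forces a zero of the limiting moment at $\bar{\bm{\beta}}\neq\bm{\beta}_0$, contradicting identification.

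Finally, on $\{\lVert\widehat{\bm{\beta}}-\bm{\beta}_0\rVert_2>\epsilon\}$ we have $Q_n(\widehat{\bm{\beta}})\ge\eta_n(\epsilon)$, so for large $n$ this event lies inside $\{Q_n(\widehat{\bm{\beta}})\ge c(\epsilon)\}$ for some $c(\epsilon)>0$, whose conditional probability tends to $0$ a.s. by the previous step; hence $P(\lVert\widehat{\bm{\beta}}-\bm{\beta}_0\rVert_2>\epsilon\mid\mathcal{C}_n)\to 0$ a.s., and, since these conditional probabilities are bounded by $1$, the bounded convergence theorem gives $P(\lVert\widehat{\bm{\beta}}-\bm{\beta}_0\rVert_2>\epsilon)\to 0$, i.e., $\widehat{\bm{\beta}}\overset{P}{\to}\bm{\beta}_0$. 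I expect the main obstacle to be the uniform-in-$n$ separation $\liminf_n\eta_n(\epsilon)>0$: the per-$n$ identification condition alone does not preclude the population criteria $Q_n$ from flattening near a spurious point as $n$ grows, so one must exploit the equi-Lipschitz property of Assumption~\ref{asm:regularityLLN}(v), and, if the abstract subsequence route is not clean, the explicit and stable form of $\bm g_n$ under the just-identified linear-in-parameters specification of Assumptions~\ref{asm:simplify_model} and~\ref{asm:est} to obtain a uniform lower bound of the form $\lVert\bm g_n(\bm{\beta})\rVert_2\ge c\,\lVert\bm{\beta}-\bm{\beta}_0\rVert_2$ directly.
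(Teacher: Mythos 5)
Your overall architecture is the same as the paper's: invoke the conditional ULLN of Lemma~\ref{app_lemma:QnULLN} (itself built from \citet{Sasaki:25} plus the weight-matrix conditions), combine it with the argmin basic inequality to control $\widehat{Q}_n(\widehat{\bm{\beta}})$, convert closeness of criterion values into closeness of parameters via a well-separated-minimum condition, and pass from the conditional statement to $\widehat{\bm{\beta}}\overset{P}{\to}\bm{\beta}_0$ by iterated expectations/bounded convergence. The paper carries this out in the style of \citet[Theorem 5.7]{Vaart:98}, decomposing $\widehat{Q}_n(\widehat{\bm{\beta}})-\widehat{Q}_n(\bm{\beta}_0)$ into the two ULLN remainders $r_{1n}$, $r_{2n}$ (using that the infimum is $1$-Lipschitz in the sup-norm), while you use the Newey--McFadden bookkeeping with $Q_n(\bm{\beta}_0)=0$; these are equivalent and both rest on the same Lemma~\ref{app_lemma:QnULLN}.

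The genuine gap is in your treatment of the separation step. Your subsequence/contradiction argument for $\liminf_n\eta_n(\epsilon)>0$ does not close under the stated assumptions: there is no ``limiting moment function'' to speak of, since the conditional moments $\bm{g}_n(\cdot)=E[\bm{g}(\cdot,\bm{\beta})\mid\mathcal{C}_n]$ form a sequence of random, $n$-dependent functions with no assumed limit, and the identification condition is the \emph{sequential} statement that $\bm{g}_n(\bm{\beta})=0$ a.s.\ for all $n\geq 1$ if and only if $\bm{\beta}=\bm{\beta}_0$. Equi-Lipschitzness gives you $\bm{g}_n(\bar{\bm{\beta}})\to 0$ along the subsequence, but the sequential identification condition does not exclude $\bm{g}_n(\bar{\bm{\beta}})\to 0$ at a point $\bar{\bm{\beta}}\neq\bm{\beta}_0$ (it only rules out exact zeros at every $n$), so no contradiction is obtained. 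In addition, your lower bound $Q_n(\bm{\beta})\ge\lambda_{\min}(\Xi_n)\lVert\bm{g}_n(\bm{\beta})\rVert_2^2$ is only useful if $\lambda_{\min}(\Xi_n)$ is bounded away from zero uniformly in $n$ (a.s.), which is not among the hypotheses; likewise the explicit bound $\lVert\bm{g}_n(\bm{\beta})\rVert_2\ge c\lVert\bm{\beta}-\bm{\beta}_0\rVert_2$ would require uniform-in-$n$ rank/eigenvalue conditions on the conditional design matrices that the theorem does not impose. The paper handles this step differently: it simply asserts the well-separation display \eqref{app_equ:UniqMini} for each $n$ as a consequence of $\bm{\beta}_0$ uniquely minimizing $Q_n$ on the compact $\Theta$ (continuity of $\bm{g}_n$ in $\bm{\beta}$ following from Assumption~\ref{asm:regularityLLN}(v)) and then pairs $\delta$ with $\epsilon$ in the final probability bound, rather than attempting to prove a uniform-in-$n$ lower bound on $\eta_n(\epsilon)$. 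Your instinct that some uniformity across $n$ is implicitly being used is sound, but as written your derivation of it fails; to complete your route you would need either to add a uniform separation condition ($\liminf_n\eta_n(\epsilon)>0$ a.s.) as a hypothesis, or to exploit the just-identified linear structure of Assumptions~\ref{asm:simplify_model} and~\ref{asm:est} together with explicit uniform eigenvalue bounds --- neither of which is available from the assumptions as stated.
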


Next, we are going to establish the asymptotic normality of the GMM estimator. 
By Assumption~\ref{asm:regularityLLN} (iv) and the conditional $\psi$-dependence of $\{ W_{n,i} \}$, Corollary \ref{cor:cg_psidep} in Appendix \ref{sec:proofs:network} shows that
$c'\bm{g}((W_{n,i},X_{n,i},\bfZ), \bm{\beta}_0)$ is also conditionally $\psi$-dependent for any vector 
$c \in \mathbbm{R}^{\mathrm{\dim}(g)}$, with functional $\psi^{cg}$ and coefficients 
$\{ \theta_{n,s}^{cg} \}$, defined in Corollary \ref{cor:cg_psidep}. 
Let 
\[
S_n = \sum_{i \in \mathcal{N}_n} \bm{g}((W_{n,i},X_{n,i},\bfZ), \bm{\beta}_0)
\quad \text{and} \quad 
\Omega_{g,n} = \mathrm{Var}(S_n \mid \mathcal{C}_n).
\]
Given $c \in \mathbbm{R}^{\mathrm{dim}(g)}$, define 
$\sigma_n^2(c) = \mathrm{Var}(c'S_n \mid \mathcal{C}_n) = c'\Omega_{g,n}c$. 
For notational convenience, we also define
\begin{align*}
    \Delta_n(s,m;k) 
    &= \frac{1}{n} \sum_{i} 
    \max_{j \in N_n^{\partial}(i; s)} 
    \big| N_n(i; m) \setminus N_n(j; s-1) \big|^k
    \quad\text{and}
    \\
    c_n(s,m;k) 
    &= \inf_{\alpha > 1} 
    \left[ \Delta_n(s,m;k\alpha) \right]^{\frac{1}{\alpha}} 
    \left[ \delta_n^{\partial}\!\left(s; \frac{\alpha}{1-\alpha}\right) \right]^{1 - \frac{1}{\alpha}}.
\end{align*}
We now impose the following conditions to establish asymptotic normality.

\begin{asm} \label{asm:regularityCLT}
\begin{itemize}
    \item[]
    \item[(i)] For some $p>4$, $\sup_{n\geq 1} \max_{i=1,...,n} \lVert\bm{g}((W_{n,i},X_{n,i},\bfZ),\bm{\beta}_0)\rVert_{\mathcal{C}_n,L^p}<\infty$ a.s.
    \item[(ii)] Given $p>4$ from Assumption 5.2 (i), there exist some $r>p$ and $q>$ such that $1/p+1/q=1/r$ and there exists a positive sequence $m_n\to \infty$ such that for $k=1,2$
    \begin{equation}
    \begin{aligned}
        &\underset{c\in \mathbbm{R}^{\dim(g)}:\lVert c \rVert_q= 1}{\sup} \left|\frac{n}{\sigma_n^{2+k}(c)} \sum_{s\geq 0} c_n(s,m_n;k) (\theta_{n,s}^g)^{1-\frac{2+k}{p}}\right| \to_{a.s.} 0 
        \quad\text{and}\\
        &\underset{c\in \mathbbm{R}^{\dim(g)}:\lVert c \rVert_q= 1}{\sup} \left|\frac{n^2 (\theta_{n,m_n}^{\bm{g}})^{1-(1/p)}}{\sigma_n(c)}\right| \to_{a.s.} 0
        \quad\text{ as } n\to \infty.
    \end{aligned}
    \end{equation}
    \item[(iii)] There is an open set $\mathcal{N}_{\beta_0}\subset\Theta$ such that $\bm{\beta}_0\in\mathcal{N}_{\beta_0}$.
    \item[(iv)] Let $\widehat{G}_n(\bm{\beta})=\nabla_{\bm{\beta}}\widehat{\bm{g}}_n(\bm{\beta})$. There is $G_n(\bm{\beta})$ that is $\mathcal{C}_n$-measurable, of full rank, and continuous at $\bm{\beta}_0$ a.s. such that
    \begin{equation}
        \underset{\bm{\beta}\in\mathcal{N}_{\beta_0}}{\sup} \lVert \widehat{G}_n(\bm{\beta})-G_n(\bm{\beta}) \rVert\overset{P}{\to}0.
    \end{equation}
    \item[(v)] 
    $\Xi_n\overset{P}{\to}\Xi_0$, $n^{-1}\Omega_{g,n}\overset{P}{\to}\Omega_{g,0}$, and $G_n(\bm{\beta}_0) \overset{P}{\to} G_0(\bm{\beta}_0)=G_0$
    where $\Xi_0$, $\Omega_{g,0}$, and $G_0$ are $\mathcal{C}$-measurable. Moreover, $\Xi_0$ and $\Omega_{g,0}$ are positive definite and $G_0$ has full column rank a.s.
\end{itemize}
\end{asm}

By Assumptions~\ref{asm:regularityCLT} (i)–(ii) and the conditional $\psi$-dependence of 
$c'\bm{g}((W_{n,i},X_{n,i},\bfZ), \bm{\beta}_0)$, we extend the conditional central limit theorem in 
Theorem~3.2 of KMS to the multivariate case of $\bm{g}$ via the Cramér–Wold theorem. 
Assumptions~\ref{asm:regularityCLT} (iii)--(v) impose standard regularity conditions for 
nonlinear GMM estimators that are widely adopted in the literature 
\citep[][condition~(v) in Theorem~3.4]{Newey2:94}.

The following theorem establishes the conditional asymptotic normality for $\widehat\beta$.

\begin{thm}\label{thm:GMM_CLT}
   In addition to the assumptions invoked in Theorem \ref{thm:GMM_consistency}, suppose that Assumption \ref{asm:regularityCLT} holds. Then, we have
    \begin{equation*}
        \sqrt{n}(\widehat{\bm{\beta}}-\bm{\beta}) \overset{d}{\to}\Omega_{\beta,0}^{1/2} Z \quad (\mathcal{C}\stably),
    \end{equation*}
    where 
    $
        \Omega_{\beta,0} = (G_0'\Xi_0 G_0)^{-1}G_0'\Xi_0\Omega_{g,0}\Xi_0 G_0 (G_0'\Xi_0 G_0)^{-1}
    $
    is $\mathcal{C}$-measurable, positive definite a.s. and $Z$ follows $N(0,I_{\dim(g)})$, which is independent of $\mathcal{C}$ and thus of $\Omega_{\beta,0}$.
\end{thm}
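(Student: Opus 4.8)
The plan is to follow the standard linearization argument for nonlinear GMM, adapted to conditional $\psi$-dependence and to stable convergence. By Theorem~\ref{thm:GMM_consistency}, $\widehat{\bm{\beta}}\overset{P}{\to}\bm{\beta}_0$, so by Assumption~\ref{asm:regularityCLT}(iii) the event $\{\widehat{\bm{\beta}}\in\mathcal{N}_{\beta_0}\}$ has probability approaching one, and on this event $\widehat{\bm{\beta}}$ satisfies the first-order condition $\widehat{G}_n(\widehat{\bm{\beta}})'\widehat{\Xi}_n\widehat{\bm{g}}_n(\widehat{\bm{\beta}})=0$. A coordinate-wise mean-value expansion of $\widehat{\bm{g}}_n(\widehat{\bm{\beta}})$ about $\bm{\beta}_0$ gives $\widehat{\bm{g}}_n(\widehat{\bm{\beta}})=\widehat{\bm{g}}_n(\bm{\beta}_0)+\bar{G}_n(\widehat{\bm{\beta}}-\bm{\beta}_0)$, where the $k$th row of $\bar{G}_n$ is the $k$th row of $\widehat{G}_n$ evaluated at some $\bar{\bm{\beta}}_k$ on the segment joining $\widehat{\bm{\beta}}$ and $\bm{\beta}_0$, so that each $\bar{\bm{\beta}}_k$ is consistent for $\bm{\beta}_0$. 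Substituting and rearranging yields
\[
\sqrt{n}(\widehat{\bm{\beta}}-\bm{\beta}_0)
=-\bigl(\widehat{G}_n(\widehat{\bm{\beta}})'\widehat{\Xi}_n\bar{G}_n\bigr)^{-1}\widehat{G}_n(\widehat{\bm{\beta}})'\widehat{\Xi}_n\,n^{-1/2}S_n,
\]
valid with probability approaching one (the inverse exists once $\widehat{G}_n(\widehat{\bm{\beta}})'\widehat{\Xi}_n\bar{G}_n$ is close to the a.s.-invertible $G_0'\Xi_0 G_0$), where $S_n=\sum_{i\in\mathcal{N}_n}\bm{g}((W_{n,i},X_{n,i},\bfZ),\bm{\beta}_0)$.

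The first main step is the conditional CLT for $n^{-1/2}S_n$. For fixed $c\in\mathbbm{R}^{\dim(g)}$, Corollary~\ref{cor:cg_psidep} shows that $c'\bm{g}((W_{n,i},X_{n,i},\bfZ),\bm{\beta}_0)$ is conditionally $\psi$-dependent with the functional $\psi^{cg}$ and coefficients $\{\theta_{n,s}^{cg}\}$, and Assumptions~\ref{asm:regularityCLT}(i)--(ii) translate into the moment and mixing-rate conditions of Theorem~3.2 of KMS; the $\sup_{\lVert c\rVert_q=1}$ in~(ii) supplies uniformity over $c$, which is what the Cramér--Wold device requires, and non-degeneracy $\sigma_n^2(c)\asymp n\,c'\Omega_{g,0}c\to\infty$ for $c\neq 0$ follows from Assumption~\ref{asm:regularityCLT}(v). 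This gives, conditionally on $\mathcal{C}_n$, $\sigma_n^{-1}(c)\,c'S_n\overset{d}{\to}N(0,1)$; combining with $n^{-1}\Omega_{g,n}\overset{P}{\to}\Omega_{g,0}$ ($\mathcal{C}$-measurable) from Assumption~\ref{asm:regularityCLT}(v) and invoking the equivalent characterizations of stable convergence collected in Proposition~\ref{app_prop:Daley_stableEQU}, one upgrades this to $n^{-1/2}c'S_n\overset{d}{\to}(c'\Omega_{g,0}c)^{1/2}Z_c$ with $Z_c\sim N(0,1)$ independent of $\mathcal{C}$, $\mathcal{C}$-stably. The Cramér--Wold theorem then yields $n^{-1/2}S_n\overset{d}{\to}\Omega_{g,0}^{1/2}Z$ ($\mathcal{C}$-stably), with $Z\sim N(0,I_{\dim(g)})$ independent of $\mathcal{C}$.

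The second step controls the Jacobian and weighting factors and assembles the conclusion. By Assumption~\ref{asm:regularityCLT}(iv), $\sup_{\bm{\beta}\in\mathcal{N}_{\beta_0}}\lVert\widehat{G}_n(\bm{\beta})-G_n(\bm{\beta})\rVert\overset{P}{\to}0$; combined with continuity of $G_n$ at $\bm{\beta}_0$, the consistency of $\widehat{\bm{\beta}}$ and of each $\bar{\bm{\beta}}_k$, and $G_n(\bm{\beta}_0)\overset{P}{\to}G_0$ from Assumption~\ref{asm:regularityCLT}(v), this gives $\widehat{G}_n(\widehat{\bm{\beta}})\overset{P}{\to}G_0$ and $\bar{G}_n\overset{P}{\to}G_0$; likewise $\widehat{\Xi}_n\overset{P}{\to}\Xi_0$ (from $E[\lVert\widehat{\Xi}_n-\Xi_n\rVert_F\mid\mathcal{C}_n]\to 0$ a.s. and $\Xi_n\overset{P}{\to}\Xi_0$). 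Since $G_0$ has full column rank and $\Xi_0$ is positive definite a.s., the prefactor in the displayed identity converges in probability to the $\mathcal{C}$-measurable matrix $M_0:=-(G_0'\Xi_0 G_0)^{-1}G_0'\Xi_0$. Because stable convergence is preserved under multiplication by sequences converging in probability to $\mathcal{C}$-measurable limits (again Proposition~\ref{app_prop:Daley_stableEQU}), $\sqrt{n}(\widehat{\bm{\beta}}-\bm{\beta}_0)\overset{d}{\to}M_0\Omega_{g,0}^{1/2}Z$ ($\mathcal{C}$-stably). Conditionally on $\mathcal{C}$, $M_0\Omega_{g,0}^{1/2}Z$ is mean-zero Gaussian with variance $\Omega_{\beta,0}=(G_0'\Xi_0 G_0)^{-1}G_0'\Xi_0\Omega_{g,0}\Xi_0 G_0(G_0'\Xi_0 G_0)^{-1}$, which is $\mathcal{C}$-measurable and positive definite a.s. because $\Omega_{g,0}$ is positive definite and $G_0$ has full column rank; hence $M_0\Omega_{g,0}^{1/2}Z$ has the same (stable) law as $\Omega_{\beta,0}^{1/2}Z$ with $Z\sim N(0,I_{\dim(g)})$ independent of $\mathcal{C}$, which is the asserted limit.

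I expect the main obstacle to be the passage, within the first step, from the conditional-on-$\mathcal{C}_n$ CLT of KMS (whose conditioning field varies with $n$ and whose normalization $\sigma_n(c)$ is random and $c$-dependent) to genuine $\mathcal{C}$-stable convergence to a mixed normal, uniformly enough in $c$ to run Cramér--Wold. This requires (i) verifying carefully that Assumptions~\ref{asm:regularityCLT}(i)--(ii) imply the precise summability conditions of KMS Theorem~3.2 along the chosen bandwidth sequence $m_n$, and (ii) selecting the appropriate equivalent formulation of stable convergence from Proposition~\ref{app_prop:Daley_stableEQU} to absorb both the random variance and the reduction from $\mathcal{C}_n$ to $\mathcal{C}$. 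The remaining steps are routine applications of the continuous mapping theorem and Slutsky-type arguments for stable convergence.
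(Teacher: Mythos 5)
Your proposal is correct and follows essentially the same route as the paper's proof: the same first-order-condition/mean-value linearization yielding the sandwich representation, the same use of Corollary~\ref{cor:cg_psidep} with KMS's conditional CLT upgraded to $\mathcal{C}$-stable convergence via the bounded-test-function characterization in Proposition~\ref{app_prop:Daley_stableEQU}, the Cram\'er--Wold device for stable limits (Proposition~\ref{app_prop:KP2013_propA2}), and conditional Slutsky/continuous-mapping arguments to absorb the $\mathcal{C}$-measurable prefactor $-(G_0'\Xi_0G_0)^{-1}G_0'\Xi_0$. The only cosmetic differences are your coordinate-wise mean-value expansion (the paper uses a single intermediate point) and your attribution of the vector step to the generic Cram\'er--Wold theorem rather than explicitly to Proposition~\ref{app_prop:KP2013_propA2}(a), which is the precise result needed for mixed-normal stable limits.
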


Conditional convergence in distribution to normal distributions generally implies a non-normal unconditional limiting distribution. At first glance, this seems to preclude standard normal-based inference; nevertheless, valid statistical inference remains feasible.
Let $\widehat{\Omega}_{\beta}$ denote a consistent estimator of the limiting covariance matrix 
$\Omega_{\beta,0}$. 
We consider testing the $k$-dimensional null hypothesis 
$H_0\!: r(\bm{\beta}_0) = 0$ against the alternative 
$H_1\!: r(\bm{\beta}_0) \neq 0$, 
where $r(\cdot)$ is continuously differentiable in a neighborhood of $\bm{\beta}_0$. 
Let 
$R(\bm{\beta}) = \nabla_{\bm{\beta}} r(\bm{\beta})$ 
be a $\mathrm{dim}(g) \times k$ matrix with full column rank, where $k \leq \mathrm{dim}(g)$. 
The Wald statistic is then given by
\begin{equation}
    T_n = n \cdot r(\widehat{\bm{\beta}})' 
    \left( R(\widehat{\bm{\beta}})' 
    \widehat{\Omega}_{\beta} 
    R(\widehat{\bm{\beta}}) \right)^{-1} 
    r(\widehat{\bm{\beta}}).
\end{equation}

To operationalize the result of Theorem \ref{thm:GMM_CLT}, the following theorem establishes the asymptotic validity of the Wald test, even when the variance–covariance matrix is random due to the conditioning on $\mathcal{C}$.

\begin{thm}\label{thm:GMM_Wald}
   Suppose that the assumptions invoked in Theorem \ref{thm:GMM_CLT} hold and there is $\widehat{\Omega}_\beta$ such that $\widehat{\Omega}_\beta\overset{P}{\to}\Omega_{\beta,0}$ and $P(\widehat{\Omega}_\beta \text{ is p.d})=1$. Then, we have
    \begin{equation*}
        \widehat{\Omega}_\beta^{-1/2}\sqrt{n}(\widehat{\bm{\beta}}-\bm{\beta}_0)\overset{d}{\to}Z\sim N(0,I_{\dim(g)}).
    \end{equation*}
    Moreover, 
    \begin{equation*}
        P(T_n>\chi_{k,1-\alpha}^2)\to \alpha
    \end{equation*}
    where $\chi_{k,1-\alpha}^2$ is the $1-\alpha$ quantile of the chi-square distribution with $k$ degrees of freedom.
\end{thm}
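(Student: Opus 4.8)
The plan is to exploit the fact that the limit in Theorem~\ref{thm:GMM_CLT} is mixed normal with a \(\mathcal{C}\)-measurable mixing matrix, and to combine this with two properties of stable convergence recorded in Proposition~\ref{app_prop:Daley_stableEQU}: a Slutsky-type lemma (stable convergence joined with convergence in probability to a \(\mathcal{C}\)-measurable limit yields joint stable convergence) and the continuous mapping theorem for stable convergence. First I would establish the studentized central limit theorem. Since \(\widehat{\Omega}_\beta\overset{P}{\to}\Omega_{\beta,0}\) with \(\Omega_{\beta,0}\) positive definite a.s.\ and \(\mathcal{C}\)-measurable, continuity of \(M\mapsto M^{-1/2}\) on the cone of positive definite matrices gives \(\widehat{\Omega}_\beta^{-1/2}\overset{P}{\to}\Omega_{\beta,0}^{-1/2}\), again \(\mathcal{C}\)-measurable. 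Combining Theorem~\ref{thm:GMM_CLT}, i.e.\ \(\sqrt{n}(\widehat{\bm{\beta}}-\bm{\beta}_0)\overset{d}{\to}\Omega_{\beta,0}^{1/2}Z\) \((\mathcal{C}\stably)\), with this convergence via the stable Slutsky lemma yields the joint statement \(\big(\sqrt{n}(\widehat{\bm{\beta}}-\bm{\beta}_0),\,\widehat{\Omega}_\beta^{-1/2}\big)\overset{d}{\to}\big(\Omega_{\beta,0}^{1/2}Z,\,\Omega_{\beta,0}^{-1/2}\big)\) \((\mathcal{C}\stably)\). Applying the continuous mapping theorem to \((x,M)\mapsto Mx\) gives \(\widehat{\Omega}_\beta^{-1/2}\sqrt{n}(\widehat{\bm{\beta}}-\bm{\beta}_0)\overset{d}{\to}\Omega_{\beta,0}^{-1/2}\Omega_{\beta,0}^{1/2}Z=Z\) \((\mathcal{C}\stably)\). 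Because \(Z\sim N(0,I)\) is independent of \(\mathcal{C}\), stable convergence to \(Z\) is ordinary convergence in distribution, which is the first display.

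Next I would handle the Wald statistic. Under \(H_0\), \(r(\bm{\beta}_0)=0\), so a first-order mean-value expansion gives \(\sqrt{n}\,r(\widehat{\bm{\beta}})=R(\bar{\bm{\beta}})'\sqrt{n}(\widehat{\bm{\beta}}-\bm{\beta}_0)\) for an intermediate point \(\bar{\bm{\beta}}\) between \(\widehat{\bm{\beta}}\) and \(\bm{\beta}_0\); consistency of \(\widehat{\bm{\beta}}\) and continuity of \(R\) give \(R(\bar{\bm{\beta}})\overset{P}{\to}R(\bm{\beta}_0)\), a non-random (hence \(\mathcal{C}\)-measurable) matrix. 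By the same Slutsky-and-continuous-mapping argument, \(\sqrt{n}\,r(\widehat{\bm{\beta}})\overset{d}{\to}R(\bm{\beta}_0)'\Omega_{\beta,0}^{1/2}Z\) \((\mathcal{C}\stably)\), while \(R(\widehat{\bm{\beta}})'\widehat{\Omega}_\beta R(\widehat{\bm{\beta}})\overset{P}{\to}\Sigma_0:=R(\bm{\beta}_0)'\Omega_{\beta,0}R(\bm{\beta}_0)\), which is \(\mathcal{C}\)-measurable and a.s.\ positive definite because \(R(\bm{\beta}_0)\) has full column rank and \(\Omega_{\beta,0}\) is positive definite. Joint stable convergence and the continuous mapping theorem applied to \((w,\Sigma)\mapsto w'\Sigma^{-1}w\) (continuous on the set where \(\Sigma\) is positive definite, which carries full limit probability) then give \(T_n\overset{d}{\to}\big(R(\bm{\beta}_0)'\Omega_{\beta,0}^{1/2}Z\big)'\Sigma_0^{-1}\big(R(\bm{\beta}_0)'\Omega_{\beta,0}^{1/2}Z\big)\) \((\mathcal{C}\stably)\).

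Finally I would identify this limit as \(\chi^2_k\), free of the random objects. Conditioning on \(\mathcal{C}\), \(Z\) is still \(N(0,I)\) by independence, so \(W:=R(\bm{\beta}_0)'\Omega_{\beta,0}^{1/2}Z\mid\mathcal{C}\sim N(0,\Sigma_0)\), and hence the quadratic form \(W'\Sigma_0^{-1}W\mid\mathcal{C}\sim\chi^2_k\) by the standard characterization of quadratic forms in Gaussian vectors. Since this conditional law does not depend on the value of \(\mathcal{C}\), the unconditional law of the limit is \(\chi^2_k\), i.e.\ \(T_n\overset{d}{\to}\chi^2_k\). As \(\chi^2_{k,1-\alpha}\) is a continuity point of the \(\chi^2_k\) distribution function, the portmanteau lemma gives \(P(T_n>\chi^2_{k,1-\alpha})\to P(\chi^2_k>\chi^2_{k,1-\alpha})=\alpha\).

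The conceptual heart — and the step I expect to require the most care — is the self-normalization: because the limiting covariance \(\Omega_{\beta,0}\) is \emph{random}, the ordinary Slutsky lemma does not license dividing by \(\widehat{\Omega}_\beta^{1/2}\), and it is precisely here that stable (rather than merely weak) convergence in Theorem~\ref{thm:GMM_CLT} is indispensable, since stability is what upgrades convergence in probability to a \(\mathcal{C}\)-measurable limit into joint stable convergence. The remaining work — showing the studentized quadratic form washes out the randomness — is the standard conditioning argument: conditional on \(\mathcal{C}\) the limit is a genuine non-random \(\chi^2_k\), and a conditional distribution that is constant across the conditioning field is also the unconditional one.
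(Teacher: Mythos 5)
Your proof is correct, and all the tools you invoke are available in the paper's appendix (the conditional Slutsky and continuous mapping results of Theorem~\ref{app_thm:cond_slutsky} and Proposition~\ref{app_prop:Daley_stableEQU}), but your route is organized differently from the paper's. The paper never lets a random covariance reach the limit: it pre-multiplies the expansion of $\sqrt{n}\,r(\widehat{\bm{\beta}})$ by the $\mathcal{C}$-measurable matrix $B=\bigl(R(\bm{\beta}_0)'\Omega_{\beta,0}R(\bm{\beta}_0)\bigr)^{-1/2}R(\bm{\beta}_0)'$, notes $B\Omega_{\beta,0}B'=I_k$, and invokes Proposition~\ref{app_prop:KP2013_propA2}(b) so that the studentized statistic converges ($\mathcal{C}\stably$) directly to a pivotal $Z_k\sim N(0,I_k)$; then $T_n\to\chi^2_k$ is an immediate continuous-mapping step, and your first display is obtained as the special case $r(\bm{\beta})=\bm{\beta}-\bm{\beta}_0$. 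You instead prove the studentized CLT first via the stable Slutsky/CMT combination, then let $\sqrt{n}\,r(\widehat{\bm{\beta}})$ converge to the mixed normal $R(\bm{\beta}_0)'\Omega_{\beta,0}^{1/2}Z$ and identify the limiting quadratic form as $\chi^2_k$ by conditioning on $\mathcal{C}$ at the very end (the conditional law is $\chi^2_k$ for a.e.\ realization, hence unconditionally). Both arguments are valid and rest on the same essential point you highlight—stable convergence is what licenses joining the estimator with the $\mathcal{C}$-measurable limits $\Omega_{\beta,0}$, $R(\bm{\beta}_0)$; the paper's pre-normalization buys a slightly shorter path (no conditioning step needed, pivotality is built in before squaring), while your version makes the ``randomness washes out in the studentized quadratic form'' mechanism more explicit. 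Two minor points to tidy up: the finite-sample invertibility of $R(\widehat{\bm{\beta}})'\widehat{\Omega}_\beta R(\widehat{\bm{\beta}})$ holds only with probability approaching one (handled by the assumption $P(\widehat{\Omega}_\beta \text{ p.d.})=1$ together with consistency and the rank condition on $R$), and your mean-value expansion should be read row-by-row for vector-valued $r$, exactly as the paper's $o_p(\lVert\widehat{\bm{\beta}}-\bm{\beta}_0\rVert_2)$ remainder formulation does; neither affects the conclusion.
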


This theorem is readily applicable in statistical inference in practice.

\section{Monte Carlo Simulation Studies}\label{sec:simulations}

This section investigates the finite-sample performance of our proposed method through simulation studies.

\subsection{The Model And Data-Generating Process}
We consider two types of network structures in our simulation study. The first is the \textit{ring network}, in which each node is connected to its two immediate neighbors. The second is the \textit{random geometric graph (RGG)}, constructed following \citet{Leung:22ECTA}.

For the RGG model, a network is generated as follows. Let $\xi_i=(\xi_{1i},\xi_{2i})$, $i=1,\ldots,n$, be independent random variables drawn from $\mathrm{Uniform}([0,1]^2)$. Nodes $i$ and $j$ are connected if $\lVert \xi_i - \xi_j \rVert_2 \le r_n$, where $r_n=\left(\frac{\kappa}{\pi n}\right)^{1/2}$ and $\kappa$ denotes the expected degree. To ensure realism, we set $\kappa$ equal to the average degree of the empirical network used in our application, which is approximately $5.63$.

The exposure mapping is specified by
$
T_i(D_i,\mathbf{D}_{-i})=\left(D_i,\mathbbm{1}\left\{\sum_{j\neq i}A_{ij}D_j>0\right\}\right),
$
so that $t\in\mathcal{T}=\{(1,1),(1,0),(0,1),(0,0)\}$. For the treatment decision and outcome models, we largely follow the setup of \citet{Hoshino2:23JoE}, adapted to our framework:
\begin{equation}\label{equ:sim_model}
\begin{aligned}
    Y_i(t) &= \beta_{X0}^{(t)} + \beta_{X1}^{(t)} X_i + \epsilon_i^{(t)}, \\
    u_i(Z_i,\mathbf{D}_{-i},\nu_i) &=
    \begin{cases}
        \beta_{D0} + \beta_{D1} Z_i + \lambda \cdot n_i^{-1} \sum_{j} A_{ij} D_j - \nu_i, & \text{if } D_i = 1, \\
        0, & \text{if } D_i = 0.
    \end{cases}
\end{aligned}
\end{equation}
The covariates and instrumental variables are drawn from standard normal distributions, \( X_i, Z_i \sim N(0,1) \). 
The parameter values for the outcome model are set as \( \beta_Y^{(1,1)} = \beta_Y^{(1,0)} = \beta_Y^{(0,1)} = (2,1) \) and \( \beta_Y^{(0,0)} = (1,2) \), where \( \beta_Y^{(t)} = (\beta_{X0}^{(t)}, \beta_{X1}^{(t)})' \). 
For the selection model, the parameters are set to \( (\beta_{D0}, \beta_{D1}) = (-1, 2) \) and \( \lambda = 1 \).
The unobservable is generated as
\begin{equation}
    \epsilon_i^{(t)} = \beta^{(t)}_{p}\cdot V_i+e_i,
\end{equation}
where $\beta^{(1,1)}_{p}=1.5,\beta^{(1,0)}_{p}=\beta^{(0,1)}_{p}=1,\beta^{(0,0)}_{p}=0.5$, and $e_i\overset{i.i.d}{\sim}N(0,1)$.

\subsection{Estimation and Inference Procedure}
For estimation, we implement our two-step GMM estimator using the HAC variance estimation procedure described between Equations~\eqref{equ:2stepGMM_start} and~\eqref{equ:2stepGMM_end}.\footnote{For the just-identified case, a two-step estimation procedure is not necessary. Nevertheless, we implement the two-step method for the sake of generality in our code, so that it can readily accommodate more general over-identified settings as extensions.} 
A crucial component of this estimator is the choice of the bandwidth and the kernel function. Following KMS, we set the bandwidth as
\begin{equation*}
    b_n = c \times \frac{\log n}{\log(\text{avg.deg} \vee (1 + \epsilon))},
\end{equation*}  
and use the Parzen kernel:
\begin{equation*}
    \omega(z) = 
    \begin{cases}
        1 - 6 z^2 + 6 |z|^3, & 0 \leq |z| \leq \tfrac{1}{2}, \\
        2 (1 - |z|)^3, & \tfrac{1}{2} < |z| \leq 1, \\
        0, & \text{otherwise}.
    \end{cases}
\end{equation*}
Here, \(\text{avg.deg}\) denotes the average degree of nodes in the given network. 
We fix \(\epsilon = 0.05\) and consider \( c \in \{ 1.7, 1.8, 1.9, 2.0, 2.1, 2.2 \} \) as in KMS. 
We report results with \( c = 2.0 \), as the conclusions are qualitatively similar across the other values of \( c \).

For each simulation design, we consider sample sizes $n=250$ and $n=1000$ to assess the convergence rate. In particular, root-$n$ convergence is evaluated by comparing the ratio of the standard deviations of the estimates across these two sample sizes to two. Each experiment consists of 1,000 Monte Carlo replications. 
In each replication, we compute the estimate $\widehat{\bm{\beta}}$, its standard error, and construct the $95\%$ asymptotic confidence interval as
$\widehat{\bm{\beta}} \pm \operatorname{diag}\!\left(\widehat{\Omega}_\beta / n\right)^{1/2}$,
which we use to evaluate coverage probabilities. We additionally compute the corresponding marginal exposure responses (MERs), $\widehat{\overline{\psi}}(t,x,p)$, fixing $x=1$ and considering $p\in\{0.2,0.5,0.8\}$.

\subsection{Simulation Results}

The simulation results for $\bm{\beta}$ under the ring network design are reported in Table~\ref{tab:sim_betas_ring}. We make the following observations. First, the bias when $n=1000$ is generally smaller than when $n=250$, reflecting improved estimation accuracy with larger sample sizes. Second, the standard deviations for $n=1000$ are approximately half the size of those for $n=250$, indicating a $\sqrt{n}$ convergence rate, which is consistent with our theoretical results. Finally, the Monte Carlo coverage frequencies of the $95\%$ confidence intervals are close to the nominal level, particularly when $n=1000$.

Table~\ref{tab:sim_MER_ring} reports the simulation results for the MER estimator $\widehat{\overline{\psi}}$ under the ring network design. Similar to the results for $\bm{\beta}$, the MER estimator also exhibits a $\sqrt{n}$ convergence rate, and its coverage accuracy improves as the sample size increases.

One potential concern the reader may have is that the favorable results reported above may be driven by the simplicity of the ring network structure. Tables~\ref{tab:sim_betas_RGG} and~\ref{tab:sim_MER_RGG} report the results for networks generated by the RGG model. Even though the network structure is now more irregular, the estimator continues to exhibit a $\sqrt{n}$ convergence rate, and our inference procedure achieves coverage probabilities close to the nominal $95\%$ level when $n=1000$.

\begin{table}
\centering
\small
\caption{Simulation Results for $\bm{\beta}$ on a Ring Network}\label{tab:sim_betas_ring}
\begin{tabular}{lcccccccc}
    \toprule
    & \multicolumn{4}{c}{$n=250$} & \multicolumn{4}{c}{$n=1000$}  \\
    \cmidrule(lr){2-5}  \cmidrule(lr){6-9}
    & Bias & SD & RMSE & 95\% Coverage & Bias & SD & RMSE & 95\% Coverage\\
    \midrule
    $\beta_{D0}$         & -0.027 & 0.330 & 0.453 & 0.916 & -0.016 & 0.168 & 0.230 & 0.943 \\
    $\beta_{D1}$         & 0.055  & 0.265 & 0.370 & 0.938 & 0.013  & 0.132 & 0.179 & 0.946 \\
    $\lambda$            & 0.011  & 0.712 & 0.983 & 0.926 & 0.025  & 0.361 & 0.491 & 0.951 \\
    $\beta_{X0}^{(0,0)}$ & -0.023 & 0.736 & 1.032 & 0.917 & -0.007 & 0.386 & 0.540 & 0.932 \\
    $\beta_{X1}^{(0,0)}$ & 0.001  & 0.127 & 0.182 & 0.894 & -0.002 & 0.067 & 0.092 & 0.928 \\
    $\beta_{p}^{(0,0)}$  & 0.044  & 1.180 & 1.665 & 0.914 & 0.015  & 0.621 & 0.866 & 0.932 \\
    $\beta_{X0}^{(1,0)}$ & -0.001 & 0.426 & 0.611 & 0.897 & 0.000  & 0.222 & 0.308 & 0.925 \\
    $\beta_{X1}^{(1,0)}$ & 0.005  & 0.171 & 0.241 & 0.905 & -0.003 & 0.089 & 0.122 & 0.940 \\
    $\beta_{p}^{(1,0)}$  & 0.006  & 1.295 & 1.842 & 0.905 & 0.004  & 0.672 & 0.927 & 0.927 \\
    $\beta_{X0}^{(0,1)}$ & 0.004  & 0.563 & 0.793 & 0.914 & 0.011  & 0.289 & 0.398 & 0.929 \\
    $\beta_{X1}^{(0,1)}$ & 0.001  & 0.102 & 0.143 & 0.924 & -0.001 & 0.053 & 0.072 & 0.942 \\
    $\beta_{p}^{(0,1)}$  & -0.011 & 0.874 & 1.238 & 0.910 & -0.014 & 0.448 & 0.617 & 0.929 \\
    $\beta_{X0}^{(1,1)}$ & 0.004  & 0.324 & 0.450 & 0.919 & 0.002  & 0.167 & 0.231 & 0.934 \\
    $\beta_{X1}^{(1,1)}$ & 0.000  & 0.116 & 0.168 & 0.888 & 0.000  & 0.061 & 0.083 & 0.942 \\
    $\beta_{p}^{(1,1)}$  & -0.015 & 0.928 & 1.291 & 0.918 & 0.001  & 0.478 & 0.659 & 0.941 \\
    \bottomrule
\end{tabular}
\par \smallskip
\parbox{14cm}{\footnotesize Note: We consider the bandwidth and Parzen kernel HAC considered in KMS with constant term and $\epsilon$ in $b_n$ being 2 and 0.05.}
\end{table}

\begin{table}
\small
\caption{\centering Simulation Results for the MER $\overline{\psi}$ on a Ring Network}\label{tab:sim_MER_ring}
\begin{tabular}{cccccccccc}
    \toprule
    & & \multicolumn{4}{c}{$n=250$} & \multicolumn{4}{c}{$n=1000$}  \\
    \cmidrule(lr){3-6}  \cmidrule(lr){7-10}
    & $p$ & Bias & SD & RMSE & 95\% Coverage & Bias & SD & RMSE & 95\% Coverage\\
    \midrule
    \multirow{3}{*}{$MER^{(0,0)}$}
      & .2 & -0.013 & 0.521 & 0.729 & 0.913 & -0.006 & 0.274 & 0.380 & 0.936 \\
      & .5 &  0.000 & 0.224 & 0.315 & 0.911 & -0.001 & 0.117 & 0.163 & 0.930 \\
      & .8 &  0.013 & 0.285 & 0.409 & 0.894 &  0.003 & 0.150 & 0.208 & 0.936 \\\hline
        \multirow{3}{*}{$MER^{(1,0)}$}
      & .2 &  0.005 & 0.279 & 0.398 & 0.904 & -0.002 & 0.144 & 0.201 & 0.942 \\
      & .5 &  0.007 & 0.354 & 0.505 & 0.901 & -0.001 & 0.184 & 0.250 & 0.944 \\
      & .8 &  0.009 & 0.690 & 0.981 & 0.902 &  0.000 & 0.358 & 0.490 & 0.941 \\\hline
        \multirow{3}{*}{$MER^{(0,1)}$}
      & .2 &  0.002 & 0.406 & 0.569 & 0.918 &  0.007 & 0.208 & 0.286 & 0.926 \\
      & .5 & -0.001 & 0.187 & 0.257 & 0.922 &  0.003 & 0.096 & 0.132 & 0.933 \\
      & .8 & -0.005 & 0.207 & 0.289 & 0.907 & -0.001 & 0.106 & 0.145 & 0.940 \\\hline
        \multirow{3}{*}{$MER^{(1,1)}$}
      & .2 &  0.001 & 0.201 & 0.283 & 0.912 &  0.003 & 0.106 & 0.147 & 0.927 \\
      & .5 & -0.004 & 0.236 & 0.327 & 0.921 &  0.003 & 0.121 & 0.166 & 0.947 \\
      & .8 & -0.008 & 0.476 & 0.659 & 0.916 &  0.003 & 0.243 & 0.334 & 0.944 \\
    \bottomrule
\end{tabular}
\par\smallskip
\vspace{0.5pt}
\parbox{14cm}{\footnotesize Note: We consider the bandwidth and Parzen kernel HAC considered in KMS with constant term and $\epsilon$ being 2 and 0.05. The results are based on 1,000 Monte Carlo iterations.}
\end{table}

\begin{table}
\centering
\small
\caption{Simulation Results for $\bm{\beta}$ on a Random Geometric Graph}\label{tab:sim_betas_RGG}
\begin{tabular}{lcccccccc}
    \toprule
    & \multicolumn{4}{c}{$n=250$} & \multicolumn{4}{c}{$n=1000$}  \\
    \cmidrule(lr){2-5}  \cmidrule(lr){6-9}
    & Bias & SD & RMSE & 95\% Coverage & Bias & SD & RMSE & 95\% Coverage\\
    \midrule
      $\beta_{D0}$         & 0.001  & 0.400 & 0.552 & 0.928 & -0.014 & 0.207 & 0.286 & 0.937 \\
      $\beta_{D1}$         & 0.020  & 0.259 & 0.354 & 0.934 & 0.001  & 0.130 & 0.176 & 0.951 \\
      $\lambda$            & 0.008  & 0.897 & 1.253 & 0.916 & 0.013  & 0.471 & 0.655 & 0.934 \\
      $\beta_{X0}^{(0,0)}$ & -0.076 & 1.273 & 1.850 & 0.884 & 0.022  & 0.655 & 0.909 & 0.925 \\
      $\beta_{X1}^{(0,0)}$ & 0.004  & 0.209 & 0.327 & 0.828 & 0.002  & 0.113 & 0.158 & 0.917 \\
      $\beta_{p}^{(0,0)}$  & 0.127  & 2.037 & 2.972 & 0.869 & -0.023 & 1.048 & 1.465 & 0.921 \\
      $\beta_{X0}^{(1,0)}$ & -0.017 & 1.029 & 1.344 & 0.859 & -0.011 & 0.370 & 0.519 & 0.905 \\
      $\beta_{X1}^{(1,0)}$ & -0.019 & 0.315 & 0.481 & 0.794 & -0.005 & 0.150 & 0.211 & 0.917 \\
      $\beta_{p}^{(1,0)}$  & -0.018 & 2.912 & 3.819 & 0.871 & 0.032  & 1.133 & 1.591 & 0.914 \\
      $\beta_{X0}^{(0,1)}$ & -0.021 & 0.480 & 0.670 & 0.923 & 0.011  & 0.250 & 0.337 & 0.953 \\
      $\beta_{X1}^{(0,1)}$ & -0.005 & 0.087 & 0.121 & 0.917 & 0.003  & 0.045 & 0.060 & 0.954 \\
      $\beta_{p}^{(0,1)}$  & 0.029  & 0.747 & 1.040 & 0.921 & -0.017 & 0.390 & 0.526 & 0.942 \\
      $\beta_{X0}^{(1,1)}$ & 0.006  & 0.277 & 0.391 & 0.922 & 0.004  & 0.142 & 0.192 & 0.938 \\
      $\beta_{X1}^{(1,1)}$ & 0.006  & 0.104 & 0.142 & 0.935 & 0.001  & 0.054 & 0.074 & 0.932 \\
      $\beta_{p}^{(1,1)}$  & -0.035 & 0.810 & 1.140 & 0.913 & -0.011 & 0.416 & 0.564 & 0.943 \\
    \bottomrule
\end{tabular}
\par \smallskip
\parbox{14cm}{\footnotesize Note: We set $x=1$ and consider the bandwidth and Parzen kernel HAC considered in KMS with constant term and $\epsilon$ in $b_n$ being 2 and 0.05.  The results are based on 1,000 Monte Carlo iterations.}
\end{table}

\begin{table}
\small
\caption{\centering 1,000 Monte Carlo Simulations for the marginal exposure response ($\overline{\psi}$) on a Random Geometric Graph}\label{tab:sim_MER_RGG}
\begin{tabular}{cccccccccc}
    \toprule
    & & \multicolumn{4}{c}{$n=250$} & \multicolumn{4}{c}{$n=1000$}  \\
    \cmidrule(lr){3-6}  \cmidrule(lr){7-10}
    & $p$ & Bias & SD & RMSE & 95\% Coverage & Bias & SD & RMSE & 95\% Coverage\\
    \midrule
    \multirow{3}{*}{$MER^{(0,0)}$}
        & .2 & -0.047 & 0.900 & 1.317 & 0.882 &  0.020 & 0.464 & 0.641 & 0.931 \\
        & .5 & -0.009 & 0.375 & 0.558 & 0.869 &  0.013 & 0.200 & 0.275 & 0.928 \\
        & .8 &  0.029 & 0.483 & 0.718 & 0.872 &  0.006 & 0.250 & 0.355 & 0.912 \\\hline
        \multirow{3}{*}{$MER^{(1,0)}$}
        & .2 & -0.040 & 0.670 & 0.913 & 0.830 & -0.009 & 0.240 & 0.338 & 0.915 \\
        & .5 & -0.046 & 0.587 & 0.865 & 0.858 &  0.000 & 0.311 & 0.436 & 0.918 \\
        & .8 & -0.051 & 1.377 & 1.873 & 0.870 &  0.010 & 0.606 & 0.850 & 0.915 \\\hline
        \multirow{3}{*}{$MER^{(0,1)}$}
        & .2 & -0.020 & 0.345 & 0.485 & 0.918 &  0.011 & 0.179 & 0.243 & 0.950 \\
        & .5 & -0.011 & 0.158 & 0.223 & 0.911 &  0.006 & 0.081 & 0.110 & 0.942 \\
        & .8 & -0.002 & 0.177 & 0.244 & 0.934 &  0.001 & 0.092 & 0.124 & 0.946 \\\hline
        \multirow{3}{*}{$MER^{(1,1)}$}
        & .2 &  0.005 & 0.177 & 0.248 & 0.929 &  0.003 & 0.090 & 0.124 & 0.938 \\
        & .5 & -0.005 & 0.210 & 0.292 & 0.926 & -0.001 & 0.108 & 0.149 & 0.938 \\
        & .8 & -0.016 & 0.418 & 0.585 & 0.916 & -0.004 & 0.215 & 0.294 & 0.932 \\
    \bottomrule
\end{tabular}
\par\smallskip
\vspace{0.5pt}
\parbox{14cm}{\footnotesize Note: We set $x=1$ and consider the bandwidth and Parzen kernel HAC considered in KMS with constant term and $\epsilon$ being 2 and 0.05.  The results are based on 1,000 Monte Carlo iterations.}
\end{table}

\newpage

\section{Global Economy and Local Labor Markets}\label{sec:application}
\subsection{Background and Objectives}

The impact of the global economy on local economies has long been of great interest to economists. 
In particular, following China's accession to the WTO in 2001, the effect of Chinese exports on U.S.\ local labor markets has received considerable attention. 
A central concern in using observational data for such empirical analysis is the potential endogeneity of local exposure to imports from China. 
In their notable study, \citet{Autor3:13} exploit supply-side factors in international trade to address the endogeneity of local import competition and provide evidence of the effects of such exposure.
Under the assumption that the U.S.\ local labor markets are presumably not directly linked to China's supply-side shocks, their analysis provides convincing empirical evidence.

\citet{Autor3:13} compile a dataset of 722 commuting zones (CZs) in the mainland United States as units of observation representing local labor markets. 
They construct a first-differenced sample from the panel of three decadal periods to capture the growth between 1991–2000 and 2000–2007.\footnote{In their original work, they use import volumes for the years 1991, 2000, and 2007 from the UN Comtrade Database and scale the first-differenced values by 10/9 and 10/7 to obtain decadal figures.} 
The key regressor is the change in Chinese import exposure per worker in a CZ, and the outcome of interest is the decadal change in the manufacturing employment share of the working-age population.
We illustrate the CZs on a map in Figure \ref{fig:CZs}.

\begin{figure}[!t]
\centering
\includegraphics[width=\linewidth]{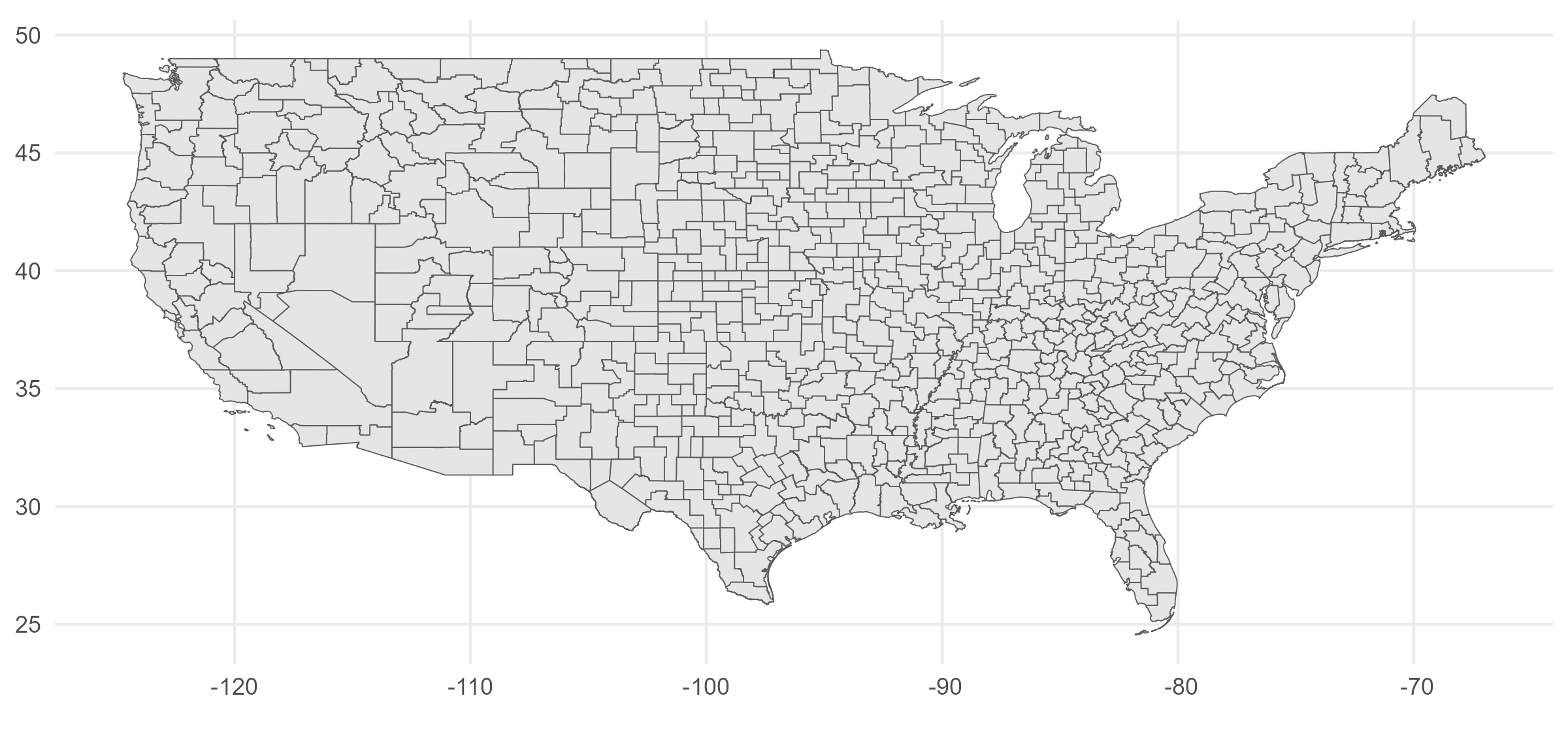}
\caption{Commuting Zones (CZs) in the Mainland United States}\label{fig:CZs}
\end{figure}

While the existing literature assumes that the effects of import competition on local labor markets are confined within each commuting zone (CZ), in reality, spillovers across CZs are likely. Moreover, the extent of local import competition may not only be endogenously determined but also interdependent with that of neighboring CZs, for instance, due to economies of scale in import costs within adjacent geographic areas.
In other words, the degree of exposure is determined as an equilibrium choice under spatial interactions.
To address these features, we employ a model that accommodates spillovers and interference, while also allowing for endogeneity and heterogeneity.

\subsection{The Empirical Model}

Incorporating the network structure, we consider the following specification:
\begin{equation}\label{eq:empirical_specification}
\begin{aligned}
    Y_i(t) &= \beta_{X0} 
        + \mathbbm{1}\{T_i=(1,0)\}\theta_{0} 
        + \mathbbm{1}\{T_i=(1,1)\}\theta_{1} 
        + X_i'\beta_{X} 
        + V_i\beta_p^{(t)} 
        + e_i, \\
    u_i(Z_i,\mathbf{D}_{-i},\nu_i) &= 
        \begin{cases}
            \beta_{D0} + Z_i'\beta_{D} + \lambda \cdot n_i^{-1} \sum_{j } A_{ij} D_j - \nu_i, & \text{if } D_i=1, \\
            0, & \text{if } D_i=0.
        \end{cases}
\end{aligned}
\end{equation}
The outcome variable $Y_i$ denotes the decadal change in the manufacturing employment share of the working-age population in the $i$-th CZ. 
The treatment variable $D_i$ represents the per-worker exposure to the change in Chinese imports in the $i$-th CZ. 
As instruments, following \citet{Autor3:13}, we use the composition and growth of Chinese imports in eight other developed countries. 
The idea is to exploit variation in supply-side factors within China to isolate the corresponding exogenous variation in the treatment variable.
For the control variables $X_i$, we include the period dummy and the share of manufacturing in the start-of-period employment.

The exposure mapping is specified as
\[
T_i(D_i,\mathbf{D}_{-i}) 
    = \left( D_i, \, \mathbbm{1}\!\left\{\sum_{j\neq i} A_{ij} D_j \geq 1 \right\} \right).
\]
As discussed in Example~\ref{examp:exposure}, this exposure mapping allows us to capture heterogeneity in the treatment effect by accounting for whether a unit is exposed to a treated neighbor. 
We focus on the direct MEE while controlling for the treatment status of neighbors as
\begin{equation}\label{equ:MEE_linearInV}
MEE(t',t,x,p) = 
\left\{
\begin{array}{cc}
   \theta_1 + p(\beta_p^{(t')}-\beta_p^{(t)}),  & \text{ for }  t_2'=t_2=1,\\
   \theta_0 + p(\beta_p^{(t')}-\beta_p^{(t)}),  & \text{ for }  t_2'=t_2=0.
\end{array}
\right.
\end{equation}

The empirical model \eqref{eq:empirical_specification} allows the MEE \eqref{equ:MEE_linearInV} to be heterogeneous depending on whether there are treated neighbors. 
Specifically, the difference between $\theta_{1}$ and  $\theta_{0}$ captures the heterogeneity of the direct effect depending on the presence of a treated neighbor. 
Moreover, the model \eqref{eq:empirical_specification} also accommodates unobserved heterogeneity in the MEE \eqref{equ:MEE_linearInV}. 
In particular, the difference $\beta_p^{(t')}-\beta_p^{(t)}$ captures such heterogeneity, which itself may vary depending on whether a neighbor is treated.
While this empirical model differs from the specification in \eqref{equ:sim_model} used for our simulation study, it is designed to capture these empirically relevant features for the current application.

\subsection{The Empirical Results}

\begin{figure}[t]
    \centering
    \includegraphics[width=\linewidth]{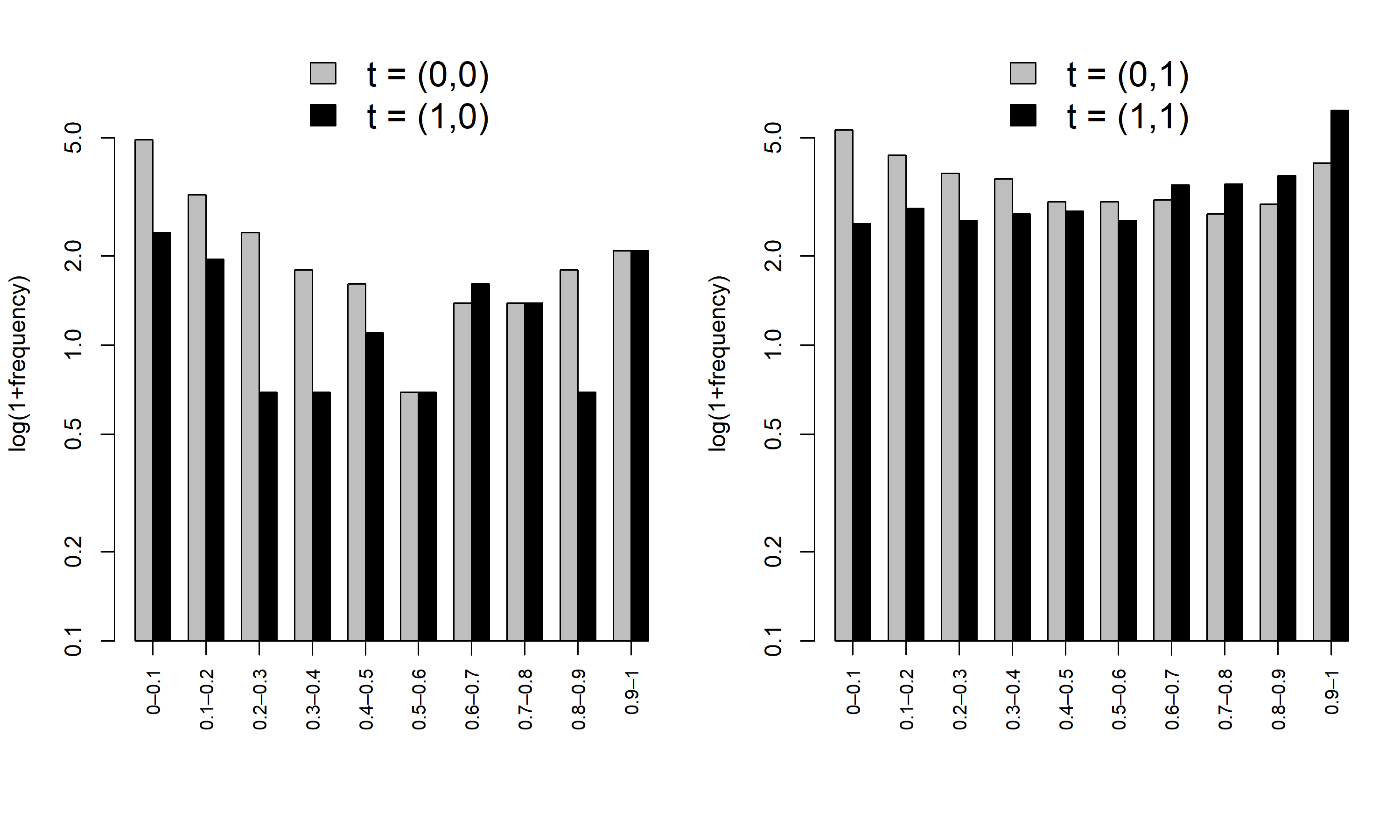}
    \caption{Histograms of the Propensity Scores.}
    \label{fig:MEE_Psupp}
\end{figure}

Before proceeding with the estimation of the MEEs, we first present the routine overlap check. 
Figure~\ref{fig:MEE_Psupp} displays the histograms of the estimated propensity scores. 
Unlike conventional models, however, our setting involves multi-dimensional treatments due to the coexistence of direct and neighbor effects. 
The left panel of the figure shows the histograms for $t=(0,0)$ and $t=(1,0)$, while the right panel shows those for $t=(0,1)$ and $t=(1,1)$. 
In both cases, we observe substantial overlap in this data set.

\begin{figure}[t]
    \centering
    \includegraphics[width=0.95\linewidth]{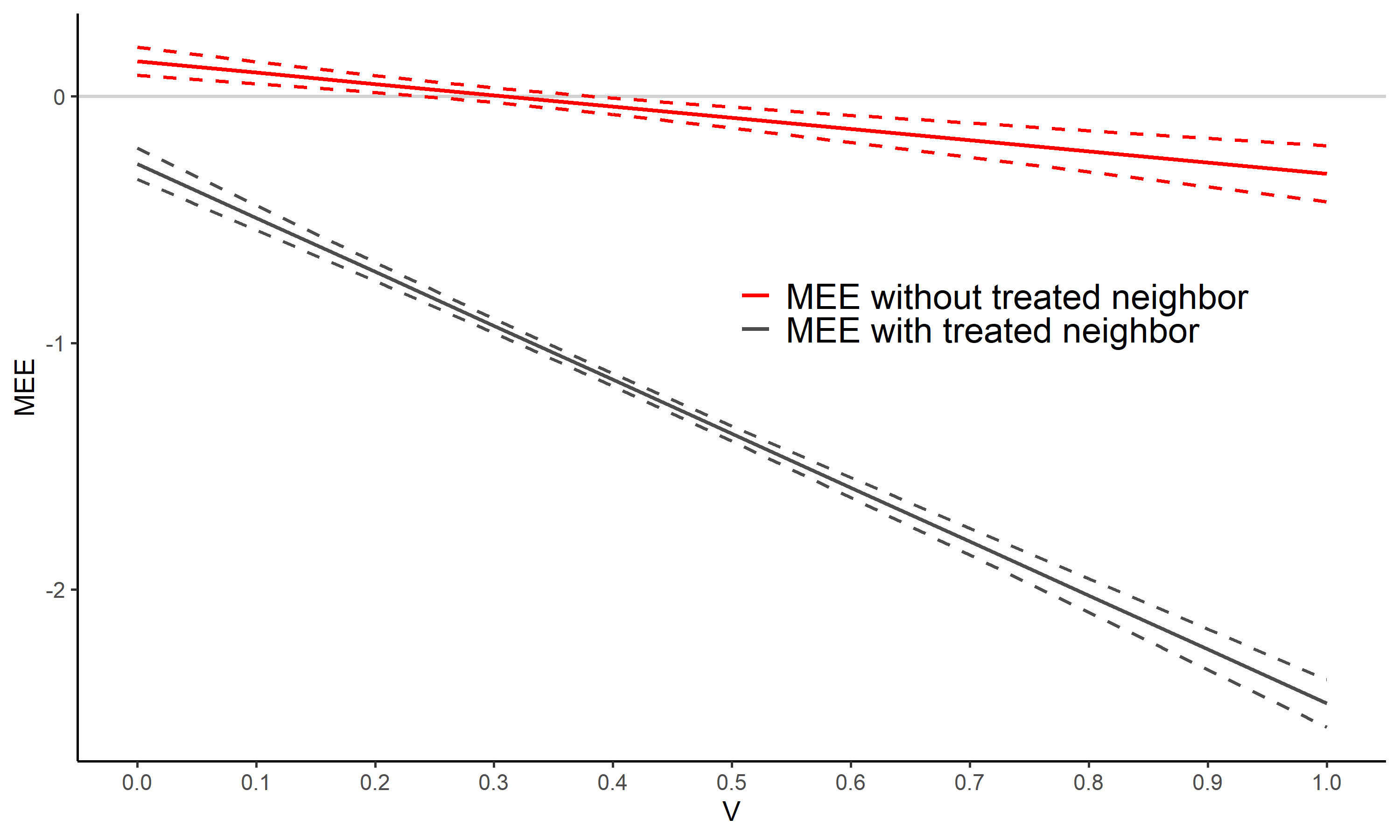}
    \caption{Marginal Exposure Effects with Point-Wise 95\% Confidence Intervals.}
    \label{fig:MEE}
\end{figure}

Figure~\ref{fig:MEE} displays the estimated MEE values along with their pointwise 95\% confidence intervals.
Before discussing the results, it is worth noting that 83.59\% of the CZs have at least one treated neighbor.\footnote{
More specifically, 72.85\% and 94.32\% of the first and second first-differenced samples, respectively, have treated neighbors.  
}
First, in this light, the MEEs are significantly negative for the vast majority of CZs. 
This finding is consistent with earlier results in the literature and aligns with the idea that stronger import competition from China displaces manufacturing workers into other sectors. 
Second, we observe that the MEE without a treated neighbor is always smaller (in absolute value) than the MEE with a treated neighboring zone, suggesting that spillovers amplify the trade shock in local labor markets. 
Finally, the downward-sloping pattern of the MEEs indicates that CZs with higher values of $V$ face more severe trade effects. 
Note that CZs with larger values of $V$ are those less likely to be exposed to import competition. 
It is, therefore, natural that the CZs that suffer more from import competition are also those that tend to \textit{select out} of that treatment status.

\section{Conclusion}\label{sec:conclusion}

This paper develops an econometric framework for identifying and estimating heterogeneous effects of an endogenous treatment in the presence of network interference and spillovers. We characterize endogenous treatment choices as outcomes of a rational‐expectations equilibrium that incorporates spillover considerations, and we provide conditions ensuring that such an equilibrium exists and is unique. Building on this structure, we obtain identification of heterogeneous marginal exposure effects (MEEs), effects that vary with both neighbors’ treatment statuses and with unobserved heterogeneity, and we establish estimation procedures together with their asymptotic properties.

Applying our methods to the impact of import competition on U.S. local labor markets, we document negative MEEs in line with prior findings. Importantly, we show that these negative effects intensify when neighboring regions are also exposed to treatment and for areas that tend to self‐select into lower levels of import competition. These results, which reveal nuanced spillover‐driven heterogeneity, would not have been accessible without the econometric tools developed in this paper.

\clearpage
\newpage

\begin{appendices}
\section{Proofs and Auxiliary Results}
\subsection{Proofs for the Identification Results}\label{sec:proof:identifiadtion}
\begin{proof}[\textbf{Proof of Lemma \ref{lemma:UniBNE}}]
    Fix $n$ and $\bfZ$. Write the equation system \eqref{eq:EQUcp} as $\bfP=\varphi(\bfP)$. Note that logistic distribution assumption in \ref{asm:unique_equ} (i) and $|\lambda|<4$ jointly guarantee that the gradient of the norm of $\varphi$ is sub-unit, as shown by \citet[][Page 405]{Lee3:14}. Thus, we can conclude that $\varphi(\cdot)$ is a contraction mapping, and we have the desired result.
    
\end{proof}

\begin{proof}[\textbf{Proof of Theorem \ref{thm:1stage_iden}}]
    Given network $A$, the equilibrium restriction can be rewritten as
\begin{equation*}
    log(P_i(\bfZ,A))-log(1-P_i(\bfZ,A)) = \phi_i(\bfZ,A)' \cdot \left( \begin{array}{c}
         Z_i'\beta_D  \\
         \lambda
    \end{array} \right),
\end{equation*}
under Assumption \ref{asm:unique_equ}.
Pre-multipling both sides by $\phi_i(\bfZ)$ and then taking the conditional expectation, we obtain
\begin{align}
    E[\phi_i(\bfZ,A) \left( log(P_i(\bfZ,A))-log(1-P_i(\bfZ,A)) \right) |A]  = E[\phi_i(\bfZ,A) \phi_i(\bfZ,A)' |A] \cdot \left( \begin{array}{c}
         \beta_D \\
         \lambda
    \end{array} \right).
\end{align}
Since $P_i(\bfZ,A)$ and $\phi_i(\bfZ,A)$ are both identified, the parameter $\bm{\beta}_1=(\beta_D',\lambda)'$ is identified under the invertibility of the matrix $E[\phi_i(\bfZ,A)\phi_i(\bfZ,A)' |A]$, which is guaranteed by Assumption \ref{asm:rank_cond}.
\end{proof}

\begin{proof}[\textbf{Proof of Theorem \ref{thm:mtr_iden}}]
In this proof, we write short hand notation $\bfP=\bfP(\bfZ,A)$. Recall that we have $D_i=1 \iff P_i\geq V_i$ and $D_i=0 \iff P_i< V_i$, and that for given $N_i$ and treatment status $\bfd_{N_i}$ we have corresponding index sets $\bfl_i=\{j\in N_i|d_j=1\}$ and $\bfl_i'=\{j\in N_i|d_j=0\}$. 

Now, observe that
\begin{equation*}
\begin{aligned}
    E&[\mathbbm{1}\{T_i=t\}Y_i|\bfX=\bfx,\bfP_{N_i}=\bfp_{N_i}]\\
    =&\sum_{(d,\bfd_{N_i^\circ})\in \{0,1 \} \times\{0,1 \}^{|N_i^\circ|}} E[\mathbbm{1}\{T_i=t\}Y_i|\bfX=\bfx,\bfP_{N_i}=\bfp_{N_i},\bfD_{N_i}=\bfd_{N_i}]
    \cdot P(\bfD_{N_i}=\bfd_{N_i}|\bfX=\bfx,\bfP_{N_i}=\bfp_{N_i})\\
    =&\sum_{(d,\bfd_{N_i^\circ}): T_i(d,\bfd_{N_i^\circ},\bfD_{N_i^c})=t} E[\mathbbm{1}\{T_i=t\}Y_i|\bfX=\bfx,\bfP_{N_i}=\bfp_{N_i},\bfD_{N_i}=\bfd_{N_i}]\cdot P(\bfD_{N_i}=\bfd_{N_i}|\bfX=\bfx,\bfP_{N_i}=\bfp_{N_i})\\
    =&\sum_{(d,\bfd_{N_i^\circ}): T_i(d,\bfd_{N_i^\circ},\bfD_{N_i^c})=t} E[Y_i(d,\bfd_{-i})|\bfX=\bfx,\bfP_{N_i}=\bfp_{N_i},\bfV_{\bfl_i} \leq \bfp_{\bfl_i},\bfV_{\bfl_i'} \geq \bfp_{\bfl_i'}]\\
    &\hspace{15mm} \cdot P(\bfV_{\bfl_i} \leq \bfp_{\bfl_i},\bfV_{\bfl_i'} \geq \bfp_{\bfl_i'}|\bfX=\bfx,\bfP_{N_i}=\bfp_{N_i})\\
    =&\sum_{(d,\bfd_{N_i^\circ}): T_i(d,\bfd_{N_i^\circ},\bfD_{N_i^c})=t} E[Y_i(d,\bfd_{-i})|\bfX=\bfx,\bfV_{\bfl_i} \leq \bfp_{\bfl_i},\bfV_{\bfl_i'} \geq \bfp_{\bfl_i'}] \cdot P(\bfV_{\bfl_i} \leq \bfp_{\bfl_i},\bfV_{\bfl_i'} \geq \bfp_{\bfl_i'}|\bfX=\bfx),
\end{aligned}
\end{equation*}
where
the first equality holds by the law of iterated expectation,
the second equality removes the $\bfd$'s where the indicator is zero, and
the last equality follows from Assumption \ref{asm:unique_equ}(ii) that $\bfV$ are independent of $\bfP$ conditional on $\bfX$. 
Notice that
\begin{equation*}
\begin{aligned}
    &E[Y_i(d,\bfd_{-i})|\bfX=\bfx,\bfV_{\bfl_i} \leq \bfp_{\bfl_i},\bfV_{\bfl_i'} \geq \bfp_{\bfl_i'}]\\
    &=\frac{1}{P(\bfV_{\bfl_i} \leq \bfp_{\bfl_i},\bfV_{\bfl_i'} \geq \bfp_{\bfl_i'}|\bfX=\bfx)} \int_{\mathbf{0}}^{\bfp_{\bfl_i}} \int_{\bfp_{\bfl_i'}}^{\mathbf{1}}E[Y_i(d,\bfd_{-i})|\bfX=\bfx,\bfV_{\bfl_i} = \bfv_{\bfl_i},\bfV_{\bfl_i'} = \bfv_{\bfl_i'}] f_{\bfV_{N_i}|\bfX}(\bfv_{N_i}) d\bfv_{N_i}\\
    &=\frac{1}{P(\bfV_{\bfl_i} \leq \bfp_{\bfl_i},\bfV_{\bfl_i'} \geq \bfp_{\bfl_i'}|\bfX=\bfx)} \int_{\mathbf{0}}^{\bfp_{\bfl_i}} \int_{\bfp_{\bfl_i'}}^{\mathbf{1}}E[Y_i(d,\bfd_{-i})|\bfX=\bfx,\bfV_{\bfl_i} = \bfv_{\bfl_i},\bfV_{\bfl_i'} = \bfv_{\bfl_i'}] f_{\bfV_{N_i}}(\bfv_{N_i}) d\bfv_{N_i}\\
    &=\frac{1}{P(\bfV_{\bfl_i} \leq \bfp_{\bfl_i},\bfV_{\bfl_i'} \geq \bfp_{\bfl_i'}|\bfX=\bfx)} \int_{\mathbf{0}}^{\bfp_{\bfl_i}} \int_{\bfp_{\bfl_i'}}^{\mathbf{1}}E[Y_i(d,\bfd_{-i})|\bfX=\bfx,\bfV_{\bfl_i} = \bfv_{\bfl_i},\bfV_{\bfl_i'} = \bfv_{\bfl_i'}] d\bfv_{N_i},
\end{aligned}
\end{equation*}
where the second equality follows that $\bfV$ is independent of $\bfX$,
and the last equality follows from the fact that $\bfV$ are independent of each other and uniformly distributed. To conclude, we have
\begin{equation*}
\begin{aligned}
    E&[\mathbbm{1}\{T_i=t\}Y_i|\bfX=\bfx,\bfP_{N_i}=\bfp_{N_i}]\\
    &=\sum_{(d,\bfd_{N_i^\circ}): T_i(d,\bfd_{N_i^\circ},\bfD_{N_i^c})=t} \int_{\mathbf{0}}^{\bfp_{\bfl_i}} \int_{\bfp_{\bfl_i'}}^{\mathbf{1}}E[Y_i(d,\bfd_{-i})|\bfX=\bfx,\bfV_{\bfl_i} = \bfv_{\bfl_i},\bfV_{\bfl_i'} = \bfv_{\bfl_i'}] d\bfv_{N_i}
\end{aligned}
\end{equation*}

Applying Leibniz rule and using the fact that all the remaining $\bfd_{N_i}$ in the summation results in the identical value of the exposure mapping and thus the identical expected value, we have
\begin{equation*}
\begin{aligned}
    \partial_{\bfp_{N_i}} E[\mathbbm{1}\{T_i=t\}Y_i|\bfX=\bfx,\bfP_{N_i}=\bfp_{N_i}]&=\sum_{\bfd_{N_i}: T_i(\bfd_{N_i},\bfD_{N^c_i})=t} (-1)^{|\bfl'|} MTR(d,\bfd_{-i},\bfx,\bfp_{N_i})\\
    &=MTR(d,\bfd_{-i},\bfx,\bfp_{N_i})\sum_{\bfd_{N_i}: T_i(\bfd_{N_i},\bfD_{N^c_i})=t} (-1)^{|\bfl'|},
\end{aligned}
\end{equation*}
where the equality use the fact that $MTR$ remains the same whenever we fix $T_i=t$ and $N_i$. Rearranging the terms gives
\begin{equation*}
    \frac{\partial_{\bfp_{N_i}} E[\mathbbm{1}\{T_i=t\}Y_i|\bfX=\bfx,\bfP_{N_i}=\bfp_{N_i}]}{\sum_{\bfd_{N_i}: T_i(\bfd_{N_i},\bfD_{N^c_i})=t} (-1)^{|\bfl'|}}=MTR(d,\bfd_{-i},\bfx,\bfp_{N_i}),
\end{equation*}
as claimed in the theorem.
\end{proof}

\begin{proof}[\textbf{Proof of Corollary \ref{cor:simple_mtr_iden}}]
    In this proof, we write short hand notation $\bfP=\bfP(\bfZ,A)$. Letting $\Tilde{\bfl}_i'=\{j\in N_i^\circ|d_j=0\}$, we have
\begin{align*}
    &\overline{MER}_i(\bfd,\bfx,p_i) \\
    &= E[Y(\bfd)|\bfX=\bfx,V_i=p_i]\\
    &= E[E[Y(\bfd)|\bfX=\bfx,V_i=p_i,\bfV_{N_i^\circ}]|\bfX=\bfx,V_i=p_i]\\
    &= \int_0^1\cdots\int_0^1 E[Y(\bfd)|\bfX=\bfx,V_i=p_i,\bfV_{N_i^\circ}=\bfv_{N_i^\circ}]f_{\bfv_{N_i^\circ}|\bfX}(\bfv_{N_i^\circ}) d\bfv_{N_i^\circ} \\
    &= \int_0^1\cdots\int_0^1  E[Y(\bfd)|\bfX=\bfx,V_i=p_i,\bfV_{N_i^\circ}=\bfv_{N_i^\circ}] d\bfv_{N_i^\circ}\\
    &= \int_0^1\cdots\int_0^1 \psi(T_i(\bfd),\bfx,p_i,\bfv_{N_i^\circ}) d\bfv_{N_i^\circ}\\
    &= \int_0^1\cdots\int_0^1 \frac{\partial_{p_i}\partial_{\bfv_{N_i^\circ}} E[\mathbbm{1}\{T_i=T_i(\bfd)\}Y_i|\bfX=\bfx,P_i=p_i,\bfP_{N_i^\circ}=\bfv_{N_i^\circ}]}{\sum_{\Tilde{\bfd}_{N_i}: T_i(\Tilde{\bfd}_{N_i},\bfD_{N_i^c})=T_i(\bfd)} (-1)^{|\bfl_i'|}}d\bfv_{N_i^\circ}\\
    &= \frac{1}{\sum_{\Tilde{\bfd}_{N_i}: T_i(\Tilde{\bfd}_{N_i},\bfD_{N_i^c})=T_i(\bfd)} (-1)^{|\bfl_i'|}}  \\
    &\hspace{1cm}\times \int_0^1\cdots\int_0^1 \partial_{p_i}\partial_{\bfv_{N_i^\circ}} E[\mathbbm{1}\{T_i=T_i(\bfd)\}Y_i|\bfX=\bfx,P_i=p_i,\bfP_{N_i^\circ}=\bfv_{N_i^\circ}]  d\bfv_{N_i^\circ}\\
    &=\frac{1}{\sum_{\Tilde{\bfd}_{N_i}: T_i(\Tilde{\bfd}_{N_i},\bfD_{N_i^c})=T_i(\bfd)} (-1)^{|\bfl_i'|}}  \\
    &\hspace{1cm}\times \partial_{p_i} \left[ \sum_{\Tilde{\bfd}_{N_i^\circ}\in \{0,1\}^{|N_i^\circ |}} (-1)^{|\Tilde{\bfl}_i'|} E[\mathbbm{1}\{T_i=T_i(\bfd)\}Y_i|\bfX=\bfx,P_i=p_i,\bfP_{N_i^\circ}=\Tilde{\bfd}_{N_i^\circ}]\right]\\
    &=\frac{1}{\sum_{\Tilde{\bfd}_{N_i}: T_i(\Tilde{\bfd}_{N_i},\bfD_{N_i^c})=T_i(\bfd)} (-1)^{|\bfl_i'|}} \\
    &\hspace{1cm}\times \partial_{p_i} \left[ \sum_{\Tilde{\bfd}_{N_i^\circ}: T_i(\Tilde{\bfd}_{N_i},\bfD_{N_i^c})=T_i(\bfd)} (-1)^{|\Tilde{\bfl}_i'|} E[\mathbbm{1}\{T_i=T_i(\bfd)\}Y_i|\bfX=\bfx,P_i=p_i,\bfP_{N_i^\circ}=\Tilde{\bfd}_{N_i^\circ}]\right],
\end{align*}
where
the fourth equality follows from the fact that $\bfV\perp\bfX$ and $\bfV$ are independent and uniformly distributed on $[0,1]$ with density equal to 1,
the fifth equality follows from Theorem \ref{thm:mtr_iden}, and
the eighth equality is due to the fundamental theorem of calculus.
Notice that we have an indicator function in the expectation. 
This allows us to drop those $\Tilde{\bfd}_{N_i^\circ}$ such that $T(d_i,\Tilde{\bfd}_{N_i^\circ},\bfD_{N_i^c})$ is not equal to $t$. 
This completes the first statement of the corollary.

For the second statement, consider
\begin{equation*}
\begin{aligned}
    E[&\mathbbm{1}\{T_i=t\}Y_i|\bfX,P_i,\bfP_{N_i^\circ}] \\
    &= P(\mathbbm{1}\{T_i=t\}=1|\bfX,P_i,\bfP_{N_i^\circ}) \times E[\mathbbm{1}\{T_i=t\}Y_i|\mathbbm{1}\{T_i=t\}=1,\bfX,P_i,\bfP_{N_i^\circ}].
\end{aligned}
\end{equation*}
The first factor on the right-hand side is
\begin{equation*}
\begin{aligned}
    P(\mathbbm{1}\{T_i=t\}=1|\bfX,P_i,\bfP_{N_i^\circ})=
    \left\{
    \begin{array}{cc}
     P_i\left( \sum_{\bfd_{N_i^\circ}:T_i(d_i,\bfd_{N_i^\circ},\bfD_{N_i^c})=t} \bfP_{\Tilde{\bfl}_i}(1-\bfP_{\Tilde{\bfl}_i'}) \right)  &  \text{if } t_1=1, \\
     (1-P_i)\left( \sum_{\bfd_{N_i^\circ}:T_i(d_i,\bfd_{N_i^\circ},\bfD_{N_i^c})=t} \bfP_{\Tilde{\bfl}_i}(1-\bfP_{\Tilde{\bfl}_i'}) \right)  & \text{if } t_1=0.
    \end{array}\right.
\end{aligned}
\end{equation*}
Under Assumption \ref{asm:simplify_model} (i), we can exchange the values of $d_{N_i^\circ}$. 
Thus, the summation in the parentheses equals to 1, and the whole term simplifies to $P_i$ if $t_1=1$, and $(1-P_i)$ otherwise. 

For the second factor, Assumption \ref{asm:simplify_model} (ii) yields
\begin{equation*}
\begin{aligned}
    E[&\mathbbm{1}\{T_i=t\}Y_i|\mathbbm{1}\{T_i=t\},\bfX,P_i,\bfP_{N_i^\circ}]\\
    &=E[\mathbbm{1}\{T_i=t\}( (\mu_X^{(t)}(X_i) +\mu_\epsilon^{(t)}(V_i) +e_i^{(t)})|\mathbbm{1}\{T_i=t\}=1,\bfX,P_i,\bfP_{N_i^\circ}]\\
    &=\mu_X^{(t)}(X_i) + E[\mu_\epsilon^{(t)}(V_i)|\bfX,\bigcup_{\bfd_{N_i}:T_i(d_i,\bfd_{N_i^\circ},\bfD_{N_i^c})=t}\{ \bfV_{\bfl_i}\leq \bfP_{\bfl_i},\bfV_{\bfl_i'}> \bfP_{\bfl_i'} \} ]  \\
    &\hspace{5mm} + E[e_i^{(t)}|\bfX,\bigcup_{\bfd_{N_i}:T_i(d_i,\bfd_{N_i^\circ},\bfD_{N_i^c})=t}\{ \bfV_{\bfl_i}\leq \bfP_{\bfl_i},\bfV_{\bfl_i'}> \bfP_{\bfl_i'} \} ] \\
    & = \left\{\begin{array}{cc}
        \mu_X^{(t)}(X_i) + E[\mu_\epsilon^{(t)}(V_i)|\bfX, V_i\leq P_i] & \text{ if } t_1=1,  \\
        \mu_X^{(t)}(X_i) + E[\mu_\epsilon^{(t)}(V_i)|\bfX, V_i > P_i] & \text{ if } t_1=0,
    \end{array} \right.
\end{aligned}
\end{equation*}
where the second equality follows from that $\bfP$ are independent of $\bfV$ conditional on $\bfX$, and last equation follows from that $E[e_i^{(t)}|\bfX,\bfV_{N_i}]=0$ and that $\bfV$ are independent conditional on $\bfX$. 

By the above calculations, $E[\mathbbm{1}\{T_i=t\}Y_i|\bfX,P_i,\bfP_{N_i^\circ}]$ is identical across different value of $\bfP_{N_i^\circ}$. 
Thus, for $t_1=1$, we have
\begin{equation*}
\begin{aligned}
    \overline{\psi}(t,\bfx,p,N_i)&=\frac{\partial_{p}\left[ \sum_{\bfd_{N_i^\circ}: T_i(\bfd_{N_i})=t} (-1)^{|\Tilde{\bfl_i}'|} E[\mathbbm{1}\{T_i=t\}Y_i|\bfX=\bfx,P_i=p,\bfP_{N_i^\circ}=\bfd_{N_i^\circ}]\right]}{\sum_{\bfd_{N_i}: T_i(\bfd_{N_i},\bfD_{N_i^c})=t} (-1)^{|\bfl_i'|}} \\
    &= \partial_{p} \left[ p\cdot (\mu_X^{(t)}(x_i) + E[\mu_\epsilon^{(t)}(V_i)|\bfX=\bfx, V_i\leq p])\right] \frac{\sum_{\bfd_{N_i^\circ}: T_i(\bfd_{N_i},\bfD_{N_i^c})=t} (-1)^{|\Tilde{\bfl}_i'|}}{\sum_{\bfd_{N_i}: T_i(\bfd_{N_i},\bfD_{N_i^c})=t} (-1)^{|\bfl_i'|}}\\
    &= \partial_{p} \left[ p\cdot (\mu_X^{(t)}(x_i) + E[\mu_\epsilon^{(t)}(V_i)|\bfX=\bfx, V_i\leq p])\right],
\end{aligned} 
\end{equation*}
where the last equation hold because the fraction simply cancel out when $t_1=1$. Likewise, when $t_1=0$, the fraction will be $-1$, and 
\begin{equation*}
    \overline{\psi}(t,\bfx,p,N_i) = \partial_{p} \left[ (p-1)\cdot (\mu_X^{(t)}(x_i) + E[\mu_\epsilon^{(t)}(V_i)|\bfX=\bfx, V_i> p])\right],
\end{equation*}
where we have $p-1$ instead of $1-p$ due to the fraction of the two summations being $-1$. In summary, we obtain 
\begin{equation*}
    \overline{\psi}(t,\bfx,p_i,N_i)=\left\{\begin{array}{cc}
        \partial_{p} \left[ p\cdot  (\mu_X^{(t)}(x_i) + E[\mu_\epsilon^{(t)}(V_i)|\bfX=\bfx, V_i\leq p] )\right] & \text{ if } t_1=1,  \\
        \partial_{p} \left[ (p-1)\cdot  (\mu_X^{(t)}(x_i) + E[\mu_\epsilon^{(t)}(V_i)|\bfX=\bfx, V_i> p] )\right] & \text{ if } t_1=0,
    \end{array} \right.
\end{equation*}
as claimed in the second statement of the corollary.
\end{proof}

\subsection{Proofs for the Network Properties}\label{sec:proofs:network}
\begin{proof}[\textbf{Proof of Proposition \ref{prop:D_psiDep}}]
    The proof of this theorem follows Proposition 2.3 in KMS. For completeness, we still provide the proof here. Let $(H,H')\in \mathcal{P}_n(h,h',2s+1)$. Define $\xi = f(\bfD_{n,H})$, $\zeta=f'(\bfD_{n,H'})$, and
\begin{equation*}
    \xi^{(s)}=f(D_{n,i}^{(s)}:i\in H),\hspace{2mm} \zeta^{(s)}=f(D_{n,i}^{(s)}:i\in H').
\end{equation*}
$\xi^{(s)}$ and $\zeta^{(s)}$ are independent conditional on $\mathcal{C}_n$. Now, compute the covariance
\begin{equation*}
    \left| \Cov(\xi,\zeta|\mathcal{C}_n) \right| \leq \left| \Cov(\xi-\xi^{(s)},\zeta|\mathcal{C}_n) \right| + \left| \Cov(\xi^{(s)},\zeta-\zeta^{(s)}|\mathcal{C}_n) \right| + \left| \Cov(\xi^{(s)},\zeta^{(s)}|\mathcal{C}_n) \right|\hspace{2mm} a.s.
\end{equation*}
Note that the last term is $0$ since $\xi^{(s)}$ and $\zeta^{(s)}$ have non-overlapping $\nu$.
The first two terms are bounded as
\begin{equation*}
\begin{aligned}
    &\left| \Cov(\xi-\xi^{(s)},\zeta|\mathcal{C}_n) \right| + \left| \Cov(\xi^{(s)},\zeta-\zeta^{(s)}|\mathcal{C}_n) \right|\\
    &\leq 2 \lVert f' \rVert_{\infty} E[|\xi - \xi^{(s)}|\mathcal{C}_n] + 2 \lVert f \rVert_{\infty} E[|\zeta - \zeta^{(s)}|\mathcal{C}_n]\\
    &\leq 2 (h \lVert f' \rVert_{\infty} \mathrm{Lip}(f) + h' \lVert f \rVert_{\infty} \mathrm{Lip}(f'))\times \underset{i}{\max}\, E[|D_{n,i}-D_{n,i}^{(s)}|\mathcal{C}_n].
\end{aligned}
\end{equation*}
Letting $\psi_{h,h'}^D(f,f')= (h \lVert f' \rVert_{\infty} \mathrm{Lip}(f) + h' \lVert f \rVert_{\infty} \mathrm{Lip}(f'))$ and $\theta_{n,s}^D=2E[|D_{n,i}-D_{n,i}^{(s)}|\mathcal{C}_n]$ gives the result.
\end{proof}

Before proving Proposition \ref{prop:Y_psiDep}, we first show that we can use $\psi_{h,h'}^D$ (with modified domain) and the $\theta^D_{n,s}$ coefficient as in the Proposition \ref{prop:D_psiDep} to characterize the vector of stacking $\psi$-dependent variables with independent shocks. Let $W_{n,i}^D=(D_{n,i},\nu_{n,i},(e_{n,i}^{(t)})_{t\in\mathcal{T}})$ with dimension $dim(W^D)=2+|\mathcal{T}|$, and define
\begin{equation*}
    W_{n,i}^{D,(s)} = (D_{n,i}^{(s)},\nu_{n,i},(e_{n,i}^{(t)})_{t\in\mathcal{T}}).
\end{equation*}

\begin{lemma}\label{app_lemma:WD_psiDep}
    Let $\{W_{n,i}^D\}$ be the vector described above. Then, for any $H,H'\in \mathcal{P}(h,h',2s+1)$ and $f\in \mathcal{L}_{dim(W^D),h}, f'\in \mathcal{L}_{dim(W^D),h'}$,
    \begin{equation*}
        \left| \Cov(f(\bfD_{n,H}),f'(\bfD_{n,H'})|\mathcal{C}_n) \right| \leq \psi_{h,h}^D(f,f') \theta_{n,s}^D\hspace{2mm} a.s.,
    \end{equation*}
    where $\psi_{h,h}^D(f,f')=h\lVert f' \rVert_{\infty} \Lip(f) + h'\lVert f \rVert_{\infty} \Lip(f')$ and $\theta_{n,s}^D= 2\, \underset{i}{max}\, E[|D_{n,i}-D_{n,i}^{(s)}|\mathcal{C}_n]$.
\end{lemma}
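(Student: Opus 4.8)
The plan is to mimic the argument of Proposition~\ref{prop:D_psiDep} (equivalently Proposition~2.3 of KMS) almost verbatim, the only new feature being that the augmented vector $W_{n,i}^D$ carries, besides $D_{n,i}$, the node-specific private shocks $\nu_{n,i}$ and $\{e_{n,i}^{(t)}\}_{t\in\mathcal{T}}$, which conditional on $\mathcal{C}_n$ are i.i.d. First I would fix $(H,H')\in\mathcal{P}(h,h',2s+1)$ and set $\xi=f\bigl((W_{n,i}^D)_{i\in H}\bigr)$, $\zeta=f'\bigl((W_{n,i}^D)_{i\in H'}\bigr)$, together with the truncated surrogates $\xi^{(s)}=f\bigl((W_{n,i}^{D,(s)})_{i\in H}\bigr)$ and $\zeta^{(s)}=f'\bigl((W_{n,i}^{D,(s)})_{i\in H'}\bigr)$, recalling that $W_{n,i}^{D,(s)}$ only replaces $D_{n,i}$ by $D_{n,i}^{(s)}=\sigma_{n,i}^D(\bm{\nu}_n^{(s,i)})$ and leaves $\nu_{n,i},\{e_{n,i}^{(t)}\}$ intact.

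The crux is to verify that $\xi^{(s)}$ and $\zeta^{(s)}$ are conditionally independent given $\mathcal{C}_n$, so that $\Cov(\xi^{(s)},\zeta^{(s)}\mid\mathcal{C}_n)=0$ a.s. For this I would observe that $\xi^{(s)}$ is a measurable function of $\{\nu_{n,k}: k\in N_{A_n}(i,s)\ \text{for some}\ i\in H\}$ and $\{e_{n,i}^{(t)}: i\in H,\,t\in\mathcal{T}\}$ (the $\nu$'s entering through the $D_{n,i}^{(s)}$'s and directly, the $e$'s entering only directly), and analogously for $\zeta^{(s)}$ with $H'$. Since $l_{A_n}(i,j)\ge 2s+1$ for every $i\in H$, $j\in H'$, the sets $\bigcup_{i\in H}N_{A_n}(i,s)$ and $\bigcup_{j\in H'}N_{A_n}(j,s)$ are disjoint (a common node $k$ would force $l_{A_n}(i,j)\le 2s$ for some such $i,j$), and also $H\cap H'=\emptyset$; hence the two collections of driving shocks are disjoint, and conditional independence follows from conditional i.i.d.-ness.

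Next I would use the standard covariance split
\[
|\Cov(\xi,\zeta\mid\mathcal{C}_n)|\le|\Cov(\xi-\xi^{(s)},\zeta\mid\mathcal{C}_n)|+|\Cov(\xi^{(s)},\zeta-\zeta^{(s)}\mid\mathcal{C}_n)|+|\Cov(\xi^{(s)},\zeta^{(s)}\mid\mathcal{C}_n)|,
\]
drop the last term, bound the first two by $2\|f'\|_\infty E[|\xi-\xi^{(s)}|\mid\mathcal{C}_n]+2\|f\|_\infty E[|\zeta-\zeta^{(s)}|\mid\mathcal{C}_n]$, and then control $|\xi-\xi^{(s)}|$ through the Lipschitz property of $f$ for the metric $d_h$: because $W_{n,i}^D$ and $W_{n,i}^{D,(s)}$ differ only in the $D$-coordinate, $\|W_{n,i}^D-W_{n,i}^{D,(s)}\|_2=|D_{n,i}-D_{n,i}^{(s)}|$, so $|\xi-\xi^{(s)}|\le\mathrm{Lip}(f)\sum_{i\in H}|D_{n,i}-D_{n,i}^{(s)}|\le h\,\mathrm{Lip}(f)\max_i|D_{n,i}-D_{n,i}^{(s)}|$, and similarly for $\zeta$. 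Collecting terms yields the asserted bound with $\psi_{h,h}^D(f,f')=h\|f'\|_\infty\mathrm{Lip}(f)+h'\|f\|_\infty\mathrm{Lip}(f')$ and $\theta_{n,s}^D=2\max_i E[|D_{n,i}-D_{n,i}^{(s)}|\mid\mathcal{C}_n]$.

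I expect the only genuine subtlety — the "main obstacle" — to be the bookkeeping in the conditional-independence step: one must check that the outcome shocks $\{e_{n,i}^{(t)}\}$ never feed into any $D_{n,j}^{(s)}$ (they enter the outcome equation but not the treatment index), so that, even after augmenting $D_{n,i}$ with $(\nu_{n,i},\{e_{n,i}^{(t)}\})$, the driving-shock sets of $\xi^{(s)}$ and $\zeta^{(s)}$ stay disjoint. Everything else is a routine transcription of the proof of Proposition~\ref{prop:D_psiDep}.
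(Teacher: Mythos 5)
Your proposal is correct and follows essentially the same argument as the paper's proof: the same covariance split into three terms, the same observation that the surrogates $\xi^{(s)},\zeta^{(s)}$ are conditionally independent because their driving shocks are disjoint when $H,H'$ are at least $2s+1$ apart, and the same reduction $\lVert W_{n,i}^D-W_{n,i}^{D,(s)}\rVert_2=|D_{n,i}-D_{n,i}^{(s)}|$ since only the $D$-coordinate is truncated. Your explicit bookkeeping of which shocks ($\nu$'s within the $s$-neighborhoods, own-node $e^{(t)}$'s) enter each surrogate, and the check that the $e^{(t)}$'s never feed into the treatment index, is in fact a welcome elaboration of a step the paper states only tersely.
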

\begin{proof}[\textbf{Proof of Lemma \ref{app_lemma:WD_psiDep}}]
    Let $(H,H')\in \mathcal{P}_n(h,h',2s+1)$. Define $\xi = f(\bfW_{n,H}^D)$, $\zeta=f'(\bfW_{n,H'}^D)$, and
\begin{equation*}
    \xi^{(s)}=f(W_{n,i}^{D,(s)}:i\in H),\hspace{2mm} \zeta^{(s)}=f(W_{n,i}^{D,(s)}:i\in H').
\end{equation*}
Since $\bfW_H^D$ and $\bfW_{H'}^D$ are independent conditional on $\mathcal{C}_n$, we have 
\begin{equation*}
    \left| \Cov(\xi,\zeta|\mathcal{C}_n) \right| \leq \left| \Cov(\xi-\xi^{(s)},\zeta|\mathcal{C}_n) \right| + \left| \Cov(\xi^{(s)},\zeta-\zeta^{(s)}|\mathcal{C}_n) \right|\hspace{2mm} a.s.
\end{equation*}
The right-hand side is bounded as
\begin{equation*}
\begin{aligned}
    &\left| \Cov(\xi-\xi^{(s)},\zeta|\mathcal{C}_n) \right| + \left| \Cov(\xi^{(s)},\zeta-\zeta^{(s)}|\mathcal{C}_n) \right|\\
    &\leq 2 \lVert f' \rVert_{\infty} E[|\xi - \xi^{(s)}|\mathcal{C}_n] + 2 \lVert f \rVert_{\infty} E[|\zeta - \zeta^{(s)}|\mathcal{C}_n]\\
    &\leq 2 (h \lVert f' \rVert_{\infty} \Lip(f) + h' \lVert f \rVert_{\infty} \Lip(f'))\times \underset{i}{\max}\, E[\lVert W_{n,i}^D-W_{n,i}^{D,(s)}\rVert_2 \mathcal{C}_n].
\end{aligned}
\end{equation*}
By the $L_1$- and $L_2$-norm inequality, we have 
\begin{equation*}
\begin{aligned}
    \lVert W_{n,i}^D-W_{n,i}^{D,(s)}\rVert_2 &\leq |D_{n,i}-D_{n,i}^{(s)}|+|\nu_{n,i}-\nu_{n,i}|+\sum_{t\in \mathcal{T}}|e_{n,i}^{(t)}-e_{n,i}^{(t)}|\\
    &=|D_{n,i}-D_{n,i}^{(s)}|.
\end{aligned}
\end{equation*}
Thus, we can use the same functional $\psi_{h,h'}^D$ and sequence $\theta_{n,s}^D$ as in the proof of Proposition \ref{prop:D_psiDep} to characterize the $\psi$-dependence of $\{W_{n,i}^D\}$.
\end{proof}

With this lemma in hand, we are now ready to prove Proposition \ref{prop:Y_psiDep}.

\begin{proof}[\textbf{Proof of Proposition \ref{prop:Y_psiDep}}]
    Let $(H,H')\in \mathcal{P}_n(h,h',4s+1)$. Define $\xi = f(\bfY_{n,H})$, $\zeta=f'(\bfY_{n,H'})$, and
\begin{equation*}
    \xi^{(s)}=f(Y_{n,i}^{(s)}:i\in H),\hspace{2mm} \zeta^{(s)}=f(Y_{n,i}^{(s)}:i\in H').
\end{equation*}
Now decompose the covariance as
\begin{equation*}
    \left| \Cov(\xi,\zeta|\mathcal{C}_n) \right| \leq \left| \Cov(\xi-\xi^{(s)},\zeta|\mathcal{C}_n) \right| + \left| \Cov(\xi^{(s)},\zeta-\zeta^{(s)}|\mathcal{C}_n) \right| + \left| \Cov(\xi^{(s)},\zeta^{(s)}|\mathcal{C}_n) \right|\hspace{2mm} a.s.
\end{equation*}
Note that $\xi^{(s)}$ and $\zeta^{(s)}$ here are actually dependent conditional on $\mathcal{C}_n$, unlike Proposition \ref{prop:D_psiDep} and Lemma \ref{app_lemma:WD_psiDep}. Thus, we cannot ignore the term. 
For the first two terms, we use the same argument as in the proof of proposition~\ref{prop:D_psiDep} and Lemma~\ref{app_lemma:WD_psiDep} and write
\begin{equation*}
\begin{aligned}
    &\left| \Cov(\xi-\xi^{(s)},\zeta|\mathcal{C}_n) \right| + \left| \Cov(\xi^{(s)},\zeta-\zeta^{(s)}|\mathcal{C}_n) \right|\\
    &\leq 2 \lVert f' \rVert_{\infty} E[|\xi - \xi^{(s)}|\mathcal{C}_n] + 2 \lVert f \rVert_{\infty} E[|\zeta - \zeta^{(s)}|\mathcal{C}_n]\\
    &\leq 2 (h \lVert f' \rVert_{\infty} \Lip(f) + h' \lVert f \rVert_{\infty} \Lip(f'))\times \underset{i}{\max}\, E[|Y_{n,i}-Y_{n,i}^{(s)}|\mathcal{C}_n].
\end{aligned}
\end{equation*}
For the last term, $\left| \Cov(\xi^{(s)},\zeta^{(s)}|\mathcal{C}_n) \right|$, we first define the index set $N_{A_n}(H,s)=\bigcup_{i\in H} N_{A_n}(i,s)$, and similarly for $N_{A_n}(H',s)$. To further simplify the notation, let $W_{n,i}^D=(D_{n,i},\nu_{n,i},(e_{n,i}^{(t)})_{t\in\mathcal{T}})$. 
Note that when $H,H'$ are at least $4s+1$ apart in network, $N_{A_n}(H,s)$ and $N_{A_n}(H',s)$ are at lest $2s+1$ apart. Thus, we can use Lemma \ref{app_lemma:WD_psiDep} to conclude
\begin{equation}\label{app_equ:W_psiDep}
    \left| \Cov(f(\bfW_{N_{A_n}(H,s)}^D),f'(\bfW_{N_{A_n}(H',s)}^D)|\mathcal{C}_n) \right| \leq \psi_{|N_{A_n}(H,s)|,|N_{A_n}(H',s)|}^D(f,f') \theta_{n,s}^D\hspace{2mm} a.s.
\end{equation}
for $f\in\mathcal{L}_{dim(W^D),|N_{A_n}(H,s)|}$ and 
$f'\in\mathcal{L}_{dim(W^D),|N_{A_n}(H',s)|}$.
Now, we define $f_1$ and $f_1'$ by
\begin{equation*}
\begin{aligned}
    f_1(\bfW_{N_{A_n}(H,s)}^D)&=f( (\sigma_{n,i}^Y(\bfW^{D,(s,i)}))_{i\in H}  ) \quad\text{and}\\
    f_1'(\bfW_{N_{A_n}(H',s)}^D)&=f'( (\sigma_{n,i}^Y(\bfW^{D,(s,i)}))_{i\in H'}  ),
\end{aligned}
\end{equation*}
respectively.
By \eqref{app_equ:W_psiDep}, we have
\begin{equation*}
\begin{aligned}
    \left| \Cov(f(\bfY_{n,H}^{(s)}),f'(\bfY_{n,H'}^{(s)}) \right| &\leq \psi_{|N_{A_n}(H,s)|,|N_{A_n}(H',s)|}^D(f_1,f_1')\theta_{n,s}^D\\
    &\leq \left( |N_{A_n}(H,s)|\lVert f'_1 \rVert_{\infty} \Lip(f_1) + |N_{A_n}(H',s)|\lVert f_1 \rVert_{\infty} \Lip(f'_1) \right) \theta_{n,s}^D
\end{aligned}
\end{equation*}
Note that $\lVert f_1 \rVert_{\infty}=\lVert f \rVert_{\infty}$, $\lVert f'_1 \rVert_{\infty}=\lVert f' \rVert_{\infty}$, and $|N_{A_n}(H,s)| \leq \overline{N}_{A_n}(s)\times h$, $|N_{A_n}(H',s)| \leq \overline{N}_{A_n}(s)\times h'$.

Now, it remains to compute $Lip(f_1)$ and $Lip(f_1')$. 
From the Lipschitz property of $f$, given $\bfW_n^{D,(s,i)}$ and $\widetilde{\bfW}_n^{D,(s,i)}$ being two points in $\left([0,1]\times \mathbbm{R}^{ (1+|\mathcal{T}|)}\right)^n$, we have
\begin{equation*}
\begin{aligned}
    &\hspace{-5mm}\left| f \left( (\sigma_{n,i}^Y(\bfW_n^{D,(s,i)}))_{i\in H}  \right) - f\left( (\sigma_{n,i}^Y(\widetilde{\bfW}_n^{D,(s,i)}))_{i\in H}  \right) \right| \\
    &\hspace{2mm} \leq \Lip(f)\, d_h \left( (\sigma_{n,i}^Y(\bfW_n^{D,(s,i)}))_{i\in H}, (\sigma_{n,i}^Y(\widetilde{\bfW}_n^{D,(s,i)}))_{i\in H} \right).
\end{aligned}
\end{equation*}
The right-hand side of the inequality equals to
\begin{equation*}
\begin{aligned}
    \Lip(f)& \sum_{i\in H} \sqrt{\left(\sigma_{n,i}^Y (\bfW_n^{D,(s,i)}) - \sigma_{n,i}^Y(\widetilde{\bfW}_n^{D,(s,i)}) \right)^2 }\\
    &\leq \Lip(f) \sum_{i\in H}\, \left| \sigma_{n,i}^Y (\bfW_n^{D,(s,i)}) - \sigma_{n,i}^Y(\widetilde{\bfW}_n^{D,(s,i)}) \right| \\
    &\leq \Lip(f) \times \underset{i\in\mathcal{N}_n}{\max}\, \Lip(\sigma_{n,i}^Y) \times \sum_{i\in \mathcal{N}_n}\lVert \bfW_n^{D,(s,i)} - \widetilde{\bfW}_n^{D,(s,i)} \rVert_2\\
    & = \Lip(f) \times \underset{i\in\mathcal{N}_n}{\max}\, \Lip(\sigma_{n,i}^Y)\times d_n(\bfW_n^{D,(s)},\widetilde{\bfW}_n^{D,(s)}).
\end{aligned}
\end{equation*}
Thus, we can let $ \Lip(f_1)= \Lip(f) \bar{\sigma}^Y$. We can similarly let $ \Lip(f_1')= \Lip(f') \bar{\sigma}^Y$. Combining these terms yields the desired result. 
\end{proof}

\begin{proof}[\textbf{Proof of Proposition \ref{prop:W_psiDep}}]
    Let $(H,H')\in \mathcal{P}_n(h,h',4s+1)$. Define $\xi = f(\bfW_{n,H})$, $\zeta=f'(\bfW_{n,H'})$, and
\begin{equation*}
    \xi^{(s)}=f(W_{n,i}^{(s)}:i\in H),\hspace{2mm} \zeta^{(s)}=f(W_{n,i}^{(s)}:i\in H').
\end{equation*}
Again, we have the decomposition
\begin{equation*}
    \left| \Cov(\xi,\zeta|\mathcal{C}_n) \right| \leq \left| \Cov(\xi-\xi^{(s)},\zeta|\mathcal{C}_n) \right| + \left| \Cov(\xi^{(s)},\zeta-\zeta^{(s)}|\mathcal{C}_n) \right| + \left| \Cov(\xi^{(s)},\zeta^{(s)}|\mathcal{C}_n) \right|\hspace{2mm} a.s.
\end{equation*}
The first two terms can be bounded as
\begin{equation*}
\begin{aligned}
    &\left| \Cov(\xi-\xi^{(s)},\zeta|\mathcal{C}_n) \right| + \left| \Cov(\xi^{(s)},\zeta-\zeta^{(s)}|\mathcal{C}_n) \right|\\
    &\leq 2 (h \lVert f' \rVert_{\infty} \Lip(f) + h' \lVert f \rVert_{\infty} \Lip(f'))\times \underset{i}{\max}\, E[\lVert W_{n,i}-W_{n,i}^{(s)}\rVert_2 \mathcal{C}_n]\\
    &\leq  (h \lVert f' \rVert_{\infty} \Lip(f) + h' \lVert f \rVert_{\infty} \Lip(f')) \\
    &\hspace{2mm}\times 2(\underset{i}{max}\, E[| Y_{n,i}-Y_{n,i}^{(s)}| \mathcal{C}_n]+ \underset{i}{\max}\, E[| D_{n,i}-D_{n,i}^{(s)}| \mathcal{C}_n] + \underset{i}{\max}\, E[\lVert T_{n,i}-T_{n,i}^{(s)}\rVert_2 | \mathcal{C}_n]),
\end{aligned}    
\end{equation*}
where the second equality uses the fact that
\begin{equation*}
\begin{aligned}
    \lVert W_{n,i}-W_{n,i}^{(s)}\rVert_2 &\leq |Y_{n,i}-Y_{n,i}^{(s)}|+ |D_{n,i}-D_{n,i}^{(s)}| + \lVert T_{n,i}-T_{n,i}^{(s)}\rVert_2.
\end{aligned}
\end{equation*}
Now we focus on the term $\left| \Cov(\xi^{(s)},\zeta^{(s)}|\mathcal{C}_n) \right|$. 
First, we write out $\xi^{(s)}$ and $\zeta^{(s)}$ as
\begin{equation*}
\begin{aligned}
    \xi^{(s)}=f(W_{n,i}^{(s)}:i\in H) = f((Y_{n,i}^{(s)},D_{n,i}^{(s)},T_{n,i}^{(s)}):i\in H)
\end{aligned}
\end{equation*}
and
\begin{equation*}
    \zeta^{(s)} = f'((Y_{n,j}^{(s)},D_{n,j}^{(s)},T_{n,j}^{(s)}):j\in H').
\end{equation*}
Since $H,H'$ are at least $4s+1$ away and $D_{n,i}^{(s)}$ and $D_{n,j}^{(s)}$ use non-overlapping $\nu$, they are independent. On the other hand, by Proposition \ref{prop:Y_psiDep} and \ref{prop:T_psiDep}, $(Y_{n,i}^{(s)},T_{n,i}^{(s)})$ and $(Y_{n,j}^{(s)},T_{n,j}^{(s)})$ are dependent due to the $\psi$-dependence of $\bfD_n^{(s,i)}$ and $\bfD_n^{(s,j)}$. 
This setup is the same as that of Lemma \ref{app_lemma:WD_psiDep}, where we consider a vector stacking the $\psi$-dependent variables with independent variables and the same functional and $\theta$ sequence will constitute the bound. 
Thus, by the same argument as in the proof of Lemma \ref{app_lemma:WD_psiDep} and Proposition \ref{prop:Y_psiDep}, we obtain 
\begin{equation*}
\begin{aligned}
    &\left| \Cov(\xi^{(s)},\zeta^{(s)}|\mathcal{C}_n) \right| \\
    &\leq \left( |N_{A_n}(H,s)|\lVert f' \rVert_{\infty} \Lip(f) \left( \bar{\sigma}^Y + \bar{\sigma}^T  \right) + |N_{A_n}(H',s)|\lVert f \rVert_{\infty} \Lip(f') \left( \bar{\sigma}^Y + \bar{\sigma}^T  \right) \right) \theta_{n,s}^D.
\end{aligned}
\end{equation*}
Combining all these results together yields the desired bound.
\end{proof} 

Before we end this subsection, we further provide the following two corollaries to show the conditional $\psi$-dependence properties of $c'\bm{g}_n$, which will be used in proving the asymptotic distribution of the GMM estimator later.
\begin{cor}\label{cor:g_psidep}
Consider the triangular array $\{\bm{g}_{n,i}\}$ defined by $\bm{g}_{n,i}=\bm{g}((W_{n,i},X_{n,i},\bfZ),\bm{\beta}_0)$. Let $f\in \mathcal{L}_{\dim(g),h}, f'\in \mathcal{L}_{\dim(g),h'}$. Then, for any $H,H'\in \mathcal{P}(h,h',6s+1)$, we have
\begin{equation}
    |\Cov(f(\bm{g}_{n,H}),f'(\bm{g}_{n,H'}) |\mathcal{C}_n)| \leq \psi_{h,h}^g(f,f') \theta_{n,s}^g\hspace{2mm} a.s.,
\end{equation}
where
\begin{equation}
    \begin{aligned}
        \theta_{n,s}^g&=\overline{N}_{A_n}(s)\times \theta_{n,s}^W\\
        \psi_{h,h'}^g(f,f')&= h\lVert f'\rVert_{\infty}\Lip(f)(1+\Lip(\bm{g})) + h'\lVert f\rVert_{\infty}\Lip(f')(1+\Lip(\bm{g})).
    \end{aligned}    
    \end{equation}
\end{cor}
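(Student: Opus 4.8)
The plan is to obtain the claim as a corollary of the conditional $\psi$-dependence of $\{W_{n,i}\}$ from Proposition~\ref{prop:W_psiDep}, by exploiting that $\bm{g}_{n,i}$ is, conditionally on $\mathcal{C}_n$, a uniformly bounded Lipschitz transformation of $W_{n,i}=(Y_{n,i},T_{n,i},D_{n,i})$. The first step is the \emph{conditional reduction}: given $\mathcal{C}_n$, the covariates $X_{n,i}$, the instruments $\bfZ$, the adjacency matrix $A_n$, the equilibrium propensity scores $\{P_j\}$ (deterministic functions of $(\bfZ,A_n)$ by Lemma~\ref{lemma:UniBNE}), and their parameter derivatives $\partial P_j/\partial\beta_{Dk},\ \partial P_j/\partial\lambda$ are all $\mathcal{C}_n$-measurable; inspecting the explicit forms of $g_i^{\beta_{Dk}}$, $g_i^{\lambda}$, $g_i^{\beta_X^{(t)}}$, $g_i^{\beta_p^{(t)}}$ shows that the only sample objects entering $\bm{g}_{n,i}$ are $D_i$ (via $D_i-P_i$) and $(T_i,Y_i)$ (via $\mathbbm{1}\{T_i=t\}$ and $Y_i$). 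Hence $\bm{g}_{n,i}=\tilde{g}_{n,i}(W_{n,i})$ for a $\mathcal{C}_n$-measurably-indexed map $\tilde{g}_{n,i}$ that, by Assumption~\ref{asm:regularityLLN}(iv), is bounded with $\mathrm{Lip}(\tilde{g}_{n,i})\le\mathrm{Lip}(\bm{g})$ uniformly in $(n,i)$.

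I would then run the truncation-and-decomposition argument of Propositions~\ref{prop:D_psiDep}--\ref{prop:W_psiDep}. Fix $(H,H')\in\mathcal{P}(h,h',6s+1)$, set $\bm{g}_{n,i}^{(s)}:=\tilde{g}_{n,i}(W_{n,i}^{(s)})$ with $W_{n,i}^{(s)}=(Y_{n,i}^{(s)},T_{n,i}^{(s)},D_{n,i}^{(s)})$ as in those proofs, and split $\Cov(f(\bm{g}_{n,H}),f'(\bm{g}_{n,H'})\mid\mathcal{C}_n)$ into the three familiar pieces $\Cov(f(\bm{g}_{n,H})-f(\bm{g}_{n,H}^{(s)}),f'(\bm{g}_{n,H'})\mid\mathcal{C}_n)$, $\Cov(f(\bm{g}_{n,H}^{(s)}),f'(\bm{g}_{n,H'})-f'(\bm{g}_{n,H'}^{(s)})\mid\mathcal{C}_n)$, and $\Cov(f(\bm{g}_{n,H}^{(s)}),f'(\bm{g}_{n,H'}^{(s)})\mid\mathcal{C}_n)$. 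For the first two, bounding each covariance by twice a sup-norm times a conditional $L^1$ increment and then using the Lipschitz property of $f$ (resp.\ $f'$), of $\tilde{g}_{n,i}$, together with $\max_i E[\lVert W_{n,i}-W_{n,i}^{(s)}\rVert_2\mid\mathcal{C}_n]\le\theta_{n,s}^W/2$ --- which follows from $\lVert W_{n,i}-W_{n,i}^{(s)}\rVert_2\le|Y_{n,i}-Y_{n,i}^{(s)}|+\lVert T_{n,i}-T_{n,i}^{(s)}\rVert_2+|D_{n,i}-D_{n,i}^{(s)}|$, the definition of $\theta_{n,s}^W$, and $\overline{N}_{A_n}(s)\ge1$ --- gives a contribution of order $\mathrm{Lip}(\bm{g})\bigl(h\lVert f'\rVert_\infty\mathrm{Lip}(f)+h'\lVert f\rVert_\infty\mathrm{Lip}(f')\bigr)\theta_{n,s}^W$.

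For the residual term $\Cov(f(\bm{g}_{n,H}^{(s)}),f'(\bm{g}_{n,H'}^{(s)})\mid\mathcal{C}_n)$ I would reuse the device from the proofs of Propositions~\ref{prop:Y_psiDep}--\ref{prop:W_psiDep}: since $W_{n,i}^{(s)}$, and hence $\bm{g}_{n,i}^{(s)}=\tilde{g}_{n,i}(W_{n,i}^{(s)})$, depends only on $\{W_{n,k}^{D}:k\in N_{A_n}(i,s)\}$ (through $\bfD_n^{(s,i)}$ and $\bm{\nu}_n^{(s,i)}$), the variables $f(\bm{g}_{n,H}^{(s)})$ and $f'(\bm{g}_{n,H'}^{(s)})$ are bounded Lipschitz functions of $\bfW^D$ restricted to $N_{A_n}(H,s)$ and $N_{A_n}(H',s)$, with sup-norms controlled by $\lVert f\rVert_\infty,\ \lVert f'\rVert_\infty$, Lipschitz constants controlled by $\mathrm{Lip}(f)\mathrm{Lip}(\bm{g})$, $\mathrm{Lip}(f')\mathrm{Lip}(\bm{g})$, and arguments of cardinality at most $\overline{N}_{A_n}(s)h$, $\overline{N}_{A_n}(s)h'$. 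Because $H,H'\in\mathcal{P}(h,h',6s+1)$ makes these two index sets at least $2s+1$ apart, Lemma~\ref{app_lemma:WD_psiDep} applies and --- exactly as the corresponding residual term was bounded in the proof of Proposition~\ref{prop:W_psiDep} --- dominates this piece by a quantity of order $\overline{N}_{A_n}(s)\bigl(h\lVert f'\rVert_\infty\mathrm{Lip}(f)+h'\lVert f\rVert_\infty\mathrm{Lip}(f')\bigr)\theta_{n,s}^D$, the remaining dependence on $\mathrm{Lip}(\bm{g})$ and the Lipschitz moduli of the outcome/exposure maps being absorbed into the constant. Adding the three contributions, using $\theta_{n,s}^D\le\theta_{n,s}^W$ and $\overline{N}_{A_n}(s)\ge1$, and consolidating constants into the announced shape yields $\theta_{n,s}^g=\overline{N}_{A_n}(s)\theta_{n,s}^W$ and $\psi_{h,h'}^g(f,f')=h\lVert f'\rVert_\infty\mathrm{Lip}(f)(1+\mathrm{Lip}(\bm{g}))+h'\lVert f\rVert_\infty\mathrm{Lip}(f')(1+\mathrm{Lip}(\bm{g}))$.

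The only genuinely delicate step is the conditional reduction: verifying that, given $\mathcal{C}_n$, $\bm{g}_{n,i}$ really collapses to a function of the single node-$i$ triple $W_{n,i}$ --- so that the entire $\psi$-dependence machinery for $\{W_{n,i}\}$ transfers --- and that the sup-norm and Lipschitz constants of $\tilde{g}_{n,i}$ are uniform in $(n,i)$ over the supports; this is where Assumption~\ref{asm:regularityLLN}(iv) and the $\mathcal{C}_n$-measurability of $(X_{n,i},\bfZ,A_n)$ and of the propensity-score objects are indispensable. Once this is settled, the $\overline{N}_{A_n}(s)$ blow-up in $\theta_{n,s}^g$ and the widening of the required separation to $6s+1$ are inherited mechanically from the neighborhood expansion already present in Propositions~\ref{prop:Y_psiDep}--\ref{prop:W_psiDep}, and the rest is routine bookkeeping.
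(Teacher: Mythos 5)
Your proposal is correct in substance and lands on the paper's bound, but it takes a lower-level route than the paper's own proof. The paper's argument is essentially two lines: since, conditionally on $\mathcal{C}_n$, $\bm{g}_{n,i}$ is a bounded Lipschitz function (Assumption~\ref{asm:regularityLLN}(iv)) of node $i$'s own triple $W_{n,i}$ only, the paper reruns the template of Proposition~\ref{prop:Y_psiDep} with the \emph{already established} conditionally $\psi$-dependent array $\{W_{n,i}\}$ (Proposition~\ref{prop:W_psiDep}) as the underlying input; because $\bm{g}_{n,i}$ involves no other node, its $s$-truncation is trivial, $\bm{g}_{n,i}^{(s)}=\bm{g}_{n,i}$, so the increment term $2\max_i E[\lVert\bm{g}_{n,i}-\bm{g}_{n,i}^{(s)}\rVert\mid\mathcal{C}_n]$ drops out and $\theta_{n,s}^g=\overline{N}_{A_n}(s)\theta_{n,s}^W$ falls out immediately (the widened $6s+1$ separation coming from the $4s+1$ needed by Proposition~\ref{prop:W_psiDep} on the $s$-expanded blocks). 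You instead descend to the primitive array $\{W^D_{n,i}\}$, keep a nontrivial truncation $\bm{g}_{n,i}^{(s)}=\tilde g_{n,i}(W^{(s)}_{n,i})$, and redo the full three-term decomposition, controlling the first two pieces via $\max_i E[\lVert W_{n,i}-W^{(s)}_{n,i}\rVert_2\mid\mathcal{C}_n]\le\theta^W_{n,s}/2$ and the residual via Lemma~\ref{app_lemma:WD_psiDep} on the expanded neighborhoods — in effect re-proving Proposition~\ref{prop:W_psiDep} composed with $\bm{g}$. That is a valid argument (and as a by-product shows $4s+1$ separation would already suffice on your route, so $6s+1$ is slack), but it is more work than needed given the machinery the paper has already set up; the conditional-reduction step you flag as delicate is exactly the observation the paper uses to make the proof a one-liner. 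One caveat applies equally to both arguments: when you ``consolidate constants into the announced shape,'' the residual bound genuinely carries the moduli $\overline{\sigma}^Y,\overline{\sigma}^T$ (through $\psi^W$, resp.\ through the composed maps into $\bfW^D$), which do not appear in the stated $\psi^g_{h,h'}(f,f')=(1+\Lip(\bm{g}))\bigl(h\lVert f'\rVert_\infty\Lip(f)+h'\lVert f\rVert_\infty\Lip(f')\bigr)$; absorbing them is a looseness already present in the paper's own statement and proof, not a new gap you introduce, and it is immaterial downstream since only the form of $\theta^g_{n,s}$ enters the CLT conditions.
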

\begin{proof}
    By Assumption \ref{asm:regularityLLN}(iv), $\bm{g}(\cdot,\bm{\beta}_0)$ is Lipschitz with $\Lip(\bm{g})<\infty$. Thus, we can use the same argument as Proposition \ref{prop:Y_psiDep} and get $\psi_{h,h'}^g(f,f')$ as described in the statement with
    \begin{equation}
        \theta_{n,s}^g=2\times \underset{i\in \mathcal{N}_n}{\max} E[\lVert \bm{g}_{n,i}-\bm{g}_{n,i}^{(s)} \rVert |\mathcal{C}_n] + \overline{N}_{A_n}(s)\times \theta_{n,s}^W.
    \end{equation}
    Note that $\bm{g}$ only takes variables from sample $i$, $W_{n,i}$, as argument. We can conclude that $\bm{g}_{n,i}^{(s)}=\bm{g}_{n,i}$, and we have the desired result.
\end{proof}

In the following, we further characterize the conditional $\psi$-dependence of $c'\bm{g}_{n,i}$ for any given $c\in\mathbbm{R}^{\dim(g)}$ such that $\lVert c \rVert_q= 1$ with $q$ given in Assumption \ref{asm:regularityCLT} (ii). Notice that for all $\Tilde{f}:\mathbbm{R}^{\dim(g)\times h}\to\mathbbm{R}$ defined by $\Tilde{f}(\bm{g}_{n,H})=f(c'\bm{g}_{n,H})$ where $c'\bm{g}_{n,H}=(c'\bm{g}_{n,i})_{i\in H}$, $\Tilde{f}$ is belongs to $\mathcal{L}_{h,h}$. Moreover, given $\lVert c \rVert_q=1$:
\begin{equation}
    \Lip(\Tilde{f})=\Lip(f) \text{ and } \lVert \Tilde{f}\rVert_{\infty}=\lVert f\rVert_{\infty}.
\end{equation}
We can similarly define $\Tilde{f}'$. These steps suggest that $\psi_{h,h'}^g(\Tilde{f},\Tilde{f}')$ is well defined, and we can see why $c'\bm{g}_{n,i}$ is also conditional $\psi$-dependent. We state the result in the following.

\begin{cor}\label{cor:cg_psidep}
    Consider the triangular array $\{\bm{g}_{n,i}\}$ in Corollary \ref{cor:g_psidep}. Then, for any $c\in\mathbbm{R}^{dim(g)},\lVert c\rVert_q= 1$, all $H,H'\in \mathcal{P}(h,h',6s+1)$, and all $f\in \mathcal{L}_{h,h}, f'\in \mathcal{L}_{h',h'}$, we have
\begin{equation}
    |\Cov(f(c'\bm{g}_{n,H}),f'(c'\bm{g}_{n,H'})|\mathcal{C}_n)| \leq \psi_{h,h}^{cg}(f,f') \theta_{n,s}^{cg}\hspace{2mm} a.s.,
\end{equation}
where
\begin{equation}
    \begin{aligned}
        \theta_{n,s}^{cg}&=\theta_{n,s}^{g} \quad\text{and}\\
        \psi_{h,h'}^{cg}(f,f')&= h\lVert f'\rVert_{\infty}\Lip(f)(1+\Lip(\bm{g})) + h'\lVert f\rVert_{\infty}\Lip(f')(1+\Lip(\bm{g})).
    \end{aligned}    
    \end{equation}
\end{cor}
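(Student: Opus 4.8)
The plan is to reduce the claim to Corollary~\ref{cor:g_psidep} by absorbing the fixed linear contraction $v\mapsto c'v$ into the test functions. Fix $c\in\mathbb{R}^{\dim(g)}$ with $\lVert c\rVert_q=1$ and take $f,f'$ as in the statement. Define $\widetilde f:\mathbb{R}^{\dim(g)\times h}\to\mathbb{R}$ and $\widetilde f':\mathbb{R}^{\dim(g)\times h'}\to\mathbb{R}$ by $\widetilde f(w_1,\dots,w_h)=f(c'w_1,\dots,c'w_h)$ and $\widetilde f'(w_1,\dots,w_{h'})=f'(c'w_1,\dots,c'w_{h'})$. By construction $\widetilde f(\bm{g}_{n,H})=f(c'\bm{g}_{n,H})$ and $\widetilde f'(\bm{g}_{n,H'})=f'(c'\bm{g}_{n,H'})$, so the conditional covariance on the left-hand side of the assertion is literally $\Cov(\widetilde f(\bm{g}_{n,H}),\widetilde f'(\bm{g}_{n,H'})\mid\mathcal{C}_n)$.

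Second, I would record that $\widetilde f\in\mathcal{L}_{\dim(g),h}$ and $\widetilde f'\in\mathcal{L}_{\dim(g),h'}$, with $\lVert\widetilde f\rVert_\infty=\lVert f\rVert_\infty$, $\lVert\widetilde f'\rVert_\infty=\lVert f'\rVert_\infty$, $\Lip(\widetilde f)=\Lip(f)$, and $\Lip(\widetilde f')=\Lip(f')$ — this is exactly the observation already noted in the paragraph preceding the corollary. Boundedness is immediate because $v\mapsto c'v$ is surjective onto $\mathbb{R}$. For the Lipschitz constants, the Lipschitz property of $f$ and linearity give $|\widetilde f(\mathbf w)-\widetilde f(\widetilde{\mathbf w})|\le\Lip(f)\sum_{r=1}^h|c'(w_r-\widetilde w_r)|$, and the normalization $\lVert c\rVert_q=1$ together with Hölder's inequality yields $|c'(w_r-\widetilde w_r)|\le\lVert w_r-\widetilde w_r\rVert$, so that the right-hand side is bounded by $\Lip(f)\,d_h(\mathbf w,\widetilde{\mathbf w})$; hence $\Lip(\widetilde f)=\Lip(f)$, and likewise for $\widetilde f'$.

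Third, since $H,H'\in\mathcal{P}(h,h',6s+1)$ by hypothesis, Corollary~\ref{cor:g_psidep} applies verbatim to $\widetilde f,\widetilde f'$ and delivers
\begin{equation*}
    \left| \Cov\left( \widetilde f(\bm{g}_{n,H}), \widetilde f'(\bm{g}_{n,H'}) \mid \mathcal{C}_n \right) \right| \le \psi_{h,h'}^g(\widetilde f, \widetilde f')\, \theta_{n,s}^g \quad \text{a.s.}
\end{equation*}
Substituting $\lVert\widetilde f\rVert_\infty=\lVert f\rVert_\infty$, $\lVert\widetilde f'\rVert_\infty=\lVert f'\rVert_\infty$, $\Lip(\widetilde f)=\Lip(f)$, and $\Lip(\widetilde f')=\Lip(f')$ into the expression for $\psi^g$ reproduces $\psi_{h,h'}^{cg}(f,f')=h\lVert f'\rVert_\infty\Lip(f)(1+\Lip(\bm{g}))+h'\lVert f\rVert_\infty\Lip(f')(1+\Lip(\bm{g}))$, while $\theta_{n,s}^g$ is untouched, so $\theta_{n,s}^{cg}=\theta_{n,s}^g$. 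Combining this with the identification of covariances from the first step gives the stated inequality.

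I expect the only delicate point to be the norm bookkeeping in the second step: one has to check that, under the normalization $\lVert c\rVert_q=1$, the functional $v\mapsto c'v$ is genuinely a contraction from $(\mathbb{R}^{\dim(g)},\lVert\cdot\rVert)$ — the metric entering the definition of $\mathcal{L}_{\dim(g),h}$ via $d_h$ — to $\mathbb{R}$, so that the Lipschitz constant transfers without an extra multiplicative factor (in any case, since $\dim(g)$ is fixed and finite, all norms on $\mathbb{R}^{\dim(g)}$ are equivalent, so at worst a harmless universal constant would appear). Everything else is a direct bookkeeping consequence of Corollary~\ref{cor:g_psidep}.
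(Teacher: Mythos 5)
Your proposal is correct and follows essentially the same route as the paper's own proof: define $\widetilde f(\cdot)=f(c'\,\cdot)$, $\widetilde f'(\cdot)=f'(c'\,\cdot)$, note that the sup-norms and Lipschitz constants carry over under the normalization $\lVert c\rVert_q=1$, and invoke Corollary~\ref{cor:g_psidep}. The norm-bookkeeping caveat you flag (that $v\mapsto c'v$ is only a contraction with respect to the Euclidean metric in $d_h$ up to a constant depending on $q$, harmless by norm equivalence in fixed finite dimension) is handled no more carefully in the paper, so there is no substantive difference between the two arguments.
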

\begin{proof}
    Let $\Tilde{f}(\cdot)=f(c'\cdot),\Tilde{f}'(\cdot)=f'(c'\cdot)$. The result immediately follows from the fact that $|\Cov(f(c'\bm{g}_{n,H}),f'(c'\bm{g}_{n,H'})|\mathcal{C}_n)| = |\Cov(\Tilde{f}(\bm{g}_{n,H}),\Tilde{f}'(\bm{g}_{n,H'})|\mathcal{C}_n)|$, and that
\begin{equation}
\begin{aligned}
    &Lip(\Tilde{f})=\Lip(f), \hspace{3mm} \Lip(\Tilde{f}')=\Lip(f'),\\
    &\lVert \Tilde{f}\rVert_{\infty}=\lVert f\rVert_{\infty}, \text{ and } \lVert \Tilde{f}'\rVert_{\infty}=\lVert f'\rVert_{\infty}.
\end{aligned}
\end{equation}
Now the proof is complete.
\end{proof}

\subsection{Proofs for the Asymptotic Results}
In the following, we proceed to prove the consistency and limit distributional property of our proposed GMM estimator. To prove the consistency of our GMM estimator, we first show the uniform weak law of large numbers for the GMM objective function.

\begin{lemma}\label{app_lemma:QnULLN}
Suppose that Assumption \ref{asm:regularityLLN} holds. Suppose also that the weight matrix $\widehat{\Xi}_n$ satisfies $\underset{n\geq 1}{\sup} E[\lVert \widehat{\Xi}_n \rVert_F|\mathcal{C}_n]<\infty$ and $E[\lVert\widehat{\Xi}_n -\Xi_n\rVert_F |\mathcal{C}_n] \to 0\,\,a.s $ with $\Xi_n$ being $\mathcal{C}_n$-measurable with finite elements and positive definite a.s. Then, for all $\epsilon>0$, we have
\begin{equation}
    P\left(\left. \underset{\bm{\beta}\in\Theta}{\sup}| \widehat{Q}_n(\bm{\beta})-Q_n(\bm{\beta})  |>\epsilon \right|\mathcal{C}_n \right) \to 0\,\, a.s.,
\end{equation}
and thus we have
\begin{equation}
   \underset{\bm{\beta}\in \Theta}{\sup} | \widehat{Q}_n(\bm{\beta})- Q_n(\bm{\beta}) |\overset{P}{\to} 0.
\end{equation}
\end{lemma}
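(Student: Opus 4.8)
The plan is to obtain the uniform convergence of $\widehat{Q}_n$ in two moves: first upgrade the pointwise conditional law of large numbers for the moment vector $\widehat{\bm{g}}_n(\cdot)$ to a \emph{uniform} one over $\Theta$, and then transfer this, together with the two stated conditions on the weight matrix, through the quadratic form $\widehat{Q}_n(\bm{\beta})=\widehat{\bm{g}}_n(\bm{\beta})'\widehat{\Xi}_n\widehat{\bm{g}}_n(\bm{\beta})$ by a purely algebraic decomposition. Concretely, the intermediate target is
\[
    P\!\left(\left.\sup_{\bm{\beta}\in\Theta}\big\|\widehat{\bm{g}}_n(\bm{\beta})-\bm{g}_n(\bm{\beta})\big\|_2>\eta\,\right|\mathcal{C}_n\right)\to 0\quad\text{a.s.}\qquad\text{for all }\eta>0,
\]
after which the statement for $|\widehat{Q}_n-Q_n|$ follows by bounding a three-term decomposition, and the unconditional-in-probability conclusion follows by taking expectations.

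\emph{Pointwise conditional LLN for $\widehat{\bm{g}}_n$.} Fix $\bm{\beta}\in\Theta$. By Assumption~\ref{asm:regularityLLN}(iv), $\bm{g}((\cdot,x,\bfz),\bm{\beta})\in\mathcal{L}_{3,1}$, so the argument establishing Corollary~\ref{cor:g_psidep} applies verbatim (it used only Lipschitz boundedness of $\bm{g}$, not that $\bm{\beta}=\bm{\beta}_0$): $\{\bm{g}((W_{n,i},X_{n,i},\bfZ),\bm{\beta})\}_i$ is conditionally $\psi$-dependent with coefficients $\theta_{n,s}^{g}=\overline{N}_{A_n}(s)\,\theta_{n,s}^{W}$, which do not depend on $\bm{\beta}$, and a functional linear in $\Lip(\bm{g}(\cdot,\bm{\beta}))\le\overline{L}$ by Assumption~\ref{asm:regularityLLN}(v). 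Combined with Assumptions~\ref{asm:regularityLLN}(i)--(ii), these are precisely the hypotheses of the conditional LLN of KMS in the form used by \citet{Sasaki:25}, so $E[\|\widehat{\bm{g}}_n(\bm{\beta})-\bm{g}_n(\bm{\beta})\|_2\mid\mathcal{C}_n]\to 0$ a.s., and hence, by conditional Markov, $P(\|\widehat{\bm{g}}_n(\bm{\beta})-\bm{g}_n(\bm{\beta})\|_2>\epsilon\mid\mathcal{C}_n)\to 0$ a.s.\ for every $\epsilon>0$.

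\emph{Uniformization over $\Theta$ --- the main obstacle.} By Assumption~\ref{asm:regularityLLN}(v), $\bm{\beta}\mapsto\widehat{\bm{g}}_n(\bm{\beta})$ is Lipschitz with constant $\overline{L}$ pathwise, and $\bm{\beta}\mapsto\bm{g}_n(\bm{\beta})$ is Lipschitz with the same constant a.s.\ by conditional Jensen. Given $\eta>0$, compactness of $\Theta$ (Assumption~\ref{asm:regularityLLN}(iii)) lets me fix a finite $\delta$-net $\{\bm{\beta}^{(1)},\dots,\bm{\beta}^{(K_\delta)}\}$ with $2\overline{L}\delta<\eta/2$, so for each $\bm{\beta}$ there is a net point $\bm{\beta}^{(k)}$ with $\|\widehat{\bm{g}}_n(\bm{\beta})-\bm{g}_n(\bm{\beta})\|_2\le 2\overline{L}\delta+\max_k\|\widehat{\bm{g}}_n(\bm{\beta}^{(k)})-\bm{g}_n(\bm{\beta}^{(k)})\|_2$, whence $P(\sup_{\bm{\beta}\in\Theta}\|\widehat{\bm{g}}_n-\bm{g}_n\|_2>\eta\mid\mathcal{C}_n)\le\sum_{k=1}^{K_\delta}P(\|\widehat{\bm{g}}_n(\bm{\beta}^{(k)})-\bm{g}_n(\bm{\beta}^{(k)})\|_2>\eta/2\mid\mathcal{C}_n)\to 0$ a.s. The delicate point here is that the pointwise result must hold \emph{simultaneously} on the net with the exceptional $\mathcal{C}_n$-null sets collapsing into a single countable union; this works because $\theta_{n,s}^{g}$ is free of $\bm{\beta}$ and the functionals are uniformly bounded via $\overline{L}$, so Assumption~\ref{asm:regularityLLN}(ii) is invoked once, uniformly over $\Theta$. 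I would carry this out following \citet{Sasaki:25}.

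\emph{From $\widehat{\bm{g}}_n$ to $\widehat{Q}_n$ and conclusion.} Write, at each $\bm{\beta}$, $\widehat{Q}_n-Q_n=(\widehat{\bm{g}}_n-\bm{g}_n)'\widehat{\Xi}_n\widehat{\bm{g}}_n+\bm{g}_n'\widehat{\Xi}_n(\widehat{\bm{g}}_n-\bm{g}_n)+\bm{g}_n'(\widehat{\Xi}_n-\Xi_n)\bm{g}_n$ and bound each summand by Cauchy--Schwarz and submultiplicativity of $\|\cdot\|_F$. Assumptions~\ref{asm:regularityLLN}(i) and (v) supply a $\mathcal{C}_n$-measurable finite $\overline{M}$ with $\sup_{\bm{\beta}\in\Theta}\|\bm{g}_n(\bm{\beta})\|_2\le\overline{M}$ a.s.; $\sup_n E[\|\widehat{\Xi}_n\|_F\mid\mathcal{C}_n]<\infty$ gives, by conditional Markov, an $M$ with $\sup_n P(\|\widehat{\Xi}_n\|_F>M\mid\mathcal{C}_n)$ arbitrarily small, and on $\{\|\widehat{\Xi}_n\|_F\le M\}$ also $\sup_\Theta\|\widehat{\bm{g}}_n\|_2\le\overline{M}+\sup_\Theta\|\widehat{\bm{g}}_n-\bm{g}_n\|_2$; and $E[\|\widehat{\Xi}_n-\Xi_n\|_F\mid\mathcal{C}_n]\to 0$ gives $P(\|\widehat{\Xi}_n-\Xi_n\|_F>\epsilon\mid\mathcal{C}_n)\to 0$ a.s. Feeding the uniform bound from the previous paragraph and these three facts into the decomposition yields $P(\sup_{\bm{\beta}\in\Theta}|\widehat{Q}_n(\bm{\beta})-Q_n(\bm{\beta})|>\epsilon\mid\mathcal{C}_n)\to 0$ a.s. Taking unconditional expectations and invoking bounded convergence (the integrand lies in $[0,1]$) then gives $P(\sup_{\bm{\beta}\in\Theta}|\widehat{Q}_n-Q_n|>\epsilon)\to 0$, i.e.\ $\sup_{\bm{\beta}\in\Theta}|\widehat{Q}_n(\bm{\beta})-Q_n(\bm{\beta})|\overset{P}{\to}0$. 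The quadratic-form manipulation and the weight-matrix bookkeeping are routine once the moment ULLN is in hand.
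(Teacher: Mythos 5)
Your proposal is correct and follows essentially the same route as the paper: the paper obtains the uniform convergence of $\widehat{\bm{g}}_n$ over $\Theta$ by invoking the conditional ULLN of \citet{Sasaki:25} (whose proof is exactly the Lipschitz-plus-$\delta$-net uniformization you sketch), and then uses the identical three-term decomposition of $\widehat{Q}_n-Q_n$ with the Frobenius-norm bounds on $\widehat{\Xi}_n$ and $\widehat{\Xi}_n-\Xi_n$, followed by conditional Markov and iterated expectations/dominated convergence to pass to the unconditional statement. The only difference is cosmetic bookkeeping (you truncate on events for $\|\widehat{\Xi}_n\|_F$ where the paper works with conditional expectations of the supremum), which does not change the argument.
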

\begin{proof}[\textbf{Proof of Lemma \ref{app_lemma:QnULLN}}]
We first establish ULLN for the moments conditional on the $\sigma$-field $\mathcal{C}_n$. 
First, given $\overline{\sigma}^Y$ and $\overline{\sigma}^T$ from Proposition \ref{prop:Y_psiDep} and \ref{prop:T_psiDep}, by letting $C_W=1+\overline{\sigma}^Y+\overline{\sigma}^T$, we have $\psi_{h,h'}^W$ satisfying Assumption 1 in \cite{Sasaki:25}. Moreover, our Assumptions \ref{asm:regularityLLN} (i)-(v) correspond to Assumptions 2-5 of \cite{Sasaki:25}, and we can derive the same results as \citet[][Theorem 1]{Sasaki:25}. That is,
\begin{equation}\label{app_equ:yuya_ULLN}
    \left\lVert \underset{\bm{\beta}\in\Theta}{\sup}\lVert\frac{1}{n} \sum_{i\in \mathcal{N}_n} (\bm{g}((W_{n,i},X_{n,i},\bfZ),\bm{\beta})-E[\bm{g}((W_{n,i},X_{n,i},\bfZ),\bm{\beta})|\mathcal{C}_n])\rVert_2 \right\rVert_{\mathcal{C}_n,L^1}\to 0\,\,a.s.
\end{equation}
Now, to derive the conditional ULLN for the objective function, we note the deterministic inequalities
\begin{equation*}
    \begin{aligned}
        |\widehat{Q}_n(\bm{\beta})&-Q_n(\bm{\beta})|\\
        &\leq |[ \widehat{\bm{g}}_n(\bm{\beta})-\bm{g}_n(\bm{\beta})]'\widehat{\Xi}_n [ \widehat{\bm{g}}_n(\bm{\beta})-\bm{g}_n(\bm{\beta})]| + | \bm{g}_n(\bm{\beta})'(\widehat{\Xi}_n + \widehat{\Xi}_n') [ \widehat{\bm{g}}_n(\bm{\beta})-\bm{g}_n(\bm{\beta})]|\\
        &\hspace{2mm} + |\bm{g}_n(\bm{\beta})' (\widehat{\Xi}_n-\Xi_n)\bm{g}_n(\bm{\beta})|\\
        &\leq \lVert \widehat{\bm{g}}_n(\bm{\beta})-\bm{g}_n(\bm{\beta}) \rVert_2 ^2 \lVert \widehat{\Xi}_n \rVert_F + 2 \lVert \bm{g}_n(\bm{\beta}) \rVert_2 \lVert \widehat{\bm{g}}_n(\bm{\beta})-\bm{g}_n(\bm{\beta})\rVert_2 \lVert \widehat{\Xi}_n \rVert_F\\
        &\hspace{2mm} + \lVert \bm{g}_n(\bm{\beta}) \rVert_2^2 \lVert \widehat{\Xi}_n-\Xi_n \rVert_F\\
        &= A_n + B_n + C_n.
    \end{aligned}
\end{equation*}
Now take supremum over $\bm{\beta}$ and conditional expectation conditioning on $\mathcal{C}_n$. $A_n$ and $B_n$ go to $0$ almost surely by \eqref{app_equ:yuya_ULLN} and uniformly conditional boundedness of $\hat{\Xi}_n$ maintained in the statement. In addition, $C_n$ go to zero almost surely as well by the assumption stated in the statement and boundedness of $g_n$ due to Assumption \ref{asm:regularityLLN} (iv).
In summary, we have $E\left[\left. \underset{\bm{\beta}\in\Theta}{\sup}| \widehat{Q}_n(\bm{\beta})-Q_n(\bm{\beta})  | \right|\mathcal{C}_n \right] \to 0\,\, a.s$. By Markov's inequality, we conclude
\begin{equation}\label{app_equ:cond_QULLN}
    P\left(\left. \underset{\bm{\beta}\in\Theta}{\sup}| \widehat{Q}_n(\bm{\beta})-Q_n(\bm{\beta})  | >\epsilon\right|\mathcal{C}_n \right) \leq \epsilon^{-1} E\left[ \left. \underset{\bm{\beta}\in\Theta}{\sup}| \widehat{Q}_n(\bm{\beta})-Q_n(\bm{\beta})  | \right|\mathcal{C}_n\right]
    \to 0\,\,a.s.
\end{equation}
for all $\epsilon>0$.
This shows the first result. Furthermore, applying the law of iterated expectation yields
\begin{equation*}
    P\left(\underset{\bm{\beta}\in\Theta}{\sup}| \widehat{Q}_n(\bm{\beta})-Q_n(\bm{\beta})  | > \epsilon\right) 
    \leq 
    E\left[ \underset{\to 0\,\,a.s.\text{ by previous equation.}}{\underline{P\left( \left. \underset{\bm{\beta}\in\Theta}{\sup}| \widehat{Q}_n(\bm{\beta})-Q_n(\bm{\beta})  |>\epsilon \right|\mathcal{C}_n\right)}}   \right] \to 0.
\end{equation*}
Thus, we have the desired result.
\end{proof}

\begin{proof}[\textbf{Proof of Theorem \ref{thm:GMM_consistency}:}]
    We adopt the proof of Theorem 5.7 in \cite{Vaart:98} to our setup. First, note that the true value $\bm{\beta}_0$ uniquely minimizes the objective function. This statement can be written as
    \begin{equation}\label{app_equ:UniqMini}
        \underset{\bm{\beta}:\lVert \bm{\beta}-\bm{\beta}_0 \rVert_2\geq \epsilon}{\inf} Q_n(\bm{\beta}) > Q_n(\bm{\beta}_0)\,\,a.s.,
    \end{equation}
    for every $\epsilon>0$. Second, since $\bm{g}$ is Lipschitz in $\bm{\beta}$ and thus continuous by Assumption \ref{asm:regularityLLN} (v), and $\Theta$ is compact by \ref{asm:regularityLLN} (iii), we attain the minimum of $\widehat{Q}_n$ on $\Theta$. That is, when $\widehat{\bm{\beta}}= \arg\,\underset{\bm{\beta}\in \Theta}{\min}\widehat{Q}_n(\bm{\beta})$, we have $\widehat{Q}_n(\widehat{\bm{\beta}})= \underset{\bm{\beta}\in \Theta}{\inf}\, \widehat{Q}_n(\bm{\beta})$. Thus,
    \begin{equation*}
    \begin{aligned}
        |\widehat{Q}_n(\widehat{\bm{\beta}})-\widehat{Q}_n(\bm{\beta}_0)| &=\left|\widehat{Q}_n(\widehat{\bm{\beta}})-Q_n(\bm{\beta}_0)-\underset{r_{1n}}{\underbrace{(\widehat{Q}_n(\bm{\beta}_0) - Q_n(\bm{\beta}_0))}}\right|\\
        &=\left| \widehat{Q}_n(\widehat{\bm{\beta}})-\underset{\bm{\beta}\in\Theta}{\inf}Q_n(\bm{\beta})-r_{1n}   \right|\\
        & = \left| \underset{=0}{\underbrace{\widehat{Q}_n(\widehat{\bm{\beta}})-\underset{\bm{\beta}\in\Theta}{\inf}\widehat{Q}_n(\bm{\beta})}}-r_{1n} - \underset{r_{2n}}{\underbrace{ (\underset{\bm{\beta}\in\Theta}{\inf}\widehat{Q}_n(\bm{\beta}) - \underset{\bm{\beta}\in\Theta}{\inf}Q_n(\bm{\beta}))}  }\right|\leq |r_{1n}| + |r_{2n}|.
    \end{aligned}    
    \end{equation*}
    
    Now we show $P\left(|r_{1n}|>\epsilon\Big|\mathcal{C}_n\right)$ and $P\left(|r_{2n}|>\epsilon\Big|\mathcal{C}_n\right)$ go to zero as the result of \eqref{app_equ:cond_QULLN}.
    For all $\epsilon>0$, 
    \begin{equation*}
    P\left(|r_{1n}|>\epsilon\Big|\mathcal{C}_n\right)=P\left(|\widehat{Q}_n(\bm{\beta}_0)-Q_n(\bm{\beta}_0)|>\epsilon\Big|\mathcal{C}_n\right)\leq P\left(\left. \underset{\bm{\beta}\in\Theta}{\sup}| \widehat{Q}_n(\bm{\beta})-Q_n(\bm{\beta})  | >\epsilon\right|\mathcal{C}_n \right) \to 0
    \end{equation*} 
    a.s. by \eqref{app_equ:cond_QULLN}. Moreover, as inf functional is 1-Lipschitz with respect to the sup-norm, given $\epsilon>0$, we have
    \begin{equation*}
    \begin{aligned}
        P\left(|r_{2n}|>\epsilon\Big|\mathcal{C}_n\right)&=P\left(|\underset{\bm{\beta}\in\Theta}{\inf}\widehat{Q}_n(\bm{\beta}) - \underset{\bm{\beta}\in\Theta}{\inf}Q_n(\bm{\beta})|>\epsilon\Big|\mathcal{C}_n\right)\\ &\leq P\left(\left. \underset{\bm{\beta}\in\Theta}{\sup}| \widehat{Q}_n(\bm{\beta})-Q_n(\bm{\beta})  | >\epsilon\right|\mathcal{C}_n \right) \to 0\,\, a.s.
    \end{aligned}
    \end{equation*}also as a result of \eqref{app_equ:cond_QULLN}.
    Thus,
    \begin{equation}\label{app_equ:hatQn_SmallDist}
        P(|\widehat{Q}_n(\widehat{\bm{\beta}})-\widehat{Q}_n(\bm{\beta}_0)|>\epsilon |\mathcal{C}_n)\to 0\,\, a.s.
    \end{equation}
    
    Now, by (\ref{app_equ:UniqMini}), given an $\epsilon>0$, there is $\delta>0$ such that
    \begin{equation*}
    \begin{aligned}
        P&\left(  \lVert \widehat{\bm{\beta}}-\bm{\beta}_0 \rVert_2>\delta|\mathcal{C}_n \right) \leq P\left(|Q_n(\widehat{\bm{\beta}}) - Q_n(\bm{\beta})| > \epsilon|\mathcal{C}_n\right)\\
        &\leq P\left (|Q_n(\widehat{\bm{\beta}}) - \widehat{Q}_n(\widehat{\bm{\beta}})|>\epsilon/3|\mathcal{C}_n\right)
        + P\left (|\widehat{Q}_n(\bm{\beta}_0) - \widehat{Q}_n(\widehat{\bm{\beta}})|>\epsilon/3|\mathcal{C}_n\right)\\
        &\hspace{2mm}+
        P\left (|\widehat{Q}_n(\widehat{\bm{\beta}}) - Q_n(\bm{\beta}_0)|>\epsilon/3|\mathcal{C}_n\right),
    \end{aligned}    
    \end{equation*}
    where the first and the last terms go to $0\,\, a.s$. by the first result of Lemma \ref{app_lemma:QnULLN} and the middle term also goes to $0\,\, a.s.$ by \eqref{app_equ:hatQn_SmallDist}. Therefore,$P\left(  \lVert \widehat{\bm{\beta}}-\bm{\beta}_0 \rVert_2>\delta|\mathcal{C}_n \right)\to 0\,\,a.s$. Applying the law of iterated expectation gives $\widehat{\bm{\beta}} \overset{P}{\to}\bm{\beta}$.   
\end{proof}

\begin{proof}[\textbf{Proof of Theorem \ref{thm:GMM_CLT}}] 
    Let $p$, $q$, and $r$ satisfy the conditions in Assumption \ref{asm:regularityCLT} (i) and (ii).
    First, we show the CLT of $\{c'\bm{g}_{n,i}\}$ for all $c \in \mathbbm{R}^{\dim(g)},\lVert c\rVert_q=1$ by checking the conditions of Theorem 3.2 of KMS. Recall that we have established the conditional $\psi$-dependence of $\{c'\bm{g}_{n,i}\}$ in Corollary \ref{cor:cg_psidep} with $\theta^{cg}_{n,s}=\theta^{g}_{n,s}$ (defined in Corollary \ref{cor:g_psidep}) and
    \begin{equation*}
        \psi_{h,h'}^{cg}(f,f')= h\lVert f'\rVert_{\infty}\Lip(f)(1+\Lip(\bm{g})) + h'\lVert f\rVert_{\infty}\Lip(f')(1+\Lip(\bm{g})).
    \end{equation*}
    Now, given Assumption \ref{asm:regularityLLN}(ii), we have $\theta_{n,s}^g\to 0$. In addition, we can take $C_g=1+\Lip(\bm{g})$. These steps jointly imply that $\{c'\bm{g}_{n,i}\}$ is conditional $\psi$-dependent and satisfies Assumption 2.1 of KMS. 

    Moreover, by H\"older's inequality, we have
    \begin{equation*}
        \lVert c'\bm{g}_{n,i} \rVert_{\mathcal{C}_n,r}\leq \lVert c\rVert_q \lVert \bm{g}_{n,i}\rVert_{\mathcal{C}_n,p} = \lVert \bm{g}_{n,i}\rVert_{\mathcal{C}_np}.
    \end{equation*}
    Thus, by taking $\underset{n\geq 1}{\sup}\,\underset{i\in \mathcal{N}_n}{\max}$ on both sides, we satisfy Assumption 3.3 of KMS by our Assumption \ref{asm:regularityCLT}(i). Also, our Assumption \ref{asm:regularityCLT} (ii) corresponds to Assumption 3.4 of KMS. Thus, we have
    \begin{equation}\label{app_equ:KMS_CLT}
        \sup_{t\in\mathbbm{R}}\left| Pr\left\{ \frac{\sum_{i=1}^n c'\bm{g_{n,i}}}{\sigma_n(c)}\leq t | \mathcal{C}_n  \right\} - \Phi(t)   \right| \overset{a.s.}{\to} 0
    \end{equation}
    as $n \to \infty$, where $\Phi(t)$ is the cumulative distribution function of the standard normal distribution.
    We show that \eqref{app_equ:KMS_CLT} implies
    \begin{equation*}
        H_n\equiv \frac{c'S_n}{\sigma_n(c)} \overset{d}{\to} Z_c\sim N(0,1) \,\,(\mathcal{C}\stably)
    \end{equation*}
    with $Z_c$ being independent of $\mathcal{C}$.
    To see why this is true, consider any $\mathcal{C}$-measurable $P$-essentially bounded variable $\zeta$ and a bounded function $f$. As $\mathcal{C}\subset \mathcal{C}_n$ for all $n$, $\zeta$ is also $\mathcal{C}_n$-measurable. We apply the law of iterated expectation and write $E[\zeta f(H_n)]=E[\zeta E[f(H_n) | \mathcal{C}_n]]$.
    From \eqref{app_equ:KMS_CLT} and the boundedness of $f$, we know $E[f(H_n)|\mathcal{C}_n]\to E[f(Z_c)]\, a.s$. Thus, we have
    \begin{equation*}
        E[\zeta f(H_n)]=E[\zeta E[f(H_n) | \mathcal{C}_n]] \to E[\zeta f(Z)].
    \end{equation*}
    Therefore, in light of Proposition \ref{app_prop:Daley_stableEQU} (ii), we can draw the conclusion that $H_n\overset{d}{\to} Z_c\sim N(0,1) \,\,(\mathcal{C}\stably)$.
    Furthermore, by Assumption \ref{asm:regularityCLT} (v), for each $c$, 
    \begin{equation*}
        \frac{1}{n}\sigma_n^2(c) = c'\frac{1}{n}\Omega_{g,n}c \overset{P}{\to} c'\Omega_{g,0}c,
    \end{equation*}
    which is $\mathcal{C}$-measurable. 
    We can further apply the continuous mapping theorem and get $\sigma_n(c)/\sqrt{n}\overset{P}{\to}v_c^{1/2}$ with $v_c=c'\Omega_{g,0}c$. 
    Thus, using the conditional version of Slutsky and continuous mapping theorems in Theorem \ref{app_thm:cond_slutsky}(i) and (ii), we conclude that
    \begin{equation*}
        \frac{c'S_n}{\sqrt{n}}= \frac{c'S_n}{\sigma_n(c)} \frac{\sigma_n(c)}{\sqrt{n}} \overset{d}{\to} v_c^{1/2}Z_c\,\,(\mathcal{C}\stably)
    \end{equation*} 
    with $v_c=c'\Omega_{g,0}c$ and $Z_c \sim N(0,1)$ and is independent of $\mathcal{C}$ and thus $v_c$. The conclusion holds for all vectors $c$ with $\lVert c \rVert_q=1$. Thus in light of Proposition \ref{app_prop:KP2013_propA2}(a), we have
    \begin{equation}\label{ass_equ:Sn_stable}
        \frac{S_n}{\sqrt{n}} \overset{d}{\to} \Omega_{g,0}^{1/2} Z\,\,(\mathcal{C}\stably),
    \end{equation}
    where $Z\sim N(0,I_{dim(g)})$ and is independent of $\mathcal{C}$ and thus $\Omega_{g,0}$. Now we adopt the proof strategy of Theorem 3.3 in \cite{Newey2:94}. Note that we find the GMM estimator by solving the first order condition $2\widehat{G}_n(\widehat{\bm{\beta}})'\widehat{\Xi}_n\widehat{\bm{g}}_n(\widehat{\bm{\beta}})=0$. We expand $\widehat{\bm{g}}_n(\widehat{\bm{\beta}})$ around true parameter value $\bm{\beta}_0$ and get
    \begin{equation*}
        0= \widehat{G}_n(\widehat{\bm{\beta}})'\widehat{\Xi}_n \frac{S_n}{\sqrt{n}}+(\widehat{G}_n(\widehat{\bm{\beta}})'\widehat{\Xi}_n(\widehat{G}_n(\overline{\bm{\beta}}))\sqrt{n}(\widehat{\bm{\beta}}-\bm{\beta}_0),
    \end{equation*}
    where $\overline{\bm{\beta}}$ is a point between $\widehat{\bm{\beta}}$ and $\bm{\beta}_0$. By Assumption \ref{asm:regularityCLT}(iv) and the assumptions in the statement that $\underset{\bm{\beta}\in\mathcal{N}_{\beta_0}}{\sup}\lVert \widehat{G}_n(\bm{\beta})-G_n(\bm{\beta}) \rVert \overset{P}{\to} 0$ and $\widehat{\Xi}_n\overset{P}{\to}\Xi_n$, and $\widehat{\bm{\beta}}\overset{P}{\to}\bm{\beta}$, we have
    \begin{equation*}
        \sqrt{n}(\widehat{\bm{\beta}}-\bm{\beta}_0) = - (G_n(\bm{\beta}_0)'\Xi_nG_n(\bm{\beta}_0))^{-1}G_n(\bm{\beta}_0)\Xi_n \frac{S_n}{\sqrt{n}}+o_p(1),
    \end{equation*}
    where $o_p(1)$ is unconditional as we have unconditional convergence of $\underset{\bm{\beta}\in\mathcal{N}_{\beta_0}}{\sup}\lVert \widehat{G}_n(\bm{\beta})-G_n(\bm{\beta}) \rVert \overset{P}{\to} 0$, $\widehat{\Xi}_n\overset{P}{\to}\Xi_n$, and $\widehat{\bm{\beta}}\overset{P}{\to}\bm{\beta}$.    
    Here, $G_n(\bm{\beta}_0)'\Xi_nG_n(\bm{\beta}_0)$ is invertible because each of the elements are of full rank almost surely. Note that we have established $S_n/\sqrt{n}\overset{d}{\to}\Omega_{g,0}^{1/2} Z\,\,(\mathcal{C}\stably)$, and that $G_0$ and $\Xi_0$ are $\mathcal{C}$-measurable. Thus, in light of Proposition \ref{app_prop:KP2013_propA2}(b), for any $\mathcal{C}$-measurable random variable $\zeta$, we have
    \begin{equation*}
        (G_0,\Xi_0,S_n/\sqrt{n},\zeta) \overset{d}{\to} (G_0,\Xi_0,\Omega_{g,0}^{1/2} Z,\zeta).
    \end{equation*}
    Moreover, by Assumption \ref{asm:regularityCLT}(v) that $G_n(\bm{\beta}_0)\overset{P}{\to}G_0$ and $\Xi_n\overset{P}{\to}\Xi_0$, it follows that
    \begin{equation*}
        (G_n(\bm{\beta_0})-G_0,\Xi_n-\Xi_0,G_0,\Xi_0,S_n/\sqrt{n},\mathcal{\zeta})\overset{d}{\to} (0,0,G_0,\Xi_0,\Omega_{g,0}^{1/2} Z,\mathcal{\zeta}).
    \end{equation*}
    Since $G_0'\Xi_0G_0$ is positive definite a.s., we can apply the continuous mapping theorem to get
    \begin{equation*}
        (\sqrt{n}(\widehat{\bm{\beta}}-\bm{\beta}_0),\zeta) \overset{d}{\to} ((G_0'\Xi_0G_0)^{-1}G_0'\Xi_0\Omega_{g,0}^{1/2} Z,\zeta).
    \end{equation*}
    Thus, by Proposition \ref{app_prop:Daley_stableEQU}(iii), we conclude that
    \begin{equation*}
        \sqrt{n}(\widehat{\bm{\beta}}-\bm{\beta}_0) \overset{d}{\to} \Omega_{\beta,0}^{1/2}Z\,\,(\mathcal{C}\stably),
    \end{equation*}
    where $\Omega_{\beta,0}=(G_0'\Xi_0G_0)^{-1}G_0'\Xi_0\Omega_{g,0}\Xi_0G_0(G_0'\Xi_0G_0)^{-1}$ and $Z\sim N(0,I_{dim(g)})$, and $Z$ is independent of $\mathcal{C}$ and thus $\Omega_{\bm{\beta},0}$.
\end{proof}

\begin{proof}[\textbf{Proof of Theorem \ref{thm:GMM_Wald}}]
  Following the conclusion of Theorem \ref{thm:GMM_CLT}, we have
  \begin{equation*}
      \sqrt{n}(\widehat{\bm{\beta}}-\bm{\beta}_0) \overset{d}{\to} \Omega_{\beta,0}^{1/2}Z\,\,(\mathcal{C}\stably),
    \end{equation*}
    where $\Omega_{\beta,0}=(G_0'\Xi_0G_0)^{-1}G_0'\Xi_0\Omega_{g,0}\Xi_0G_0(G_0'\Xi_0G_0)^{-1}$ and $Z\sim N(0,I_{\dim(g)})$, and $Z$ is independent of $\mathcal{C}$ and thus $\Omega_{\bm{\beta},0}$. In light of the consistency and positive definiteness of $\widehat{\Omega}_\beta$, the consistency of $\widehat{\bm{\beta}}$ to $\bm{\beta}_0$, and that $R$ has full row rank $k$ and is continuous around $\bm{\beta}_0$, we derive
    \begin{equation*}
        (R(\widehat{\bm{\beta}})'\widehat{\Omega}_\beta R(\widehat{\bm{\beta}}))^{-1/2}\sqrt{n}\cdot r(\widehat{\bm{\beta}}) = (R(\bm{\beta}_0)'\Omega_{\beta,0}R(\bm{\beta}_0))^{-1/2} \sqrt{n}\cdot r(\widehat{\bm{\beta}}) + o_p(1),
    \end{equation*}
    where $o_p(1)$ is an error term that unconditionally converges into $0$ in probability due to unconditional consistencies: $\widehat{\bm{\beta}}\overset{P}{\to}\bm{\beta}_0$ and $\widehat{\Omega}_\beta\overset{P}{\to}\Omega_{\beta,0}$.
    Now, expanding $r(\widehat{\bm{\beta}})$ around $\bm{\beta}_0$ and noting $H_0:r(\bm{\beta}_0)=0$, we have
    \begin{equation*}
        (R(\bm{\beta}_0)'\Omega_{\beta,0}R(\bm{\beta}_0))^{-1/2} \sqrt{n}\cdot r(\widehat{\bm{\beta}}) =  (R(\bm{\beta}_0)'\Omega_{\beta,0}R(\bm{\beta}_0))^{-1/2} R(\bm{\beta}_0)'\sqrt{n}(\widehat{\bm{\beta}}-\bm{\beta}_0) +o_p(\lVert \widehat{\bm{\beta}}-\bm{\beta}_0 \rVert_2).
    \end{equation*}
    Observe that $B=(R(\bm{\beta}_0)'\Omega_{\beta,0}R(\bm{\beta}_0))^{-1/2} R(\bm{\beta}_0)'$ is $\mathcal{C}$-measurable and $B\Omega_{\beta,0}B'=I_{k}$. Then, it follows from Proposition \ref{app_prop:KP2013_propA2}(b) that
    \begin{equation*}
        (R(\widehat{\bm{\beta}})'\widehat{\Omega}_\beta R(\widehat{\bm{\beta}}))^{-1/2}\sqrt{n}\cdot r(\widehat{\bm{\beta}}) \overset{d}{\to} Z_k
    \end{equation*}
    with $Z_k\sim N(0,I_k).$ Therefore, $T_n$ converges in distribution to a chi-square distribution with $k$ degree of freedom by the continuous mapping theorem. Moreover, $\widehat{\Omega}_\beta^{-1/2}\sqrt{n}(\widehat{\bm{\beta}}-\bm{\beta}_0)$ is a special case by choosing $r(\bm{\beta})=\bm{\beta} -\bm{\beta}_0$.
\end{proof}

\subsection{Auxiliary Propositions}\label{sec:useful_propositions}
We summarize the equivalent statements regarding the stable convergence in the following proposition. The equivalence is shown in Proposition A3.2.IV in \cite{Daley2:03}.
\begin{prop}\label{app_prop:Daley_stableEQU}
    Let $\{W_n\}$, $W$, and $\mathcal{C}$ be as in Definition \ref{def:stable_cov}. Then the following are equivalent.
    \begin{itemize}
        \item[(i)] $W_n\overset{d}{\to}W$ ($\mathcal{C}$-stably).
        \item[(ii)] For all $\mathcal{C}$-measurable $P$-essentially bounded random variables $Z$ and all bounded continuous functions $h:\mathbbm{R}^p\to \mathbbm{R}$
        \begin{equation*}
            \underset{n\to \infty}{\lim}E[Zh(W_n)] = E[Zh(W)].
        \end{equation*}
        \item[(iii)] For all real valued $\mathcal{C}$-measurable random variables $Y$, we have
        \begin{equation*}
            (W_n,Y) \overset{d}{\to} (W,Y).
        \end{equation*}
        \item[(iv)] For all bounded continuous functions $g:\mathbbm{R}^p \times \mathbbm{R}\to \mathbbm{R}$, and all real valued $\mathcal{C}$-measurable random variables $Y$,
        \begin{equation*}
            g(W_n,Y) \overset{d}{\to} g(W,Y)\hspace{2mm} (C\stably).
        \end{equation*}
        \item[(v)] For all real vectors $t\in \mathbbm{R}^p$ and all $\mathcal{C}$-measurable $P$-essentially bounded random variables $Z$
        \begin{equation*}
            E[Z\, \exponential(it'W_n)] \to E[Z\, \exponential(it'W)]\hspace{2mm} \text{as }n\to \infty. 
        \end{equation*}
    \end{itemize}
\end{prop}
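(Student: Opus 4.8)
The plan is to reproduce the cyclic argument behind Proposition~A3.2.IV of \citet{Daley2:03}, organized into a ``Fourier/portmanteau'' block of equivalences $(i)\Leftrightarrow(ii)\Leftrightarrow(v)$ and a ``joint-law'' block $(ii)\Leftrightarrow(iii)\Leftrightarrow(iv)$, which together close the loop. The only analytic tools are the portmanteau theorem and L\'evy's continuity theorem for finite (sub-probability) measures, approximation of bounded $\mathcal{C}$-measurable random variables by simple functions, and the continuous mapping theorem. Before any of this I would record that the set $U$ in Definition~\ref{def:stable_cov} is understood to range over $\mathcal{C}$ (equivalently, $\mathcal{C}$ is a stabilizing sub-$\sigma$-field and $W$ is realized on an extension of $(\Omega,\mathcal{F},P)$ carrying a $\mathcal{C}$-measurable regular conditional law), so that the index class in (i) matches the class of test variables appearing in (ii) and (v).

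For the first block, fix a bounded nonnegative $\mathcal{C}$-measurable $Z$ --- in particular $Z=\mathbbm{1}_U$ with $U\in\mathcal{C}$ --- and form the finite measures $\mu_n^Z(B)=E[Z\,\mathbbm{1}\{W_n\in B\}]$ and $\mu^Z(B)=E[Z\,\mathbbm{1}\{W\in B\}]$ on $\mathcal{B}(\mathbbm{R}^p)$, each of total mass $E[Z]$. Statement (i) restricted to this $U$ is precisely convergence $\mu_n^Z(B)\to\mu^Z(B)$ on every $\mu^Z$-continuity set $B$; statement (ii) restricted to this $Z$ is $\int h\,d\mu_n^Z\to\int h\,d\mu^Z$ for all bounded continuous $h$; and (v) restricted to this $Z$ is pointwise convergence of the Fourier transforms $t\mapsto\int e^{it'w}\,\mu_n^Z(dw)$. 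The portmanteau theorem for finite measures equates the first two with weak convergence $\mu_n^Z\Rightarrow\mu^Z$, and L\'evy's continuity theorem for finite measures equates the third with it as well; a general real $\mathcal{C}$-measurable $Z$ is reduced to the nonnegative case by splitting $Z=Z^+-Z^-$. Hence $(i)\Leftrightarrow(ii)\Leftrightarrow(v)$.

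For the second block, $(ii)\Rightarrow(iii)$ goes by approximation: given bounded continuous $\phi:\mathbbm{R}^p\times\mathbbm{R}\to\mathbbm{R}$ and real $\mathcal{C}$-measurable $Y$, truncate $Y$ and approximate it uniformly by simple $\mathcal{C}$-measurable functions $Y_m=\sum_j y_j\mathbbm{1}_{U_j}$ with $U_j\in\mathcal{C}$, so that $E[\phi(W_n,Y_m)]=\sum_j E[\mathbbm{1}_{U_j}\phi(W_n,y_j)]$ converges by (ii) with $Z=\mathbbm{1}_{U_j}$ and $h=\phi(\cdot,y_j)$; tightness of $\{W_n\}$ (which follows from (ii) with $Z=1$) and uniform continuity of $\phi$ on compacta make $|E[\phi(W_n,Y)]-E[\phi(W_n,Y_m)]|$ uniformly small in $n$, yielding $(W_n,Y)\overset{d}{\to}(W,Y)$; the same argument gives the $\mathbbm{R}^m$-valued version. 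Then $(iii)\Rightarrow(iv)$ is the continuous mapping theorem applied to $(W_n,(Y,Y'))\overset{d}{\to}(W,(Y,Y'))$, which produces $(g(W_n,Y),Y')\overset{d}{\to}(g(W,Y),Y')$ for every real $\mathcal{C}$-measurable $Y'$, i.e.\ $g(W_n,Y)\overset{d}{\to}g(W,Y)$ $(\mathcal{C}\stably)$ by the characterization of stable convergence through joint convergence with $\mathcal{C}$-measurable random variables. Finally $(iv)\Rightarrow(ii)$: choose $g(w,y)=h(w)y$ with $h$ bounded continuous and $Y$ bounded $\mathcal{C}$-measurable; then $g(W_n,Y)$ is a bounded, convergent-in-distribution sequence, so $E[h(W_n)Y]\to E[h(W)Y]$, which is (ii). This closes the cycle.

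I expect the main obstacle to lie not in any single estimate but in the measure-theoretic bookkeeping around the limit $W$: because $W$ need not be $\mathcal{C}$-measurable, ``$\{W\in A\}\cap U$'' must be read on the extension carrying $W$ together with its $\mathcal{C}$-measurable conditional kernel, and the equivalence between the set-formulation (i) and the $\mathcal{C}$-test-variable formulations rests precisely on that kernel being $\mathcal{C}$-measurable --- that is, on $\mathcal{C}$ being stabilizing. Since this is exactly the setting of \citet[Appendix~A3.2]{Daley2:03}, in the paper I would simply invoke their Proposition~A3.2.IV once Definition~\ref{def:stable_cov} has been matched to their Definition~A3.2.III, so that the ``proof'' amounts to that identification together with the sketch above.
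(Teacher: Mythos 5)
Your fallback is exactly what the paper does: the paper offers no proof of this proposition at all, stating only that ``the equivalence is shown in Proposition A3.2.IV in \citet{Daley2:03},'' so your final paragraph (match Definition~\ref{def:stable_cov} to their Definition~A3.2.III and invoke their result) coincides with the paper's approach. What you add beyond the paper is a self-contained sketch of the textbook argument, and that sketch is essentially correct; two points are worth tightening if you keep it. First, in $(iv)\Rightarrow(ii)$ the map $g(w,y)=h(w)y$ is not bounded on $\mathbbm{R}^p\times\mathbbm{R}$, so it is not an admissible test function in (iv); compose with a bounded continuous truncation $\tau$ equal to the identity on $[-\lVert Y\rVert_\infty,\lVert Y\rVert_\infty]$ and use $g(w,y)=h(w)\tau(y)$, which gives the same conclusion. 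Second, your $(iii)\Rightarrow(iv)$ step ends by invoking ``joint convergence with all $\mathcal{C}$-measurable variables implies stable convergence'' for the transformed sequence $g(W_n,Y)$; that is precisely the implication $(iii)\Rightarrow(i)$, which in your cycle $(ii)\Rightarrow(iii)\Rightarrow(iv)\Rightarrow(ii)$ is not yet available at that point. The easy repair is to establish $(iii)\Rightarrow(ii)$ directly first (take $Y=Z$ in (iii) and test with $(w,z)\mapsto h(w)\tau(z)$), after which the characterization you need for the transformed sequence is legitimate. Finally, you are right to flag the bookkeeping issue in the paper's Definition~\ref{def:stable_cov} (the set $U$ is stated to range over $\mathcal{F}$, and the continuity-set condition is written as $P(W_n\in\bdd A)=0$ rather than for the limit); reading it so as to match Daley--Vere-Jones is the correct reconciliation, and the paper itself passes over this silently.
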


The following is a variant of Proposition A.2 in \cite{Kuersteiner2:13}.
\begin{prop}\label{app_prop:KP2013_propA2}
    Let $\{W_n\}$ and $\mathcal{C}$ be as in Definition \ref{def:stable_cov}, and let $\Xi$ be a $\mathcal{C}$-measurable, a.s. finite and positive definite $p\times p$ matrix. Suppose for any $c \in \mathbbm{R}^p$ with $\lVert c \rVert_q=1$ for some $q\in[1,\infty]$, we have
    \begin{equation}\label{equ:stableCLT_scalar}
        c'W_n \overset{d}{\to} \nu_c^{1/2}\xi_c\hspace{2mm} (\mathcal{C}\stably),
    \end{equation}
    with $\nu_c = c'\Xi c$, where $\xi_c$ is independent of $\mathcal{C}$ (and thus of $\Xi$) and $\xi_c\sim N(0,1)$. Then the characteristic function of $\nu_c^{1/2}\xi_c$ is given by $\phi_c(s) = E[ \text{exp} \{ -\frac{1}{2} (c'\Xi c)s^2 \}], s\in\mathbbm{R}$.

    \begin{itemize}
        \item[(a)] The above statement holds if and only if
        \begin{equation}\label{equ:stableCLT_vec}
            W_n \overset{d}{\to} \Xi^{1/2}\xi\hspace{2mm} (\mathcal{C}\stably),
        \end{equation}
        where $\xi$ is independent of $\mathcal{C}$ (and thus of $\Xi$) and $\xi\sim N(0,I_p)$. The characteristic function of $\Xi^{1/2}\xi$ is given by $\phi(t) = E[ \text{exp} \{ -\frac{1}{2} t'\Xi t \}], t\in\mathbbm{R}^p$.
        \item[(b)] Let $a$ be some $\mathcal{C}$-measurable vector, then (\ref{equ:stableCLT_scalar}) implies
        \begin{equation}
            (W_n',a') \overset{d}{\to}(W',a')\hspace{2mm} (\mathcal{C}\stably).
        \end{equation}
        Furthermore, let $A$ be some $p_* \times p$ matrix that is $\mathcal{C}$-measurable, a.s. finite and has full row rank. Then
        \begin{equation}
            AW_n \overset{d}{\to} A\Xi^{1/2}\xi\hspace{2mm} (\mathcal{C}\stably),
        \end{equation}
        where $\xi $ is as defined in part (a), and hence also
        \begin{equation}
            AW_n \overset{d}{\to} (A\Xi A')^{1/2}\xi_*\hspace{2mm} (\mathcal{C}\stably),
        \end{equation}
        where $\xi_*$ is independent of $\mathcal{C}$ (and thus of $A\Xi A'$) and $\xi_*\sim N(0,I_{p_*})$. Then the characteristic function of $A\Xi^{1/2}\xi$ and $(A\Xi A')^{1/2}\xi_*$ is given by $\phi_*(t_*) = E[ \text{exp} \{ -\frac{1}{2} (t_*'\Xi t_*) \}], t_*\in\mathbbm{R}^{p_*}$.
    \end{itemize}
\end{prop}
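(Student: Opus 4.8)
The plan is to push every assertion through the characteristic‑function characterization of stable convergence, i.e.\ the equivalence between (i) and (v) in Proposition~\ref{app_prop:Daley_stableEQU}: $W_n\overset{d}{\to}W$ $(\mathcal{C}\stably)$ iff $E[Z\exp(i t'W_n)]\to E[Z\exp(i t'W)]$ for every fixed $t\in\mathbbm{R}^p$ and every $\mathcal{C}$‑measurable, $P$‑essentially bounded $Z$. The only computational input is the bookkeeping fact that if $\eta\sim N(0,1)$ is independent of $\mathcal{C}$ and $\sigma^2$ is $\mathcal{C}$‑measurable then $E[\exp(i s\sigma\eta)\mid\mathcal{C}]=\exp(-\tfrac12 s^2\sigma^2)$ a.s., applied with $(\eta,\sigma^2)=(\xi_c,\nu_c)$ and, in the vector case, with $\xi\sim N(0,I_p)$ and $\Xi^{1/2}$ symmetric so that $E[\exp(i t'\Xi^{1/2}\xi)\mid\mathcal{C}]=\exp(-\tfrac12 t'\Xi t)$ a.s.; taking unconditional expectations in these identities reads off the stated characteristic functions $\phi_c$, $\phi$ and $\phi_*$.

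For part (a), I would first treat (\ref{equ:stableCLT_scalar})$\Rightarrow$(\ref{equ:stableCLT_vec}) as a Cram\'er--Wold argument. Fix $t\neq 0$, put $c=t/\lVert t\rVert_q$ (so $\lVert c\rVert_q=1$) and $s=\lVert t\rVert_q$, and fix a $\mathcal{C}$‑measurable bounded $Z$. Applying Proposition~\ref{app_prop:Daley_stableEQU}(v) to the scalar sequence $c'W_n$, which converges stably to $\nu_c^{1/2}\xi_c$ by hypothesis, with the scalar $s$ in the role of the frequency, gives
\[
    E\!\left[Z\exp(i t'W_n)\right]=E\!\left[Z\exp(i s\,c'W_n)\right]\longrightarrow E\!\left[Z\exp(i s\,\nu_c^{1/2}\xi_c)\right]=E\!\left[Z\exp(-\tfrac12 s^2\nu_c)\right],
\]
and since $s^2\nu_c=\lVert t\rVert_q^2\,c'\Xi c=t'\Xi t$ the right‑hand side equals $E[Z\exp(-\tfrac12 t'\Xi t)]=E[Z\exp(i t'\Xi^{1/2}\xi)]$; as $t$ and $Z$ were arbitrary (the case $t=0$ being trivial), Proposition~\ref{app_prop:Daley_stableEQU}(v) yields (\ref{equ:stableCLT_vec}). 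The converse is the same computation read backwards: from (\ref{equ:stableCLT_vec}), Proposition~\ref{app_prop:Daley_stableEQU}(v) with $t=sc$ gives $E[Z\exp(i s\,c'W_n)]\to E[Z\exp(-\tfrac12 s^2 c'\Xi c)]=E[Z\exp(i s\,\nu_c^{1/2}\xi_c)]$ for all $s$ and $Z$, which is (\ref{equ:stableCLT_scalar}); here I would note $\nu_c=c'\Xi c>0$ a.s.\ since $\Xi$ is positive definite, and that a version of $\xi_c\sim N(0,1)$ independent of $\mathcal{C}$ may be placed on an enlarged space, as stable convergence only constrains the joint law of the limit with $\mathcal{C}$.

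For part (b), I would first upgrade the stable convergence of $W_n$ to joint stable convergence with the $\mathcal{C}$‑measurable vector $a$: for $(t,u)$ and $\mathcal{C}$‑measurable bounded $Z$, write $E[Z\exp(i t'W_n+i u'a)]=E[(Z e^{i u'a})\exp(i t'W_n)]$ and observe that $Z e^{i u'a}$ is again a $\mathcal{C}$‑measurable, $P$‑essentially bounded multiplier (split into real and imaginary parts), so Proposition~\ref{app_prop:Daley_stableEQU}(v) applied to $W_n\overset{d}{\to}W\equiv\Xi^{1/2}\xi$ from part (a) yields convergence to $E[(Z e^{i u'a})\exp(i t'W)]=E[Z\exp(i t'W+i u'a)]$, i.e.\ $(W_n',a')\overset{d}{\to}(W',a')$ $(\mathcal{C}\stably)$. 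Taking $a=\mathrm{vec}(A)$ and applying the continuous mapping theorem — valid for stable convergence just as for ordinary weak convergence — to the continuous map $(w,\mathrm{vec}(M))\mapsto Mw$ gives $AW_n\overset{d}{\to}A\Xi^{1/2}\xi$ $(\mathcal{C}\stably)$. Finally, conditionally on $\mathcal{C}$ the limit $A\Xi^{1/2}\xi$ is $N(0,A\Xi A')$ (using symmetry of $\Xi^{1/2}$), and $A\Xi A'$ is positive definite a.s.\ because $A$ has full row rank and $\Xi$ is positive definite, so $A\Xi^{1/2}\xi$ has the same joint law with $\mathcal{C}$ as $(A\Xi A')^{1/2}\xi_*$ with $\xi_*\sim N(0,I_{p_*})$ independent of $\mathcal{C}$; the two stable limits are interchangeable, giving the last display, and $\phi_*$ follows by taking the unconditional expectation of $\exp(-\tfrac12 t_*'A\Xi A't_*)$.

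The step I expect to be the main obstacle is the treatment of the $\mathcal{C}$‑measurable (hence random) matrix $A$ in part (b): one cannot feed a ``random frequency'' $A't_*$ into the characteristic‑function criterion, so the argument must be routed through joint stable convergence of $(W_n,A)$ followed by the continuous mapping theorem, and one must check that this detour is legitimate; the finiteness and full‑rank hypotheses on $A$ enter only at the very end, to guarantee that $A\Xi A'$ is nonsingular. By contrast, the Cram\'er--Wold direction in part (a) and the joining with the fixed‑dimensional $\mathcal{C}$‑measurable vector $a$ are routine Gaussian characteristic‑function bookkeeping.
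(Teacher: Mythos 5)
Your proof is correct, but it follows a genuinely different route from the paper: the paper does not prove this proposition at all, it invokes Proposition A.2 of \citet{Kuersteiner2:13} and adds only the one-line remark that replacing the normalization $c'c=1$ by $\lVert c\rVert_q=1$ is immaterial because all norms on $\mathbbm{R}^p$ are equivalent and the normalization plays no substantive role in the Cram\'er--Wold device. You instead supply a self-contained argument built entirely from the tools already stated in the appendix: the characteristic-function criterion (v) of Proposition \ref{app_prop:Daley_stableEQU} for both directions of part (a) (fixing $t$, writing $t=s\,c$ with $\lVert c\rVert_q=1$, and matching $E[Z\exp(-\tfrac12 t'\Xi t)]$ on both sides), and, for part (b), joint stable convergence of $(W_n',a')$ obtained by absorbing the $\mathcal{C}$-measurable factor $e^{iu'a}$ into the bounded multiplier, followed by the stable continuous mapping theorem (Theorem \ref{app_thm:cond_slutsky}(ii)) applied to $(w,\mathrm{vec}(M))\mapsto Mw$, and finally the observation that $A\Xi^{1/2}\xi$ and $(A\Xi A')^{1/2}\xi_*$ have the same conditional law given $\mathcal{C}$, so the stable limits are interchangeable. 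What your approach buys is transparency and independence from the cited reference, and it makes explicit exactly where each hypothesis enters (full row rank of $A$ only for nonsingularity of $A\Xi A'$; positive definiteness of $\Xi$ only for $\nu_c>0$); what the paper's approach buys is brevity, since the substance is already in \citet{Kuersteiner2:13} and only the norm normalization needed comment. Two cosmetic points to tighten if you write this up: the multipliers $Ze^{iu'a}$ are complex-valued, so the criterion in Proposition \ref{app_prop:Daley_stableEQU}(v), stated for real essentially bounded $Z$, should be applied separately to real and imaginary parts (you note this, but it deserves a sentence); and $\Xi^{1/2}$ should be fixed as the symmetric positive semidefinite square root, which is a continuous (hence $\mathcal{C}$-measurable) function of $\Xi$, so that $E[\exp(it'\Xi^{1/2}\xi)\mid\mathcal{C}]=\exp(-\tfrac12 t'\Xi t)$ holds as used.
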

Note that we replace the restriction $c'c=1$ in \cite{Kuersteiner2:13} with $\lVert c \rVert_q=1$ for some $q\in[1,\infty]$. Since all norms are equivalent in finite dimensions, and they are just normalization for the Cram\'er–Wold device, the conclusion of the proposition remains.

When $X$ is $\mathcal{C}$-measurable, the stable convergence will be equivalent to convergence in probability. This is proven in \cite[][Corollary 3.6]{Häusler2:15}. We state the partial results in the following.
\begin{cor}\label{cor:stable_equ_covP}
    Let $\{W_n\}$, $W$, and $\mathcal{C}$ be as in Definition \ref{def:stable_cov} and $W$ is $\mathcal{C}$-measurable. The following statements are equivalent.
    \begin{itemize}
        \item[(i)] $W_n\overset{P}{\to} W$.
        \item[(ii)] $W_n\overset{d}{\to} \delta_W$ ($\mathcal{C}$-stably), where $\delta_W$ is the Dirac kernel associated with $W$ given by $\delta_W(\omega)=\delta_{W(\omega)}$.
    \end{itemize}
\end{cor}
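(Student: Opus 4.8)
The plan is to prove the two implications separately, in both directions invoking the list of equivalent formulations of stable convergence collected in Proposition~\ref{app_prop:Daley_stableEQU}. The one structural fact I will lean on repeatedly is that, because $W$ is $\mathcal{C}$-measurable, the Dirac kernel $\delta_W$ is a $\mathcal{C}$-measurable kernel and bounded measurable functions of $W$ are themselves legitimate $\mathcal{C}$-measurable test variables in those formulations; this is precisely what allows the limiting object to be ``cancelled'' against $W_n$.

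For the direction (i)$\Rightarrow$(ii), I would assume $W_n\overset{P}{\to}W$ and check formulation~(ii) of Proposition~\ref{app_prop:Daley_stableEQU}: that $E[Z\,h(W_n)]\to E[Z\,h(W)]$ for every $\mathcal{C}$-measurable $P$-essentially bounded $Z$ and every bounded continuous $h$ — noting that integrating $h$ against the kernel $\delta_W$ returns exactly $h(W)$. By the continuous mapping theorem $h(W_n)\overset{P}{\to}h(W)$, hence $Z\,h(W_n)\overset{P}{\to}Z\,h(W)$, and since $|Z\,h(W_n)|\le \lVert Z\rVert_\infty\,\lVert h\rVert_\infty$ almost surely the family is uniformly integrable, so convergence in probability upgrades to $L^1$-convergence of the products and the expectations converge. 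This direction is essentially routine bounded-convergence bookkeeping.

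For (ii)$\Rightarrow$(i), I would fix $t\in\mathbbm{R}^p$ and plug the $\mathcal{C}$-measurable, modulus-one (hence $P$-essentially bounded) test variable $Z:=\exp(-i\,t'W)$ into formulation~(v) of Proposition~\ref{app_prop:Daley_stableEQU}; since $W_n\overset{d}{\to}\delta_W$ ($\mathcal{C}$-stably) is, under the present hypothesis, the same statement as $W_n\overset{d}{\to}W$ ($\mathcal{C}$-stably) in the sense of Definition~\ref{def:stable_cov}, this yields $E[\exp(i\,t'(W_n-W))]=E[Z\,\exp(i\,t'W_n)]\to E[Z\,\exp(i\,t'W)]=E[1]=1$ for every $t$. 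Thus the characteristic functions of $W_n-W$ converge pointwise to the characteristic function of the point mass at the origin, so by L\'evy's continuity theorem $W_n-W\overset{d}{\to}0$, and convergence in distribution to a constant implies convergence in probability, giving $W_n\overset{P}{\to}W$. The main obstacle — really the only nontrivial point — is to recognize that $\exp(-i\,t'W)$ is an admissible weight here, which hinges entirely on $W$ being $\mathcal{C}$-measurable; absent that, the joint behaviour of $(W_n,W)$ need not stabilize and the implication genuinely fails, so the proof amounts to locating and using that single hypothesis rather than carrying out any hard estimation.
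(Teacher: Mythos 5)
Your argument is correct, but note that the paper does not actually prove this corollary: it is imported verbatim from H\"ausler and Luschgy (2015, Corollary 3.6), so there is no in-paper proof to compare against. What you have written is a self-contained derivation from Proposition~\ref{app_prop:Daley_stableEQU}, and both directions are sound: the (i)$\Rightarrow$(ii) step is indeed just continuous mapping plus bounded (dominated) convergence applied to $Z\,h(W_n)$, verifying formulation~(ii), and the (ii)$\Rightarrow$(i) step correctly exploits the $\mathcal{C}$-measurability of $W$ to use $\exp(-i\,t'W)$ as a test weight, so that the characteristic functions of $W_n-W$ tend to $1$ and L\'evy's continuity theorem plus ``convergence in law to a constant implies convergence in probability'' closes the loop. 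You also correctly identify that, once $W$ is $\mathcal{C}$-measurable, stable convergence to the Dirac kernel $\delta_W$ coincides with stable convergence to the random vector $W$ in the sense of Definition~\ref{def:stable_cov}, which is the hinge of the whole statement.

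Two small points to tighten. First, formulations~(ii) and~(v) of Proposition~\ref{app_prop:Daley_stableEQU} are stated for real-valued $P$-essentially bounded $Z$ (and real-valued bounded continuous $h$), whereas your weight $\exp(-i\,t'W)$ is complex; this is repaired in one line by splitting it as $\cos(t'W)-i\sin(t'W)$ and applying the formulation to the real and imaginary parts separately, but the remark should be made. Second, in the (i)$\Rightarrow$(ii) direction it is worth stating explicitly that integrating $h$ against the kernel $\delta_W$ yields $h(W)$, so that verifying $E[Z\,h(W_n)]\to E[Z\,h(W)]$ is exactly the required mixing condition for the kernel limit; you say this in passing, and it is the right identification. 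With those cosmetic patches your proof is complete and is essentially the standard characteristic-function argument underlying the cited textbook result.
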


Finally, it is also useful to include the conditional version of Slutsky and continuous mapping theorem. The following theorem is proven in \cite[][Theorem 3.18(b) and (c)]{Häusler2:15}.
\begin{thm}\label{app_thm:cond_slutsky}
    Assume $X_n\to X\hspace{2mm}\mathcal{C}$-stably and let $Y_n$ and $Y$ be random variables with values in $(\mathcal{Y},\mathcal{B}(\mathcal{Y}))$ for some separable metrizable space.
    \begin{itemize}
        \item[(i)] If $Y_n\to Y$ in probability and $Y$ is $\mathcal{C}$ measurable, then
        \begin{equation}\label{app_equ:cond_slutsky}
            (X_n,Y_n)\overset{d}{\to} (X,Y) \hspace{2mm} (\mathcal{C}\stably).
        \end{equation} 
        \item[(ii)] If function $g$ is Borel measurable and continuous almost surely with respect to $X$, then 
        \begin{equation}\label{app_equ:cond_CMP}
            g(X_n)\overset{d}{\to} g(X)\hspace{2mm} (\mathcal{C}\stably).
        \end{equation}
    \end{itemize}
\end{thm}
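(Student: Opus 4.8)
The plan is to reduce both assertions to \emph{ordinary} (unconditional) weak-convergence statements, and then let the classical continuous mapping theorem and the classical Slutsky (converging-together) lemma do the analytic work. The device is the characterization of stable convergence recorded in Proposition~\ref{app_prop:Daley_stableEQU}: $X_n\overset{d}{\to}X$ $(\mathcal{C}\stably)$ is equivalent to the joint convergence $(X_n,\eta)\overset{d}{\to}(X,\eta)$ for every $\mathcal{C}$-measurable random element $\eta$ (part (iii), which I would extend from real-valued $\eta$ to separable-metrizable-valued $\eta$ via the bounded-continuous formulation in part (iv)). Working at the level of these joint laws lets me apply ordinary theorems off the shelf, and re-reading the same characterization in the reverse direction converts the resulting joint convergence back into stability of the limit.

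For part~(i), I would fix a real-valued $\mathcal{C}$-measurable $\zeta$ and form the $\mathcal{C}$-measurable pair $(Y,\zeta)$; this is where the hypothesis that the \emph{limit} $Y$ is $\mathcal{C}$-measurable is essential, since it allows $Y$ to be absorbed into the conditioning variable. Stability of $X_n\overset{d}{\to}X$ then supplies $(X_n,Y,\zeta)\overset{d}{\to}(X,Y,\zeta)$. To replace $Y$ by $Y_n$ in the prelimit I would invoke the converging-together lemma: since $Y_n\to Y$ in probability, the product-metric distance between $(X_n,Y_n,\zeta)$ and $(X_n,Y,\zeta)$ is $d_{\mathcal{Y}}(Y_n,Y)\to 0$ in probability, so the two sequences have the same weak limit, giving $(X_n,Y_n,\zeta)\overset{d}{\to}(X,Y,\zeta)$. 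As $\zeta$ ranges over all real-valued $\mathcal{C}$-measurable variables, the reverse implication in Proposition~\ref{app_prop:Daley_stableEQU}(iii) yields $(X_n,Y_n)\overset{d}{\to}(X,Y)$ $(\mathcal{C}\stably)$.

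For part~(ii), stability of $X_n\overset{d}{\to}X$ again gives $(X_n,\zeta)\overset{d}{\to}(X,\zeta)$ for an arbitrary real-valued $\mathcal{C}$-measurable $\zeta$. I would introduce the augmented map $\widetilde g(x,z)=(g(x),z)$, whose discontinuity set is contained in $D_g\times\mathbbm{R}$, with $D_g$ the discontinuity set of $g$. Because $g$ is continuous almost surely with respect to $X$, we have $P\bigl((X,\zeta)\in D_g\times\mathbbm{R}\bigr)=P(X\in D_g)=0$, so the classical continuous mapping theorem applies to the weakly convergent sequence $(X_n,\zeta)$ and produces $(g(X_n),\zeta)\overset{d}{\to}(g(X),\zeta)$. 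Letting $\zeta$ vary over all real-valued $\mathcal{C}$-measurable variables and reading Proposition~\ref{app_prop:Daley_stableEQU}(iii) in the reverse direction delivers $g(X_n)\overset{d}{\to}g(X)$ $(\mathcal{C}\stably)$.

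The step I expect to demand the most care underlies both parts: the characterization of stable convergence in Proposition~\ref{app_prop:Daley_stableEQU} is stated for $\mathbbm{R}^p$-valued limits and real-valued conditioning variables, whereas here the limits $g(X)$ and $(X,Y)$ live in a general separable metrizable space and the conditioning variable $(Y,\zeta)$ is likewise not real-valued. I would verify that the equivalence survives this extension, using separability of $\mathcal{Y}$ to approximate any bounded continuous test function of $\bigl(X_n,(Y,\zeta)\bigr)$ by finite sums $\sum_k h_k(X_n)\,\mathbbm{1}_{B_k}(Y,\zeta)$, with $h_k$ bounded continuous and $\{B_k\}$ a measurable partition of small diameter, so that the defining relation $E[Z\,h(X_n)]\to E[Z\,h(X)]$ transfers to the augmented functional. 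Once this extension is in place, the converging-together step in part~(i) and the discontinuity-set bookkeeping in part~(ii) are routine.
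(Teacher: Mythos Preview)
The paper does not supply its own proof of this theorem; it simply cites \cite[Theorem~3.18(b) and (c)]{Häusler2:15} and moves on. Your proposal therefore cannot be compared to a proof in the paper, but it is worth noting that what you have written is essentially the standard argument one finds in that reference: reduce stability to joint weak convergence with an arbitrary $\mathcal{C}$-measurable companion variable via Proposition~\ref{app_prop:Daley_stableEQU}(iii), then apply the ordinary converging-together lemma for part~(i) and the ordinary continuous mapping theorem (with the augmented map $(x,z)\mapsto(g(x),z)$) for part~(ii). Your identification of the one genuine technical hurdle---extending the characterization in Proposition~\ref{app_prop:Daley_stableEQU} from $\mathbbm{R}^p$-valued limits and real-valued conditioning variables to separable metrizable targets---is accurate, and the approximation route you sketch (partitioning $\mathcal{Y}$ into small-diameter Borel cells and reducing to finite sums of products) is the standard way to close that gap. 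So your proof is correct; the paper just chose to outsource it.
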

As we only consider random variables taking values in $\mathbbm{R}^k,k\geq 1$ equipped with the Borel sigma-field, the separability and metrizability requirements are always satisfied.
\end{appendices}

\vspace{1cm}
\bibliography{bib}

\end{document}


\begin{appendices}
\section{Additional Simulation}

\newpage

\begin{table}
\centering
\small
\caption{1,000 Monte Carlo Simulations for $\bm{\beta}$ on Ring Network ($c=0.3$)}
\label{tab:sim_betas_ring}
\begin{tabular}{lcccccccc}
    \toprule
    & \multicolumn{4}{c}{$n=250$} & \multicolumn{4}{c}{$n=1000$}  \\
    \cmidrule(lr){2-5}  \cmidrule(lr){6-9}
    & Bias & SE & RMSE & 95\% Coverage & Bias & SE & RMSE & 95\% Coverage\\
    \midrule
     $\beta_{D0}$         & 0.009  & 0.346 & 0.478 & 0.926  & 0.005  & 0.169 & 0.229 & 0.944 \\
     $\beta_{D1}$         & 0.026  & 0.271 & 0.371 & 0.938  & 0.006  & 0.133 & 0.181 & 0.948 \\
     $\lambda$            & -0.035 & 0.746 & 1.025 & 0.926  & -0.009 & 0.364 & 0.503 & 0.934 \\
     $\beta_{Y0}^{(0,0)}$  & -0.022 & 0.786 & 1.091 & 0.930  & 0.025  & 0.392 & 0.535 & 0.944 \\
     $\beta_{Y1}^{(0,0)}$  & -0.004 & 0.134 & 0.185 & 0.927  & 0.001  & 0.067 & 0.090 & 0.938 \\
     $\beta_{p1}^{(0,0)}$  & 0.042  & 1.254 & 1.749 & 0.934  & -0.042 & 0.630 & 0.862 & 0.940 \\
     $\beta_{Y0}^{(1,0)}$  & -0.031 & 0.439 & 0.639 & 0.897  & -0.016 & 0.224 & 0.304 & 0.947 \\
     $\beta_{Y1}^{(1,0)}$  & -0.002 & 0.176 & 0.250 & 0.913  & -0.003 & 0.090 & 0.125 & 0.928 \\
     $\beta_{p1}^{(1,0)}$  & 0.096  & 1.329 & 1.912 & 0.896  & 0.033  & 0.678 & 0.911 & 0.952 \\
     $\beta_{Y0}^{(0,1)}$  & -0.008 & 0.584 & 0.808 & 0.934  & 0.002  & 0.293 & 0.398 & 0.954 \\
     $\beta_{Y1}^{(0,1)}$  & 0.002  & 0.106 & 0.145 & 0.933  & -0.001 & 0.053 & 0.071 & 0.959 \\
     $\beta_{p1}^{(0,1)}$  & 0.013  & 0.903 & 1.251 & 0.931  & -0.005 & 0.454 & 0.618 & 0.942 \\
     $\beta_{Y0}^{(1,1)}$  & -0.014 & 0.334 & 0.465 & 0.936  & -0.001 & 0.168 & 0.229 & 0.948 \\
     $\beta_{Y1}^{(1,1)}$  & 0.000  & 0.122 & 0.165 & 0.927  & -0.001 & 0.061 & 0.085 & 0.934 \\
     $\beta_{p1}^{(1,1)}$  & 0.048  & 0.962 & 1.324 & 0.937  & -0.004 & 0.480 & 0.655 & 0.952 \\
    \bottomrule
\end{tabular}
\par \smallskip
\parbox{14cm}{\footnotesize Note: We consider the bandwidth and Parzen kernel HAC considered in \cite{Kojevnikov3:21} with constant term and $\epsilon$ in $b_n$ being 0.3 and 0.05.}
\end{table}

\begin{table}
\small
\caption{\centering 1,000 Monte Carlo Simulations for the marginal exposure response ($\overline{\psi}$) on Ring Network ($c=0.3$)}
\label{tab:sim_MER_ring}
\begin{tabular}{cccccccccc}
    \toprule
    && \multicolumn{4}{c}{$n=250$} & \multicolumn{4}{c}{$n=1000$}  \\
    \cmidrule[0.05em](l){3-6}  \cmidrule[0.05em](l){7-10}
    & $p$ & Bias & SE & RMSE & 95\% Coverage & Bias & SE & RMSE & 95\% Coverage\\
    \midrule
    \multirow{ 3}{*}{$MER^{(0,0)}$}
    &.2& -0.017 & 0.557 & 0.771 & 0.920 & 0.018 & 0.278 & 0.377 & 0.951 \\
    &.5& -0.005 & 0.240 & 0.331 & 0.930 & 0.005 & 0.119 & 0.160 & 0.951 \\
    &.8& 0.008  & 0.300 & 0.420 & 0.907 & -0.008 & 0.152 & 0.210 & 0.928 \\
    \multirow{ 3}{*}{$MER^{(1,0)}$}
    &.2& -0.014 & 0.290 & 0.412 & 0.906 & -0.012 & 0.145 & 0.200 & 0.928 \\
    &.5& 0.015  & 0.362 & 0.508 & 0.923 & -0.002 & 0.185 & 0.252 & 0.934 \\
    &.8& 0.043  & 0.705 & 1.005 & 0.909 & 0.008  & 0.360 & 0.486 & 0.948 \\
    \multirow{ 3}{*}{$MER^{(0,1)}$}
    &.2& -0.004 & 0.421 & 0.579 & 0.938 & 0.001 & 0.211 & 0.286 & 0.953 \\
    &.5& 0.000  & 0.194 & 0.263 & 0.937 & -0.001 & 0.097 & 0.131 & 0.962 \\
    &.8& 0.004  & 0.211 & 0.292 & 0.932 & -0.003 & 0.107 & 0.144 & 0.946 \\
    \multirow{ 3}{*}{$MER^{(1,1)}$}
    &.2& -0.004 & 0.211 & 0.292 & 0.928 & -0.004 & 0.106 & 0.146 & 0.952 \\
    &.5& 0.010  & 0.244 & 0.329 & 0.947 & -0.005 & 0.122 & 0.166 & 0.945 \\
    &.8& 0.024  & 0.491 & 0.668 & 0.938 & -0.006 & 0.245 & 0.334 & 0.942 \\
\bottomrule
\end{tabular}
\par\smallskip
\vspace{0.5pt}
\parbox{14cm}{\footnotesize Note: We consider the bandwidth and Parzen kernel HAC considered in \cite{Kojevnikov3:21} with constant term and $\epsilon$ being 0.3 and 0.05.}
\end{table}

\begin{table}
\centering
\small
\caption{1,000 Monte Carlo Simulations for $\bm{\beta}$ on Ring Network ($c=0.4$)}
\begin{tabular}{lcccccccc}
    \toprule
    & \multicolumn{4}{c}{$n=250$} & \multicolumn{4}{c}{$n=1000$}  \\
    \cmidrule(lr){2-5}  \cmidrule(lr){6-9}
    & Bias & SE & RMSE & 95\% Coverage & Bias & SE & RMSE & 95\% Coverage\\
    \midrule
     $\beta_{D0}$         & -0.002 & 0.341 & 0.470 & 0.928  & -0.003 & 0.170 & 0.231 & 0.937 \\
     $\beta_{D1}$         & 0.025  & 0.270 & 0.366 & 0.948  & 0.004  & 0.133 & 0.178 & 0.961 \\
     $\lambda$            & -0.001 & 0.733 & 1.011 & 0.925  & 0.008  & 0.367 & 0.494 & 0.946 \\
     $\beta_{Y0}^{(0,0)}$  & -0.003 & 0.776 & 1.080 & 0.919  & 0.008  & 0.391 & 0.535 & 0.946 \\
     $\beta_{Y1}^{(0,0)}$  & -0.001 & 0.132 & 0.186 & 0.907  & -0.002 & 0.067 & 0.091 & 0.949 \\
     $\beta_{p1}^{(0,0)}$  & 0.004  & 1.241 & 1.732 & 0.912  & -0.006 & 0.628 & 0.859 & 0.944 \\
     $\beta_{Y0}^{(1,0)}$  & 0.011  & 0.437 & 0.606 & 0.923  & 0.005  & 0.223 & 0.306 & 0.944 \\
     $\beta_{Y1}^{(1,0)}$  & 0.003  & 0.174 & 0.245 & 0.916  & -0.003 & 0.090 & 0.122 & 0.944 \\
     $\beta_{p1}^{(1,0)}$  & -0.042 & 1.340 & 1.861 & 0.927  & -0.007 & 0.675 & 0.927 & 0.942 \\
     $\beta_{Y0}^{(0,1)}$  & -0.012 & 0.581 & 0.807 & 0.940  & 0.004  & 0.293 & 0.400 & 0.934 \\
     $\beta_{Y1}^{(0,1)}$  & 0.002  & 0.105 & 0.143 & 0.939  & -0.002 & 0.053 & 0.073 & 0.936 \\
     $\beta_{p1}^{(0,1)}$  & 0.013  & 0.897 & 1.254 & 0.936  & -0.008 & 0.455 & 0.620 & 0.940 \\
     $\beta_{Y0}^{(1,1)}$  & 0.001  & 0.333 & 0.456 & 0.934  & -0.003 & 0.168 & 0.224 & 0.954 \\
     $\beta_{Y1}^{(1,1)}$  & -0.002 & 0.122 & 0.168 & 0.932  & -0.002 & 0.061 & 0.083 & 0.949 \\
     $\beta_{p1}^{(1,1)}$  & 0.001  & 0.957 & 1.309 & 0.944  & 0.024  & 0.481 & 0.638 & 0.956 \\
    \bottomrule
\end{tabular}
\par \smallskip
\parbox{14cm}{\footnotesize Note: We consider the bandwidth and Parzen kernel HAC considered in \cite{Kojevnikov3:21} with constant term and $\epsilon$ in $b_n$ being 0.4 and 0.05.}
\end{table}

\begin{table}
\small
\caption{\centering 1,000 Monte Carlo Simulations for the marginal exposure response ($\overline{\psi}$) on Ring Network ($c=0.4$)}
\begin{tabular}{cccccccccc}
    \toprule
    && \multicolumn{4}{c}{$n=250$} & \multicolumn{4}{c}{$n=1000$}  \\
    \cmidrule[0.05em](l){3-6}  \cmidrule[0.05em](l){7-10}
    & $p$ & Bias & SE & RMSE & 95\% Coverage & Bias & SE & RMSE & 95\% Coverage\\
    \midrule
    \multirow{ 3}{*}{$MER^{(0,0)}$}
    &.2& -0.004 & 0.549 & 0.765 & 0.912 & 0.005 & 0.276 & 0.376 & 0.950 \\
    &.5& -0.003 & 0.236 & 0.330 & 0.936 & 0.003 & 0.119 & 0.161 & 0.945 \\
    &.8& -0.001 & 0.298 & 0.419 & 0.911 & 0.002 & 0.152 & 0.209 & 0.938 \\
    \multirow{ 3}{*}{$MER^{(1,0)}$}
    &.2& 0.006  & 0.281 & 0.392 & 0.921 & 0.001 & 0.145 & 0.198 & 0.940 \\
    &.5& -0.007 & 0.366 & 0.511 & 0.915 & -0.001 & 0.185 & 0.253 & 0.955 \\
    &.8& -0.020 & 0.716 & 0.996 & 0.920 & -0.003 & 0.360 & 0.494 & 0.946 \\
    \multirow{ 3}{*}{$MER^{(0,1)}$}
    &.2& -0.007 & 0.419 & 0.581 & 0.934 & 0.001 & 0.211 & 0.289 & 0.944 \\
    &.5& -0.003 & 0.193 & 0.263 & 0.943 & -0.002 & 0.097 & 0.134 & 0.943 \\
    &.8& 0.001  & 0.210 & 0.292 & 0.931 & -0.004 & 0.107 & 0.147 & 0.937 \\
    \multirow{ 3}{*}{$MER^{(1,1)}$}
    &.2& -0.001 & 0.212 & 0.289 & 0.939 & -0.001 & 0.106 & 0.144 & 0.941 \\
    &.5& 0.000  & 0.242 & 0.333 & 0.940 & 0.006 & 0.122 & 0.164 & 0.956 \\
    &.8& 0.000  & 0.487 & 0.669 & 0.945 & 0.014 & 0.245 & 0.326 & 0.953 \\
\bottomrule
\end{tabular}
\par\smallskip
\vspace{0.5pt}
\parbox{14cm}{\footnotesize Note: We consider the bandwidth and Parzen kernel HAC considered in \cite{Kojevnikov3:21} with constant term and $\epsilon$ being 0.4 and 0.05.}
\end{table}

\begin{table}
\centering
\small
\caption{1,000 Monte Carlo Simulations for $\bm{\beta}$ on Ring Network ($c=0.5$)}
\begin{tabular}{lcccccccc}
    \toprule
    & \multicolumn{4}{c}{$n=250$} & \multicolumn{4}{c}{$n=1000$}  \\
    \cmidrule(lr){2-5}  \cmidrule(lr){6-9}
    & Bias & SE & RMSE & 95\% Coverage & Bias & SE & RMSE & 95\% Coverage\\
    \midrule
     $\beta_{D0}$         & -0.026 & 0.342 & 0.468 & 0.922  & -0.002 & 0.169 & 0.226 & 0.960 \\
     $\beta_{D1}$         & 0.044  & 0.270 & 0.372 & 0.936  & 0.012  & 0.133 & 0.182 & 0.948 \\
     $\lambda$            & 0.035  & 0.735 & 0.999 & 0.933  & -0.011 & 0.364 & 0.484 & 0.955 \\
     $\beta_{Y0}^{(0,0)}$  & 0.009  & 0.769 & 1.080 & 0.911  & 0.005  & 0.391 & 0.534 & 0.943 \\
     $\beta_{Y1}^{(0,0)}$  & 0.001  & 0.133 & 0.187 & 0.909  & 0.000  & 0.067 & 0.091 & 0.942 \\
     $\beta_{p1}^{(0,0)}$  & -0.008 & 1.233 & 1.731 & 0.918  & -0.009 & 0.628 & 0.856 & 0.943 \\
     $\beta_{Y0}^{(1,0)}$  & 0.009  & 0.432 & 0.602 & 0.915  & -0.005 & 0.223 & 0.308 & 0.931 \\
     $\beta_{Y1}^{(1,0)}$  & 0.000  & 0.175 & 0.253 & 0.895  & 0.002  & 0.090 & 0.123 & 0.929 \\
     $\beta_{p1}^{(1,0)}$  & -0.021 & 1.325 & 1.854 & 0.918  & 0.016  & 0.673 & 0.930 & 0.920 \\
     $\beta_{Y0}^{(0,1)}$  & 0.031  & 0.577 & 0.796 & 0.937  & -0.009 & 0.293 & 0.395 & 0.952 \\
     $\beta_{Y1}^{(0,1)}$  & -0.001 & 0.105 & 0.143 & 0.943  & 0.000  & 0.053 & 0.074 & 0.928 \\
     $\beta_{p1}^{(0,1)}$  & -0.042 & 0.894 & 1.239 & 0.934  & 0.017  & 0.455 & 0.615 & 0.951 \\
     $\beta_{Y0}^{(1,1)}$  & 0.001  & 0.334 & 0.463 & 0.934  & 0.000  & 0.169 & 0.229 & 0.927 \\
     $\beta_{Y1}^{(1,1)}$  & -0.004 & 0.120 & 0.169 & 0.921  & 0.002  & 0.062 & 0.084 & 0.949 \\
     $\beta_{p1}^{(1,1)}$  & -0.012 & 0.953 & 1.325 & 0.939  & 0.002  & 0.482 & 0.660 & 0.936 \\
    \bottomrule
\end{tabular}
\par \smallskip
\parbox{14cm}{\footnotesize Note: We consider the bandwidth and Parzen kernel HAC considered in \cite{Kojevnikov3:21} with constant term and $\epsilon$ in $b_n$ being 0.5 and 0.05.}
\end{table}

\begin{table}
\small
\caption{\centering 1,000 Monte Carlo Simulations for the marginal exposure response ($\overline{\psi}$) on Ring Network ($c=0.5$)}
\begin{tabular}{cccccccccc}
    \toprule
    && \multicolumn{4}{c}{$n=250$} & \multicolumn{4}{c}{$n=1000$}  \\
    \cmidrule[0.05em](l){3-6}  \cmidrule[0.05em](l){7-10}
    & $p$ & Bias & SE & RMSE & 95\% Coverage & Bias & SE & RMSE & 95\% Coverage\\
    \midrule
    \multirow{ 3}{*}{$MER^{(0,0)}$}
    &.2& 0.008  & 0.543 & 0.764 & 0.908 & 0.003 & 0.276 & 0.376 & 0.943 \\
    &.5& 0.006  & 0.233 & 0.327 & 0.924 & 0.000 & 0.119 & 0.161 & 0.949 \\
    &.8& 0.004  & 0.299 & 0.419 & 0.901 & -0.003 & 0.152 & 0.205 & 0.950 \\
    \multirow{ 3}{*}{$MER^{(1,0)}$}
    &.2& 0.005  & 0.283 & 0.402 & 0.920 & -0.001 & 0.145 & 0.201 & 0.947 \\
    &.5& -0.002 & 0.366 & 0.516 & 0.912 & 0.004  & 0.183 & 0.255 & 0.937 \\
    &.8& -0.008 & 0.711 & 0.997 & 0.915 & 0.009  & 0.357 & 0.496 & 0.933 \\
    \multirow{ 3}{*}{$MER^{(0,1)}$}
    &.2& 0.022  & 0.417 & 0.574 & 0.937 & -0.006 & 0.211 & 0.285 & 0.951 \\
    &.5& 0.009  & 0.192 & 0.263 & 0.945 & -0.001 & 0.097 & 0.131 & 0.944 \\
    &.8& -0.004 & 0.211 & 0.291 & 0.945 & 0.004  & 0.107 & 0.146 & 0.938 \\
    \multirow{ 3}{*}{$MER^{(1,1)}$}
    &.2& -0.005 & 0.211 & 0.298 & 0.917 & 0.002  & 0.106 & 0.144 & 0.957 \\
    &.5& -0.009 & 0.239 & 0.333 & 0.927 & 0.003  & 0.122 & 0.167 & 0.941 \\
    &.8& -0.012 & 0.483 & 0.670 & 0.930 & 0.003  & 0.246 & 0.338 & 0.944 \\
\bottomrule
\end{tabular}
\par\smallskip
\vspace{0.5pt}
\parbox{14cm}{\footnotesize Note: We consider the bandwidth and Parzen kernel HAC considered in \cite{Kojevnikov3:21} with constant term and $\epsilon$ being 0.5 and 0.05.}
\end{table}

\begin{table}
\centering
\small
\caption{1,000 Monte Carlo Simulations for $\bm{\beta}$ on Ring Network ($c=0.6$)}
\begin{tabular}{lcccccccc}
    \toprule
    & \multicolumn{4}{c}{$n=250$} & \multicolumn{4}{c}{$n=1000$}  \\
    \cmidrule(lr){2-5}  \cmidrule(lr){6-9}
    & Bias & SE & RMSE & 95\% Coverage & Bias & SE & RMSE & 95\% Coverage\\
    \midrule
     $\beta_{D0}$         & -0.006 & 0.340 & 0.468 & 0.934  & -0.004 & 0.169 & 0.230 & 0.944 \\
     $\beta_{D1}$         & 0.044  & 0.269 & 0.367 & 0.956  & 0.014  & 0.133 & 0.179 & 0.955 \\
     $\lambda$            & -0.010 & 0.728 & 1.014 & 0.924  & -0.004 & 0.365 & 0.497 & 0.936 \\
     $\beta_{Y0}^{(0,0)}$  & -0.028 & 0.766 & 1.088 & 0.908  & -0.009 & 0.388 & 0.530 & 0.942 \\
     $\beta_{Y1}^{(0,0)}$  & 0.002  & 0.133 & 0.186 & 0.906  & 0.001  & 0.067 & 0.092 & 0.934 \\
     $\beta_{p1}^{(0,0)}$  & 0.046  & 1.229 & 1.747 & 0.899  & 0.008  & 0.624 & 0.851 & 0.942 \\
     $\beta_{Y0}^{(1,0)}$  & 0.014  & 0.437 & 0.620 & 0.917  & -0.002 & 0.223 & 0.304 & 0.937 \\
     $\beta_{Y1}^{(1,0)}$  & -0.005 & 0.175 & 0.249 & 0.914  & -0.004 & 0.089 & 0.122 & 0.946 \\
     $\beta_{p1}^{(1,0)}$  & -0.041 & 1.323 & 1.867 & 0.913  & 0.011  & 0.677 & 0.908 & 0.950 \\
     $\beta_{Y0}^{(0,1)}$  & 0.027  & 0.577 & 0.787 & 0.937  & -0.002 & 0.293 & 0.401 & 0.940 \\
     $\beta_{Y1}^{(0,1)}$  & 0.001  & 0.106 & 0.147 & 0.934  & 0.002  & 0.053 & 0.072 & 0.942 \\
     $\beta_{p1}^{(0,1)}$  & -0.051 & 0.893 & 1.223 & 0.938  & 0.003  & 0.454 & 0.625 & 0.937 \\
     $\beta_{Y0}^{(1,1)}$  & -0.002 & 0.330 & 0.461 & 0.931  & -0.002 & 0.168 & 0.231 & 0.951 \\
     $\beta_{Y1}^{(1,1)}$  & 0.001  & 0.121 & 0.167 & 0.924  & 0.002  & 0.061 & 0.084 & 0.943 \\
     $\beta_{p1}^{(1,1)}$  & 0.001  & 0.942 & 1.318 & 0.932  & 0.007  & 0.480 & 0.664 & 0.943 \\
    \bottomrule
\end{tabular}
\par \smallskip
\parbox{14cm}{\footnotesize Note: We consider the bandwidth and Parzen kernel HAC considered in \cite{Kojevnikov3:21} with constant term and $\epsilon$ in $b_n$ being 0.6 and 0.05.}
\end{table}

\begin{table}
\small
\caption{\centering 1,000 Monte Carlo Simulations for the marginal exposure response ($\overline{\psi}$) on Ring Network ($c=0.6$)}
\begin{tabular}{cccccccccc}
    \toprule
    && \multicolumn{4}{c}{$n=250$} & \multicolumn{4}{c}{$n=1000$}  \\
    \cmidrule[0.05em](l){3-6}  \cmidrule[0.05em](l){7-10}
    & $p$ & Bias & SE & RMSE & 95\% Coverage & Bias & SE & RMSE & 95\% Coverage\\
    \midrule
    \multirow{ 3}{*}{$MER^{(0,0)}$}
    &.2& -0.017 & 0.539 & 0.766 & 0.903 & -0.006 & 0.275 & 0.376 & 0.943 \\
    &.5& -0.003 & 0.231 & 0.328 & 0.922 & -0.004 & 0.119 & 0.163 & 0.946 \\
    &.8& 0.011  & 0.300 & 0.422 & 0.905 & -0.002 & 0.150 & 0.206 & 0.941 \\
    \multirow{ 3}{*}{$MER^{(1,0)}$}
    &.2& 0.000  & 0.286 & 0.400 & 0.916 & -0.004 & 0.145 & 0.201 & 0.935 \\
    &.5& -0.012 & 0.358 & 0.500 & 0.926 & -0.001 & 0.185 & 0.251 & 0.960 \\
    &.8& -0.024 & 0.700 & 0.982 & 0.931 & 0.003  & 0.361 & 0.483 & 0.950 \\
    \multirow{ 3}{*}{$MER^{(0,1)}$}
    &.2& 0.018  & 0.415 & 0.565 & 0.945 & 0.001  & 0.211 & 0.289 & 0.945 \\
    &.5& 0.003  & 0.191 & 0.259 & 0.944 & 0.002  & 0.097 & 0.132 & 0.949 \\
    &.8& -0.012 & 0.212 & 0.292 & 0.930 & 0.002  & 0.107 & 0.147 & 0.936 \\
    \multirow{ 3}{*}{$MER^{(1,1)}$}
    &.2& -0.001 & 0.209 & 0.288 & 0.934 & 0.002  & 0.106 & 0.145 & 0.947 \\
    &.5& 0.000  & 0.239 & 0.331 & 0.932 & 0.004  & 0.121 & 0.168 & 0.931 \\
    &.8& 0.000  & 0.480 & 0.670 & 0.934 & 0.006  & 0.244 & 0.339 & 0.942 \\
\bottomrule
\end{tabular}
\par\smallskip
\vspace{0.5pt}
\parbox{14cm}{\footnotesize Note: We consider the bandwidth and Parzen kernel HAC considered in \cite{Kojevnikov3:21} with constant term and $\epsilon$ being 0.6 and 0.05.}
\end{table}

\begin{table}
\centering
\small
\caption{1,000 Monte Carlo Simulations for $\bm{\beta}$ on Ring Network ($c=0.7$)}
\begin{tabular}{lcccccccc}
    \toprule
    & \multicolumn{4}{c}{$n=250$} & \multicolumn{4}{c}{$n=1000$}  \\
    \cmidrule(lr){2-5}  \cmidrule(lr){6-9}
    & Bias & SE & RMSE & 95\% Coverage & Bias & SE & RMSE & 95\% Coverage\\
    \midrule
     $\beta_{D0}$         & -0.008 & 0.339 & 0.466 & 0.928  & 0.001  & 0.169 & 0.230 & 0.940 \\
     $\beta_{D1}$         & 0.043  & 0.271 & 0.376 & 0.940  & 0.009  & 0.133 & 0.179 & 0.961 \\
     $\lambda$            & -0.010 & 0.728 & 0.999 & 0.922  & -0.002 & 0.363 & 0.497 & 0.945 \\
     $\beta_{Y0}^{(0,0)}$  & -0.022 & 0.769 & 1.089 & 0.899  & 0.003  & 0.388 & 0.542 & 0.928 \\
     $\beta_{Y1}^{(0,0)}$  & 0.006  & 0.132 & 0.188 & 0.899  & 0.000  & 0.067 & 0.091 & 0.934 \\
     $\beta_{p1}^{(0,0)}$  & 0.052  & 1.230 & 1.758 & 0.898  & -0.005 & 0.624 & 0.866 & 0.930 \\
     $\beta_{Y0}^{(1,0)}$  & -0.036 & 0.436 & 0.620 & 0.897  & -0.015 & 0.225 & 0.308 & 0.942 \\
     $\beta_{Y1}^{(1,0)}$  & 0.005  & 0.174 & 0.250 & 0.898  & 0.001  & 0.090 & 0.119 & 0.971 \\
     $\beta_{p1}^{(1,0)}$  & 0.101  & 1.326 & 1.879 & 0.904  & 0.049  & 0.678 & 0.921 & 0.940 \\
     $\beta_{Y0}^{(0,1)}$  & 0.020  & 0.583 & 0.811 & 0.917  & 0.005  & 0.291 & 0.396 & 0.944 \\
     $\beta_{Y1}^{(0,1)}$  & 0.000  & 0.105 & 0.145 & 0.929  & -0.002 & 0.053 & 0.071 & 0.954 \\
     $\beta_{p1}^{(0,1)}$  & -0.036 & 0.903 & 1.261 & 0.919  & -0.008 & 0.452 & 0.616 & 0.946 \\
     $\beta_{Y0}^{(1,1)}$  & 0.001  & 0.332 & 0.465 & 0.911  & 0.006  & 0.168 & 0.226 & 0.957 \\
     $\beta_{Y1}^{(1,1)}$  & -0.001 & 0.121 & 0.168 & 0.931  & 0.002  & 0.062 & 0.084 & 0.931 \\
     $\beta_{p1}^{(1,1)}$  & 0.006  & 0.949 & 1.329 & 0.915  & -0.028 & 0.481 & 0.654 & 0.954 \\
    \bottomrule
\end{tabular}
\par \smallskip
\parbox{14cm}{\footnotesize Note: We consider the bandwidth and Parzen kernel HAC considered in \cite{Kojevnikov3:21} with constant term and $\epsilon$ in $b_n$ being 0.7 and 0.05.}
\end{table}

\begin{table}
\small
\caption{\centering 1,000 Monte Carlo Simulations for the marginal exposure response ($\overline{\psi}$) on Ring Network ($c=0.7$)}
\begin{tabular}{cccccccccc}
    \toprule
    && \multicolumn{4}{c}{$n=250$} & \multicolumn{4}{c}{$n=1000$}  \\
    \cmidrule[0.05em](l){3-6}  \cmidrule[0.05em](l){7-10}
    & $p$ & Bias & SE & RMSE & 95\% Coverage & Bias & SE & RMSE & 95\% Coverage\\
    \midrule
    \multirow{ 3}{*}{$MER^{(0,0)}$}
    &.2& -0.005 & 0.543 & 0.765 & 0.908 & 0.003  & 0.274 & 0.382 & 0.926 \\
    &.5& 0.010  & 0.234 & 0.327 & 0.923 & 0.001  & 0.118 & 0.164 & 0.941 \\
    &.8& 0.026  & 0.297 & 0.429 & 0.890 & 0.000  & 0.151 & 0.208 & 0.942 \\
    \multirow{ 3}{*}{$MER^{(1,0)}$}
    &.2& -0.011 & 0.284 & 0.406 & 0.902 & -0.004 & 0.145 & 0.194 & 0.949 \\
    &.5& 0.019  & 0.362 & 0.513 & 0.902 & 0.011  & 0.185 & 0.250 & 0.953 \\
    &.8& 0.049  & 0.706 & 0.999 & 0.916 & 0.025  & 0.361 & 0.489 & 0.939 \\
    \multirow{ 3}{*}{$MER^{(0,1)}$}
    &.2& 0.013  & 0.419 & 0.586 & 0.926 & 0.002  & 0.210 & 0.286 & 0.947 \\
    &.5& 0.002  & 0.192 & 0.265 & 0.935 & 0.000  & 0.097 & 0.131 & 0.945 \\
    &.8& -0.009 & 0.211 & 0.290 & 0.924 & -0.003 & 0.106 & 0.145 & 0.943 \\
    \multirow{ 3}{*}{$MER^{(1,1)}$}
    &.2& 0.001  & 0.209 & 0.297 & 0.914 & 0.003  & 0.106 & 0.144 & 0.962 \\
    &.5& 0.003  & 0.241 & 0.336 & 0.935 & -0.006 & 0.122 & 0.166 & 0.949 \\
    &.8& 0.004  & 0.485 & 0.675 & 0.924 & -0.014 & 0.245 & 0.333 & 0.949 \\
\bottomrule
\end{tabular}
\par\smallskip
\vspace{0.5pt}
\parbox{14cm}{\footnotesize Note: We consider the bandwidth and Parzen kernel HAC considered in \cite{Kojevnikov3:21} with constant term and $\epsilon$ being 0.7 and 0.05.}
\end{table}

\end{appendices}

\newpage
\bibliography{bib}